\newtheorem{theorem}{Theorem}
\newtheorem{lemma}{Lemma}
\newtheorem{remark}{Remark}
\newtheorem{col}{Corollary}
\begin{document}
\title{A Generalized Expression for the Gradient of Mutual Information with the Application in Multiple Access Channels} 
\author{Mahboobeh Sedighizad, and Babak Seyfe,~\IEEEmembership{Senior Member,~IEEE}
\thanks{The authors are with the Information Theoretic Learning Systems Laboratory (ITLSL), Department of Electrical Engineering, Shahed University, Tehran, Iran (e-mail: \{m.sedighizad, Seyfe\}@shahed.ac.ir).}}

\maketitle\pagenumbering{arabic}\pagestyle{empty}\pagestyle{plain}

%%%%%%%%%%%%%%%%%%%%%%%%%%%%%%%%%%%%%%%%%%%%%%%%%%%% ABSTRACT %%%%%%%%%%%%%%%%%%%%%%%%%%%%%%%%%%%%%%%
\begin{abstract}
Taking a functional approach, we derive a general expression for the gradient of the Mutual Information (MI) with respect to the system parameters in the stochastic systems. This expression covers the cases in which the system input depends on the system parameters. As an application, we consider the K-user Multiple Access Channels (MAC) with feedback and utilize the obtained results to explore the behavior of these systems in terms of the MI. Specializing the results to the additive Gaussian noise MAC, we extend the MI and Minimum Mean Square Error (MMSE) relationship, i.e., I-MMSE to the K-user Gaussian MAC with feedback. In this derivation, we show that the gradient of MI can be decomposed into three distinct parts, where the first part is the MMSE term originated from noise, and the second and third parts reflect the effects of the interference and feedback, respectively. Then, considering the capacity achieving Fourier-Modulated Estimate Correction (F-MEC) strategy of Kramer, we show how feedback compensates the destructive effects of the users' interference in the K-user symmetric Gaussian MAC.
\end{abstract} 

\begin{IEEEkeywords}
Gradient of the MI,~feedback,~multiple access channel,~functional approach,~score function.
\end{IEEEkeywords}

%%%%%%%%%%%%%%%%%%%%%%%%%%%%%%%%%%%%%%%%%%%%%%%%%%% INTRODUCTION %%%%%%%%%%%%%%%%%%%%%%%%%%%%%%%%%%%%
\section{Introduction}\label{Introduction}

The fundamental relationships between information theory and estimation theory attract the attentions of many researchers in recent
years~\cite{I-MMSE,vector,Palomar,guo2013interplay,Levy,venkat2015relations,Ghanem,EI-MMSE}. One of the major successful results in this area is the I-MMSE formula, which gives a new insight into the above theories~\cite{optimum,generalized,weighted,mmse2009,mimo,Transmitter}. 

The relation between the information and estimation theories was considered for the first time by Stam, where the derivative of the differential entropy was related to the Fisher information~\cite{Stam}. A connection between causal estimation error and input-output MI of a continuous-time additive white Gaussian noise channel, established by Duncan in~\cite{Duncan}. Kadota $et$ $al.$ in~\cite{Kadota} generalized the results reported by Duncan to the channels with feedback. 

Guo $et$ $al.$ in~\cite{I-MMSE} presented an explicit identity for the relationship between MMSE and the derivative of the input-output MI in the additive Gaussian noise channels. In~\cite{non-G} the relation between estimation theory and information theory was extended to the additive non-Gaussian noise channels. In~\cite{vector}, the I-MMSE relationship was generalized to the linear vector Gaussian channel. Similar connections between the derivative of MI with respect to the channel parameters and the error estimation were found for the continuous-time Poisson channel in~\cite{Poisson1} and~\cite{Poisson2}. In~\cite{Palomar}, Palomar and Verdu generalized the notion of the derivative of MI with respect to the system parameters to the arbitrary stochastic systems, where they assumed that the system input is independent of the system parameters. Their general formula is stated in a probabilistic setting in terms of the conditional marginal input distributions given the
outputs. In~\cite{pointwise}, a pointwise approach to generalize the above works is considered. In~\cite{Levy}, the relationship between the MI and the estimation error in scalar Levy channels as a more general class of observations model is expressed. The validity of I-MMSE relation for a fixed finite lookahead in a continuous-time additive white Gaussian noise channel is investigated in~\cite{venkat2013}. In~\cite{EI-MMSE} an extension of the I-MMSE relation has been presented for the discrete-time Gaussian channel with feedback. An extension of the I-MMSE formula to the additive Gaussian noise MAC without feedback is given in~\cite{Ghanem}. 

Most of the works which aim to give a general expression for the gradient of information measures, such as~\cite{Palomar} and~\cite{Levy}, are based on the statistical representation of the input-output relation of the system by use of the probabilistic descriptors. However, returning to the idea first described by Wiener, the input-output relation of a stochastic system can be described in an alternative manner by using a functional representation~\cite{Wiener,Network,li2018strong}. In~\cite{sedighizad2019gradient} and~\cite{sedighizad2019gradients} some of the achievements of this representation have been reported. 

Here, taking the functional approach we give a general expression for the gradient of MI with respect to the system parameters in a general system model with the application in the MAC. The main contributions of this paper can be summarized as follows,
\begin{enumerate}

\item We take a functional approach to represent the relationship between the input and output of a stochastic system in general. We suppose that the system model has been defined through the some known functions. We impose no constraints on the noise or the input distributions. Moreover, the system input is allowed to be a function of the system output and hence the system parameters, which may occur for example in the presence of feedback.  

\item We introduce the notion of MI variation in Theorem~\ref{Theorem General} and Corollary~\ref{Theorem 299}, which hold for any stochastic systems with continuous system input and system output. 

\item For the aforementioned system model, we obtain a general expression for the gradient of MI with respect to the system parameters in Theorem~\ref{Theorem 301}. 

 \item Particularizing the obtained results for MAC system with feedback as an important and practical system, the capability of the gradient of MI to interpret the behavior of this system is shown in Theorem~\ref{Theorem MACmain}. Corollaries~\ref{Corollary MAC2} and~\ref{Corollary MAC2_Prime} are devoted to the Gaussian case, where the extensions of I-MMSE formula are derived for the additive Gaussian noise MAC with feedback. As special cases of the obtained results, Corollaries~\ref{EX-IMMMSE},~\ref{Corollary MAC-Without3} and~\ref{Corollary Single-Without-Without} cover the results of~\cite{EI-MMSE},~\cite{Ghanem}, and~\cite{I-MMSE}, respectively. 

\item  In Section~\ref{K-user Gaussian MAC Kramer}, considering the capacity achieving Kramer's F-MEC code and using the proposed approach in this paper, we show how feedback compensates the negative effect of the user's interference on the capacity of a K-user symmetric Gaussian MAC.

\end{enumerate}

The reminder of this paper is organized as follows. Section~\ref{Definitions}, is devoted to the introduction of notation, definitions, assumptions, and system model. In section~\ref{Gradient of MI}, we first introduce the notion of MI variation, then we give a general expression for the gradient of MI with respect to the system parameters in a general system model. In section~\ref{Application}, we utilize the obtained results in a general MAC with and without feedback, where when the results specialized for the Gaussian case several extensions of I-MMSE formula are given. The single-user versions of the results are derived as well. In Section~\ref{K-user Gaussian MAC Kramer}, employing the F-MEC coding scheme we analyze the role of feedback in a K-user Gaussian MAC. Section~\ref{Conclusions} concludes the paper.

%%%%%%%%%%%%%%%%%%%%    Definitions  %%%%%%%%%%%%%%%%%%%
\section{Definitions, Assumptions and System Model}\label{Definitions}

In this section, we introduce notation, definitions, assumptions and the system model, which will be used in this paper.
\subsection{Notation}\label{Notation}
Scalar random variables are denoted by upper case letters, e.g., $X$, where a lower case letter $x$ is
used to denote a particular value of $X$; boldface lower case letters denote column random vectors, e.g., ${\textbf{x}}$, where a boldface italic lower case letter $\textbf{\textit {x}}$ is used to denote a particular value of $\textbf{x}$. Boldface upper case letters denote random matrixes, e.g., $\bf A$, where boldface italic upper case letter $\textit{\textbf{A}}$ is used to denote a particular value of $\bf A$. Also, $\left(  \cdot  \right)_{i,j}$ and $\left(  \cdot  \right)_l$ denote the ($i$th,\,\,$j$th) element of a matrix and $l$th element of a vector, respectively. Character ${\bf{I}}$ denotes the identity matrix, the superscript $(\cdot)^T$ denotes the transpose, and $\left\|  \cdot  \right\|$ denotes the norm of a vector.

Gradient of the scalar function $f$ taken with respect to vector ${\boldsymbol{\gamma }} = \left( {\gamma _1 ,\gamma _2 ,...,\gamma _J } \right)^T$
is denoted by $\nabla _{{\boldsymbol{\gamma }}} f$, which is defined as,
 %%%%%%%%%%%%%%%%%%%%%%%%%%%%%%%%%%%%%%%%%%%%%%%%% Gradient %%%%%%%%%%%%%%%%%%%%%%%%%%%
\begin{eqnarray}\label{eqnGradient}
\nabla _{\boldsymbol{\gamma }} f = \left( {\frac{{\partial f}}{{\partial \gamma _1 }},...,\frac{{\partial f}}{{\partial \gamma _J }}} \right)^T.
\end{eqnarray}
Derivative of
 the vector function ${\bf{f}} = \left( {f_1 ,f_2 ,...,f_n } \right)^T$ with respect to the vector
 ${\boldsymbol{\gamma }} $ is defined as,
 %%%%%%%%%%%%%%%%%%%%%%%%%%%%%%%%%%%%%%%%%%%%%%%%% Derivative of Vector %%%%%%%%%%%%%%%%%%%%%%%%%%%
\begin{eqnarray}\label{eqnDeriva}
\frac{{\partial {\bf{f}}}}{{\partial {\boldsymbol{\gamma }}}} = \left[ \begin{array}{l}
 \begin{array}{*{20}c}
   {\frac{{\partial f_1 }}{{\partial \gamma _1 }}} & {\begin{array}{*{20}c}
    \cdots  & {\frac{{\partial f_1 }}{{\partial \gamma _J }}}  \\
\end{array}}  \\
\end{array} \\
 \;\;\begin{array}{*{20}c}
   { \vdots \;\;\;\;} &  \ddots   \\
\end{array}\;\;\; \vdots  \\
 \begin{array}{*{20}c}
   {\frac{{\partial f_n }}{{\partial \gamma _1 }}} & {\begin{array}{*{20}c}
    \cdots  & {\frac{{\partial f_n }}{{\partial \gamma _J }}}  \\
\end{array}}  \\
\end{array} \\
 \end{array} \right],
\end{eqnarray}
and derivative of vector function $\bf f$ with respect to the scalar $\gamma_j$ is,
%%%%%%%%%%%%%%%%%%%%%%%%%%%%%%%%%%%%%%%%%%%%%%%%% Scalar derivative %%%%%%%%%%%%%%%%%%%%%%%%%%%
\begin{eqnarray}\label{eqnScalar-derivative}
\frac{{\partial {\textit{\bf{f}}}}}{{\partial \gamma _j }} = \left( {\frac{{\partial f_1 }}{{\partial \gamma _j }},...,\frac{{\partial f_n }}{{\partial \gamma _j }}} \right)^T.
\end{eqnarray}

For deterministic vector ${\boldsymbol{\varepsilon }}$ and scalar function $r\left(  \cdot  \right)$ we say that $r\left( {\boldsymbol{\varepsilon }} \right) \buildrel \Delta \over = o\left( {\boldsymbol{\varepsilon }} \right)$, if $\mathop {\lim }\limits_{{\boldsymbol{\varepsilon }} \to {\bf{0}}} \frac{{r\left( {\boldsymbol{\varepsilon }} \right)}}{{\left\| {\boldsymbol{\varepsilon }} \right\|}} = 0$ and the Gaussian capacity function is denoted by $\texttt{C}\left( \chi \right) = \left( {1 \mathord{\left/
 {\vphantom {1 2}} \right.
 \kern-\nulldelimiterspace} 2}\right)\ln \left( {1 + \chi} \right)$ for $\chi \ge 0$.
  \subsection{Definitions}\label{Definition 2}
 The vector function ${\boldsymbol {\phi }}_{\bf{q}} \left( \textbf{\textit {q}} \right)$,
is the Joint Score Function (JSF) of the random vector ${\textbf{q}}$ which is defined as the
log-derivative of its probability density function~\cite{16}, i.e.,
%%%%%%%%%%%%%%%%%%%%%%%%%%%%%%%%%%%%%%%%%%%%%%%%% EQUATION 4000 %%%%%%%%%%%%%%%%%%%%%%%%%%%
\begin{eqnarray}\label{eqn4000}
\boldsymbol \phi _{\bf{q}} \left( \textbf{\textit {q}} \right) = \nabla _{\textbf{\textit {q}}} \ln p_{\bf{q}} \left( \textbf{\textit {q}} \right)
\end{eqnarray}
where $p_{\bf{q}} \left( \textbf{\textit {q}} \right)$ is the joint pdf of the random vector ${\textbf{q}}$.
We define $\boldsymbol \phi _{{\textbf x, \textbf y}}^{\textbf{x}} \left( \textit{\textbf x}, \textit{\textbf y} \right)$ and $\boldsymbol \phi _{{\textbf x, \textbf y}}^{\textbf{y}} \left( \textit{\textbf x}, \textit{\textbf y} \right)$
as,
%%%%%%%%%%%%%%%%%%%%%%%%%%%%%%%%%%%%%%%%%%%%%%%%% EQUATION 300 %%%%%%%%%%%%%%%%%%%%%%%%%%%
\begin{eqnarray}\label{eqn300}
\boldsymbol \phi _{{\textbf x, \textbf y}}^{\textbf{x}} \left( \textit{\textbf x}, \textit{\textbf y} \right) \buildrel \Delta \over =\nabla _{\textbf{\textit {x}}} \ln p_{{\textbf{x}},{\textbf{y}}} \left( \textbf{\textit {x}},\textbf{\textit {y}} \right)
\end{eqnarray}
and
%%%%%%%%%%%%%%%%%%%%%%%%%%%%%%%%%%%%%%%%%%%%%%%%% EQUATION 15 %%%%%%%%%%%%%%%%%%%%%%%%%%%
\begin{eqnarray}\label{eqn15}
\boldsymbol \phi _{{\textbf x, \textbf y}}^{\textbf{y}} \left( \textit{\textbf x}, \textit{\textbf y} \right)\buildrel \Delta \over =\nabla _{\textbf{\textit {y}}} \ln p_{{\textbf{x}},{\textbf{y}}} \left( \textbf{\textit {x}},\textbf{\textit {y}} \right)
\end{eqnarray}
where $p_{{\textbf{x}},{\textbf{y}}} \left( \textbf{\textit {x}},\textbf{\textit {y}} \right)$ is the
joint pdf of the random vectors ${\textbf{x}}$ and ${\textbf{y}}$.
Moreover, the vector function $\boldsymbol \phi _{{\textbf x, \textbf y}}\left( \textit{\textbf x}, \textit{\textbf y} \right) $, is defined as,
%%%%%%%%%%%%%%%%%%%%%%%%%%%%%%%%%%%%%%%%%%%%%%%%% EQUATION 10 %%%%%%%%%%%%%%%%%%%%%%%%%%%
\begin{eqnarray}\label{eqn10}
\boldsymbol \phi _{{\textbf{x,y}}} \left( \textbf{\textit{x}},\textbf{\textit{y}} \right) \buildrel \Delta \over = \left( {\boldsymbol \phi _{{\textbf{x,y}}}^{{\textbf{x}}\;\;T} \left( \textbf{\textit{x}},\textbf{\textit{y}} \right),\boldsymbol \phi _{{\textbf{x,y}}}^{{\textbf{y}}\;\;T} \left(\textbf{\textit{x}},\textbf{\textit{y}} \right)} \right)^T.
\end{eqnarray}
\subsection{Assumptions}\label{Assumptions}
Throughout the paper we consider the following assumptions. The random variables are continuous with the following
 definition. Random variable $X$ with cumulative distribution function $F\left( x \right) = \Pr \left( {X \le x} \right)$ is said to be continuous if $F\left( x \right)$ is
 continuous~\cite{16}. As~\cite{Pham}, we assume that the score functions are bounded in the sense that,
%%%%%%%%%%%%%%%%%%%%%%%%%%%%%%%%%%%%%%%%%%%%%%%%% EQUATION Score_Bound %%%%%%%%%%%%%%%%%%%%%%%%%%%
\begin{eqnarray}\label{eqnScore_Bound}
\left\| {{\boldsymbol{\phi }}_{\textbf{q}} \left( {\textit{\textbf{q}}} \right)} \right\| \le C\left( {1 + \left\| {\textit{\textbf{q}}} \right\|^{\alpha  - 1} } \right),\,\,\,\ {\rm{for\, all}}\, \textit{\textbf q}
\end{eqnarray}
for some constant $C$ and $\alpha\geq1$.
\subsection{System Model}\label{System Model}
Taking a functional approach, we describe the stochastic systems with the following general system model,
%%%%%%%%%%%%%%%%%%%%%%%%%%%%%%%%%%%%%%%%%%%%%%%%% EQUATION 11 %%%%%%%%%%%%%%%%%%%%%%%%%%%
\begin{eqnarray}\label{eqn11}
{\textbf{y}} = {\bf{f}}\left( {{\textbf{x}},{\textbf{w}},\boldsymbol\gamma } \right);\;\; \boldsymbol{\gamma}  \in \Theta
\end{eqnarray}
and
%%%%%%%%%%%%%%%%%%%%%%%%%%%%%%%%%%%%%%%%%%%%%%%%% EQUATION 301 %%%%%%%%%%%%%%%%%%%%%%%%%%%
\begin{eqnarray}\label{eqn301}
{\textbf{x}} = {\bf g}(\textbf {u},\textbf {y})
\end{eqnarray}
where the random vectors $\textbf x$ and ${\textbf{y}}$ are the system input and system output, respectively. ${\bf f}$ and ${\bf g}$ are continuously differentiable known functions with bounded partial derivatives. Arbitrary but known-distribution random vector ${\textbf{w}}$ stands for any unwanted processes which can affect the system. $\boldsymbol \gamma$ is an arbitrary finite-dimensional parameter from the deterministic system parameters set $\Theta$.
As it can be seen from~\eqref{eqn301}, in this general system model the system input is allowed to be dependent on the system output and hence the system parameters, which may
happen for example in the systems with feedback. Random vector $\textbf {u}$ can stand for the sequence of the source messages assumed to
 be independent of $\textbf w$ and the set of system parameters.
%%%%%%%%%%%%%%%%%%%%    Gradient of MI  %%%%%%%%%%%%%%%%%%%
\section{Gradient of Mutual Information }\label{Gradient of MI}

As it is shown in~\cite{optimum,generalized,weighted,mmse2009,mimo,Transmitter}, the gradient of MI plays a central role to give a new insight into the some fundamental notions in information theory and also it can be considered as an important measure in optimization problems. However, this pivotal role does not confined to the aforementioned items as we see in the next section.

In this section, we first introduce the notion of variation of MI between two arbitrary random vectors $\textbf x$ and $\textbf y$. Then, using the introduced notion of the MI variation, the gradient of MI with respect to
the system parameters in the system model defined by~\eqref{eqn11} and~\eqref{eqn301} will be given.
%%%%%%%%%%%%%%%%%%%%    Variation of MI   %%%%%%%%%%%%%%%%%%%
\subsection{Variation of Mutual Information  }\label{Variation of Mutual Information}
Here, we introduce a general expression for the variation of MI. Let $\Delta I\left( {{\textbf{x}};{\textbf{y}}} \right) = I\left( {{\textbf{x}} + {\boldsymbol{\delta }}_{\textbf{x}} ;{\textbf{y}} + {\boldsymbol{\delta }}_{\textbf{y}} } \right) - I\left( {{\textbf{x}};{\textbf{y}}} \right)$ be the variation of mutual
information between random vectors $\textbf x$ and $\textbf y$ caused by random variations
%%%%%%%%%%%%%%%%%%%%%%%%%%%%%%%%%%%%%%%%%%%%%%%%% EQUATION Delta_x %%%%%%%%%%%%%%%%%%%%%%%%%%%
\begin{eqnarray}\label{eqnDelta_x}
{\boldsymbol{\delta }}_{\textbf{x}}  = {\bf{\tilde X \boldsymbol {\varepsilon} }} + o\left( {\boldsymbol{\varepsilon }} \right)
\end{eqnarray}
and
%%%%%%%%%%%%%%%%%%%%%%%%%%%%%%%%%%%%%%%%%%%%%%%%% EQUATION Delta_y %%%%%%%%%%%%%%%%%%%%%%%%%%%
\begin{eqnarray}\label{eqnDelta_y}
{\boldsymbol{\delta }}_{\textbf{y}}  = {\boldsymbol {\tilde Y\varepsilon }} + o\left( {\boldsymbol{\varepsilon }} \right)
\end{eqnarray}
where ${\boldsymbol{\tilde X}}$ and ${\boldsymbol{\tilde Y}}$ are two random matrixes, and $\boldsymbol{\varepsilon }$ is a deterministic vector for which the products ${\boldsymbol {\tilde X\varepsilon }}$ and ${\boldsymbol {\tilde Y\varepsilon }}$ make sense and have the same dimensions as $\textbf{x}$ and $\textbf{y}$, respectively. Then, the following Theorem holds.
 \begin{theorem}\label{Theorem General}
 Variation of MI between $\textbf x$
and $\textbf y$ resulting from ${\boldsymbol{\delta }}_{\textbf{x}}$ and ${\boldsymbol{\delta }}_{\textbf{y}}$
 defined by~\eqref{eqnDelta_x} and~\eqref{eqnDelta_y}, as ${\boldsymbol{\varepsilon }} \to {\bf{0}}$ is,
 %%%%%%%%%%%%%%%%%%%%%%%%%%%%%%%%%%%%%%%%%%%%%%%%% EQUATION General %%%%%%%%%%%%%%%%%%%%%%%%%%%
\begin{align}\label{eqnGeneral}
\Delta I \left( {{\textbf{x}};{\textbf{y}}} \right)= {\mathop{\rm E}\nolimits} \left\{ {\left( {\boldsymbol \phi _{{\textbf{x},\textbf{y}}}^{\textbf{x}} \left( {{{\textbf{x},\textbf{y}}}} \right) - \boldsymbol \phi _{\textbf{x}} \left( {{\textbf{x}}} \right)} \right)^T {\bf{\tilde X\boldsymbol \varepsilon }} } \right\} + {\mathop{\rm E}\nolimits} \left\{ {\left( {\boldsymbol \phi _{{\textbf{x},\textbf{y}}}^{\textbf{y}} \left( {{{\textbf{x},\textbf{y}}}} \right) - \boldsymbol \phi _{\textbf{y}} \left( {{\textbf{y}}} \right)} \right)^T {\bf{\tilde Y\boldsymbol \varepsilon }} } \right\} + o\left( {\boldsymbol{\varepsilon }} \right).
  \end{align}
 \end{theorem}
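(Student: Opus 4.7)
The plan is to start from the decomposition $I(\mathbf{x};\mathbf{y})=E[\ln p_{\mathbf{x},\mathbf{y}}(\mathbf{x},\mathbf{y})]-E[\ln p_{\mathbf{x}}(\mathbf{x})]-E[\ln p_{\mathbf{y}}(\mathbf{y})]$, to write the analogous identity for the perturbed pair $(\tilde{\mathbf{x}},\tilde{\mathbf{y}}):=(\mathbf{x}+\boldsymbol{\delta}_{\mathbf{x}},\mathbf{y}+\boldsymbol{\delta}_{\mathbf{y}})$, and to subtract. The difference $\Delta I$ then splits into three pieces of the shape $E[\ln p_{\tilde{\mathbf{z}}}(\tilde{\mathbf{z}})]-E[\ln p_{\mathbf{z}}(\mathbf{z})]$, one for the joint and one for each marginal. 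To each I apply the elementary swap
\begin{equation*}
E[\ln p_{\tilde{\mathbf{z}}}(\tilde{\mathbf{z}})]=E[\ln p_{\mathbf{z}}(\tilde{\mathbf{z}})]+D(p_{\tilde{\mathbf{z}}}\|p_{\mathbf{z}}),
\end{equation*}
which trades the unknown perturbed density for the known unperturbed one at the cost of a single KL divergence. Collecting the three KL terms produces the bracket $D(p_{\tilde{\mathbf{x}},\tilde{\mathbf{y}}}\|p_{\mathbf{x},\mathbf{y}})-D(p_{\tilde{\mathbf{x}}}\|p_{\mathbf{x}})-D(p_{\tilde{\mathbf{y}}}\|p_{\mathbf{y}})$, which I plan to absorb into the $o(\boldsymbol{\varepsilon})$ remainder by appealing to the quadratic-in-parameter behaviour of KL near its minimum.

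What is left is a direct Taylor expansion against \emph{fixed} densities. For each of the three residual terms $E[\ln p_{\mathbf{z}}(\mathbf{z}+\boldsymbol{\delta}_{\mathbf{z}})]-E[\ln p_{\mathbf{z}}(\mathbf{z})]$ I would write
\begin{equation*}
\ln p_{\mathbf{z}}(\mathbf{z}+\boldsymbol{\delta}_{\mathbf{z}})-\ln p_{\mathbf{z}}(\mathbf{z})=\boldsymbol{\phi}_{\mathbf{z}}(\mathbf{z})^{T}\boldsymbol{\delta}_{\mathbf{z}}+o(\|\boldsymbol{\delta}_{\mathbf{z}}\|),
\end{equation*}
with $\nabla\ln p_{\mathbf{z}}=\boldsymbol{\phi}_{\mathbf{z}}$ read off directly from the score-function definition \eqref{eqn4000}. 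Substituting $\boldsymbol{\delta}_{\mathbf{z}}=\tilde{\mathbf{Z}}\boldsymbol{\varepsilon}+o(\boldsymbol{\varepsilon})$ and taking expectation—justified by using the polynomial score-norm bound \eqref{eqnScore_Bound} as an integrable majorant—yields $E[\boldsymbol{\phi}_{\mathbf{z}}(\mathbf{z})^{T}\tilde{\mathbf{Z}}\boldsymbol{\varepsilon}]+o(\boldsymbol{\varepsilon})$ for each piece. For the joint piece the block structure \eqref{eqn10} splits $\boldsymbol{\phi}_{\mathbf{x},\mathbf{y}}^{T}(\boldsymbol{\delta}_{\mathbf{x}}^{T},\boldsymbol{\delta}_{\mathbf{y}}^{T})^{T}$ into $\boldsymbol{\phi}_{\mathbf{x},\mathbf{y}}^{\mathbf{x}\,T}\boldsymbol{\delta}_{\mathbf{x}}+\boldsymbol{\phi}_{\mathbf{x},\mathbf{y}}^{\mathbf{y}\,T}\boldsymbol{\delta}_{\mathbf{y}}$; subtracting the two marginal contributions then collects the gradients into the differences $\boldsymbol{\phi}_{\mathbf{x},\mathbf{y}}^{\mathbf{x}}-\boldsymbol{\phi}_{\mathbf{x}}$ and $\boldsymbol{\phi}_{\mathbf{x},\mathbf{y}}^{\mathbf{y}}-\boldsymbol{\phi}_{\mathbf{y}}$, and regrouping the $\boldsymbol{\varepsilon}$-dependence through $\tilde{\mathbf{X}}$ and $\tilde{\mathbf{Y}}$ delivers the stated formula.

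I expect the main obstacle to be the first step: showing rigorously that the bracket of three KL divergences is $o(\|\boldsymbol{\varepsilon}\|)$. The difficulty is that $\tilde{\mathbf{X}}$ and $\tilde{\mathbf{Y}}$ are permitted to depend on $\mathbf{x}$ and $\mathbf{y}$, so $p_{\tilde{\mathbf{z}}}$ is not a convolution of $p_{\mathbf{z}}$ with an independent kernel and the usual Fisher-information estimate does not apply verbatim. My plan is to regard $\mathbf{z}\mapsto\mathbf{z}+\tilde{\mathbf{Z}}\boldsymbol{\varepsilon}+o(\boldsymbol{\varepsilon})$ as a near-identity transformation on the enlarged probability space carrying $(\mathbf{z},\tilde{\mathbf{Z}})$, to express $p_{\tilde{\mathbf{z}}}(\mathbf{a})-p_{\mathbf{z}}(\mathbf{a})$ via a change-of-variables formula conditional on $\tilde{\mathbf{Z}}$, and to Taylor-expand to second order in $\boldsymbol{\varepsilon}$; the growth condition \eqref{eqnScore_Bound} then provides the uniform majorant that forces each of the three KL divergences to be $O(\|\boldsymbol{\varepsilon}\|^{2})$, hence $o(\|\boldsymbol{\varepsilon}\|)$ as required.
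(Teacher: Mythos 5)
Your overall architecture coincides with the paper's own proof in Appendix~\ref{proof Theorem General}: decompose $\Delta I$ into three differences of the form $E[\ln p_{\tilde{\mathbf z}}(\tilde{\mathbf z})]-E[\ln p_{\mathbf z}(\mathbf z)]$ (joint and two marginals), trade each perturbed density for the unperturbed one at the cost of a KL divergence, argue the KL terms are negligible, and Taylor-expand $\ln p_{\mathbf z}(\mathbf z+\boldsymbol\delta_{\mathbf z})$ against the \emph{fixed} density to produce the score terms, using the polynomial bound \eqref{eqnScore_Bound} to supply the integrable majorant for dominated convergence. That part of your plan is sound and matches the paper essentially step for step, including the block splitting of the joint score via \eqref{eqn10}.

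The gap is exactly where you predicted it: the KL remainder. Two concrete problems with your plan for that step. First, a change of variables ``conditional on $\tilde{\mathbf Z}$'' is not available in the generality of Theorem~\ref{Theorem General}: $\tilde{\mathbf X}$ and $\tilde{\mathbf Y}$ are arbitrary random matrices and may be deterministic functions of $\mathbf x$ and $\mathbf y$ themselves (this is precisely the situation in Corollary~\ref{Theorem 299}, where they are the partial-derivative matrices of $\mathbf g$ and $\mathbf f$ evaluated along the trajectory), so the conditional law of $\mathbf z$ given $\tilde{\mathbf Z}$ can be degenerate and carries no density to push forward. Second, the target $D(p_{\tilde{\mathbf z}}\|p_{\mathbf z})=O(\|\boldsymbol\varepsilon\|^{2})$ is both stronger than needed and requires second-order regularity of the perturbed family (a Fisher-information-type expansion) that the assumption \eqref{eqnScore_Bound} alone does not provide. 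The paper sidesteps both issues with a much lighter argument: writing $-D(p_{\tilde{\mathbf z}}\|p_{\mathbf z})=E[\ln t]$ with $t=p_{\mathbf z}(\tilde{\mathbf z})/p_{\tilde{\mathbf z}}(\tilde{\mathbf z})$ and splitting $\ln t=(t-1)+(\ln t-t+1)$, the first-order term vanishes \emph{identically} because $E[t]=\int p_{\mathbf z}(s)\,ds=1$ (no structure on $\tilde{\mathbf Z}$ is used, cf.\ \eqref{eqn22226}--\eqref{eqn22227}), and the nonpositive remainder is $o(\|\boldsymbol\varepsilon\|)$ by the cited Lemma 4 of Pham. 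Replacing your change-of-variables step by this exact cancellation turns your outline into the paper's proof.
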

\begin{proof}
See Appendix~\ref{proof Theorem General}.
\end{proof}

 Now, we use the general expression of~\eqref{eqnGeneral} in the general system model defined by~\eqref{eqn11} and~\eqref{eqn301}, with $\left( {M \times 1} \right)$ vector $\textbf x$ and $\left( {N \times 1} \right)$ vector $\textbf y$ as the system input and system output, respectively. Without loss of generality suppose that we are interested in the variation of the MI caused by a small variation in a specific element of ${\Theta }$, as $\left( {J \times 1} \right)$ vector ${\boldsymbol {\gamma }} $. Let ${\boldsymbol{\hat \gamma }} = {\boldsymbol{\gamma }} + {\boldsymbol{\varepsilon }}_{\boldsymbol{\gamma }}$, where ${\boldsymbol{\varepsilon }}_{\boldsymbol{\gamma }}$ is a deterministic vector with the same dimension as ${\boldsymbol {\gamma }}$. Then, the next Corollary expresses the variation of the MI between $\textbf x$ and $\textbf y$ resulting from small variation in the system parameter ${\boldsymbol {\gamma }}$.
  \begin{col}\label{Theorem 299}
 Considering the system model~\eqref{eqn11} and~\eqref{eqn301}, variation of the MI between $\textbf x$ and $\textbf y$ resulting from ${\boldsymbol{\varepsilon }}_{\boldsymbol{\gamma }}$
 as ${\boldsymbol{\varepsilon }}_{\boldsymbol{\gamma }}\rightarrow \bf 0$, is
 %%%%%%%%%%%%%%%%%%%%%%%%%%%%%%%%%%%%%%%%%%%%%%%%% EQUATION 1 %%%%%%%%%%%%%%%%%%%%%%%%%%%
\begin{align}\label{eqn1}
\Delta I \left( {{\textbf{x}};{\textbf{y}}} \right)= {\mathop{\rm E}\nolimits} \left\{ {\left( {\boldsymbol \phi _{{\textbf x, \textbf y}}^{\textbf{x}} \left( {\textbf x}, {\textbf y} \right) - \boldsymbol \phi _{\textbf{x}} \left( {{\textbf{x}}} \right)} \right)^T \frac{{\partial {\bf{g}}}}{{\partial {\boldsymbol{\gamma }}}}{\boldsymbol{\varepsilon }}_{\boldsymbol{\gamma }} } \right\} + {\mathop{\rm E}\nolimits} \left\{ {\left( {\boldsymbol \phi _{{\textbf x, \textbf y}}^{\textbf{y}} \left( {\textbf x}, {\textbf y} \right) - \boldsymbol \phi _{\textbf{y}} \left( {{\textbf{y}}} \right)} \right)^T \frac{{\partial {\bf{f}}}}{{\partial {\boldsymbol{\gamma }}}}{\boldsymbol{\varepsilon }}_{\boldsymbol{\gamma }} } \right\} + o\left( {\boldsymbol{\varepsilon }}_{\boldsymbol{\gamma }} \right)
  \end{align}
 \end{col}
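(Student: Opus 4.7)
The plan is to apply Theorem~\ref{Theorem General} after identifying the correct matrices $\tilde{\textbf{X}}$ and $\tilde{\textbf{Y}}$ that capture how the input and output depend, to first order, on a perturbation $\boldsymbol{\varepsilon}_{\boldsymbol{\gamma}}$ of the system parameter. Because the role of the corollary is really to specialize~\eqref{eqnGeneral} to the functional system model~\eqref{eqn11}--\eqref{eqn301}, the core work is to show that a perturbation in $\boldsymbol{\gamma}$ induces perturbations $\boldsymbol{\delta}_{\textbf{x}}$ and $\boldsymbol{\delta}_{\textbf{y}}$ of exactly the form~\eqref{eqnDelta_x}--\eqref{eqnDelta_y}.

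First I would fix realizations of $\textbf{u}$ and $\textbf{w}$ and replace $\boldsymbol{\gamma}$ by $\boldsymbol{\hat\gamma}=\boldsymbol{\gamma}+\boldsymbol{\varepsilon}_{\boldsymbol{\gamma}}$. The perturbed input and output, $\textbf{x}+\boldsymbol{\delta}_{\textbf{x}}$ and $\textbf{y}+\boldsymbol{\delta}_{\textbf{y}}$, must simultaneously satisfy $\textbf{y}+\boldsymbol{\delta}_{\textbf{y}}={\bf f}(\textbf{x}+\boldsymbol{\delta}_{\textbf{x}},\textbf{w},\boldsymbol{\gamma}+\boldsymbol{\varepsilon}_{\boldsymbol{\gamma}})$ and $\textbf{x}+\boldsymbol{\delta}_{\textbf{x}}={\bf g}(\textbf{u},\textbf{y}+\boldsymbol{\delta}_{\textbf{y}})$. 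Because ${\bf f}$ and ${\bf g}$ are continuously differentiable with bounded partial derivatives, the implicit function theorem applied to this coupled pair (with $\textbf{u},\textbf{w}$ fixed and $\boldsymbol{\gamma}$ as the parameter) guarantees that $\boldsymbol{\delta}_{\textbf{x}}$ and $\boldsymbol{\delta}_{\textbf{y}}$ exist and are of order $\|\boldsymbol{\varepsilon}_{\boldsymbol{\gamma}}\|$. Taylor expanding both relations to first order yields
\begin{equation}
\boldsymbol{\delta}_{\textbf{y}} \;=\; \frac{\partial {\bf f}}{\partial \boldsymbol{\gamma}}\,\boldsymbol{\varepsilon}_{\boldsymbol{\gamma}} + o(\boldsymbol{\varepsilon}_{\boldsymbol{\gamma}}),
\qquad
\boldsymbol{\delta}_{\textbf{x}} \;=\; \frac{\partial {\bf g}}{\partial \boldsymbol{\gamma}}\,\boldsymbol{\varepsilon}_{\boldsymbol{\gamma}} + o(\boldsymbol{\varepsilon}_{\boldsymbol{\gamma}}),
\end{equation}
where $\partial {\bf f}/\partial\boldsymbol{\gamma}$ and $\partial {\bf g}/\partial\boldsymbol{\gamma}$ are read as total derivatives with $\textbf{u},\textbf{w}$ held fixed, so they absorb both the explicit $\boldsymbol{\gamma}$-dependence of ${\bf f}$ and the implicit dependence propagated through the feedback loop $\textbf{x}={\bf g}(\textbf{u},\textbf{y})$, $\textbf{y}={\bf f}(\textbf{x},\textbf{w},\boldsymbol{\gamma})$.

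With these expressions in hand, the identifications $\boldsymbol{\varepsilon}\leftarrow\boldsymbol{\varepsilon}_{\boldsymbol{\gamma}}$, $\tilde{\textbf{X}}\leftarrow\partial{\bf g}/\partial\boldsymbol{\gamma}$, $\tilde{\textbf{Y}}\leftarrow\partial{\bf f}/\partial\boldsymbol{\gamma}$ put the perturbations exactly in the form~\eqref{eqnDelta_x}--\eqref{eqnDelta_y} required by Theorem~\ref{Theorem General}. Substituting directly into~\eqref{eqnGeneral} yields~\eqref{eqn1}, and since $\boldsymbol{\varepsilon}\to\mathbf{0}$ is equivalent to $\boldsymbol{\varepsilon}_{\boldsymbol{\gamma}}\to\mathbf{0}$, the asymptotic regime matches.

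The main obstacle I anticipate is the justification of the first-order expansions themselves: unlike a non-feedback model where $\boldsymbol{\delta}_{\textbf{y}}$ would depend only on the explicit partial $\partial{\bf f}/\partial\boldsymbol{\gamma}$ at the original $\textbf{x}$, here the loop $\textbf{x}={\bf g}(\textbf{u},\textbf{y})$ forces $\boldsymbol{\delta}_{\textbf{x}}$ and $\boldsymbol{\delta}_{\textbf{y}}$ to be determined jointly. One must argue that the linearized system is solvable (i.e., the relevant Jacobian $\mathbf{I}-(\partial{\bf f}/\partial\textbf{x})(\partial{\bf g}/\partial\textbf{y})$ is invertible at the nominal point) and that the $o(\boldsymbol{\varepsilon}_{\boldsymbol{\gamma}})$ remainders remain $o(\boldsymbol{\varepsilon}_{\boldsymbol{\gamma}})$ after the loop is resolved. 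The boundedness of the partial derivatives of ${\bf f}$ and ${\bf g}$ (assumed in Section~\ref{System Model}) is precisely what keeps these remainders uniformly small, allowing the expectations in~\eqref{eqnGeneral} to pass through the substitution and making the final step a direct application of Theorem~\ref{Theorem General}.
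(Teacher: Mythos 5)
Your proposal is correct and follows essentially the same route as the paper: Taylor-expand $\bf g$ and $\bf f$ in $\boldsymbol{\gamma}$ to identify $\tilde{\bf X}=\partial{\bf g}/\partial\boldsymbol{\gamma}$, $\tilde{\bf Y}=\partial{\bf f}/\partial\boldsymbol{\gamma}$, and $\boldsymbol{\varepsilon}=\boldsymbol{\varepsilon}_{\boldsymbol{\gamma}}$, then invoke Theorem~\ref{Theorem General}. Your added discussion of the implicit-function-theorem step and the invertibility of the linearized feedback loop is in fact more careful than the paper's own two-line argument, which silently treats $\partial{\bf g}/\partial\boldsymbol{\gamma}$ and $\partial{\bf f}/\partial\boldsymbol{\gamma}$ as total derivatives without justifying that the coupled perturbations are well defined and first-order in $\boldsymbol{\varepsilon}_{\boldsymbol{\gamma}}$.
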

 where we have written ${\bf{f}}\left( {{\textbf{x}},{\textbf{w}},{\boldsymbol{\gamma }}} \right)$ and ${\bf{g}}\left( {{\textbf{u}},{\textbf{y}}} \right)$ simply as ${\bf{f}}$ and ${\bf{g}}$, and ${{\partial {\bf{g}}} \mathord{\left/
 {\vphantom {{\partial {\bf{g}}} {\partial {\boldsymbol \gamma} }}} \right.
 \kern-\nulldelimiterspace} {\partial {\boldsymbol \gamma} }}$ and ${{\partial {\bf{f}}} \mathord{\left/
 {\vphantom {{\partial {\bf{f}}} {\partial {\boldsymbol \gamma} }}} \right.
 \kern-\nulldelimiterspace} {\partial {\boldsymbol \gamma} }}$ are matrixes of the partial derivatives of $\bf g$ and $\bf f$ with respect to the system parameter $\boldsymbol \gamma$, respectively.
  \begin{proof}
Let ${\boldsymbol{\delta }}_{\textbf x} $ and ${\boldsymbol{\delta }}_{\textbf y} $
denote the variations of $\textbf x$ and $\textbf y$ caused by
${\boldsymbol{\varepsilon }}_{\boldsymbol{\gamma }}$. Then, based on~\eqref{eqn11} and~\eqref{eqn301}, and by Taylor series expansion of $\bf g$ and $\bf f$ around $\boldsymbol{\gamma }$
we can write,
%%%%%%%%%%%%%%%%%%%%%%%%%%%%%%%%%%%%%%%%%%%%%%%%% EQUATION 22223 %%%%%%%%%%%%%%%%%%%%%%%%%%%
\begin{align}\label{eqn22223}
{\boldsymbol{\delta }}_{\textbf{x}} & = {\bf{g}}\left( { \cdot ,{\boldsymbol{\gamma }} + {\boldsymbol{\varepsilon }}_{\boldsymbol{\gamma }} } \right) - {\bf{g}}\left( { \cdot ,{\boldsymbol{\gamma }}} \right)\notag\\
&= \frac{{\partial {\bf{g}}}}{{\partial {\boldsymbol{\gamma }}}}{\boldsymbol{\varepsilon }}_{\boldsymbol{\gamma }} + o\left( {\boldsymbol{\varepsilon }}_{\boldsymbol{\gamma }} \right)
\end{align}
and
%%%%%%%%%%%%%%%%%%%%%%%%%%%%%%%%%%%%%%%%%%%%%%%%% EQUATION 22224 %%%%%%%%%%%%%%%%%%%%%%%%%%%
\begin{align}\label{eqn22224}
{\boldsymbol{\delta }}_{\textbf{y}}  &= {\bf{f}}\left( { \cdot ,\cdot,{\boldsymbol{\gamma }} + {\boldsymbol{\varepsilon }}_{\boldsymbol{\gamma }} } \right) - {\bf{f}}\left( { \cdot ,\cdot,{\boldsymbol{\gamma }}} \right)\notag\\
& =\frac{{\partial {\bf{f}}}}{{\partial {\boldsymbol{\gamma }}}}{\boldsymbol{\varepsilon }}_{\boldsymbol{\gamma }}  + o\left( {\boldsymbol{\varepsilon }}_{\boldsymbol{\gamma }} \right).
\end{align}
Comparing~\eqref{eqn22223} and~\eqref{eqn22224} with the definitions of ${\boldsymbol{\delta }}_{\textbf{x}}$ and ${\boldsymbol{\delta }}_{\textbf{y}}$ in Theorem~\ref{Theorem General}, ${{\partial {\bf{g}}} \mathord{\left/
 {\vphantom {{\partial {\bf{g}}} {\partial {\boldsymbol{\gamma }}}}} \right.
 \kern-\nulldelimiterspace} {\partial {\boldsymbol{\gamma }}}}$ and ${{\partial {\bf{f}}} \mathord{\left/
 {\vphantom {{\partial {\bf{f}}} {\partial {\boldsymbol{\gamma }}}}} \right.
 \kern-\nulldelimiterspace} {\partial {\boldsymbol{\gamma }}}}$ can be considered as ${\bf{\tilde X}}$ and ${\bf{\tilde Y}}$, and ${\boldsymbol{\varepsilon }}_{\boldsymbol{\gamma }}$ as $\boldsymbol \varepsilon$ respectively which completes the proof.
\end{proof}
\begin{remark}\label{Remark 2}
  Substituting the definitions of the JSFs from Section~\ref{Definition 2} in~\eqref{eqn1} results in an
   alternative expression as,
 %%%%%%%%%%%%%%%%%%%%%%%%%%%%%%%%%%%%%%%%%%%%%%%%% EQUATION 304 %%%%%%%%%%%%%%%%%%%%%%%%%%%
\begin{align}\label{eqn304}
 \Delta I\left( {{\textbf{x}};{\textbf{y}}} \right) = {\mathop{\rm E}\nolimits} \left\{ {\nabla _{{\textbf{x}}}^T \ln p_{{\textbf{y}}|{\textbf{x}}} \left( {{{\textbf{y}}}|{{\textbf{x}}}} \right) \frac{{\partial {\bf{g}}}}{{\partial {\boldsymbol{\gamma }}}}{\boldsymbol{\varepsilon }}_{\boldsymbol{\gamma }} } \right\} + {\mathop{\rm E}\nolimits} \left\{ {
\nabla _{{\textbf{y}}}^T \ln p_{{\textbf{x}}|{\textbf{y}}} \left( {{{\textbf{x}}}|{{\textbf{y}}}} \right) \frac{{\partial {\bf{f}}}}{{\partial {\boldsymbol{\gamma }}}}{\boldsymbol{\varepsilon }}_{\boldsymbol{\gamma }} } \right\} + o\left( {\boldsymbol{\varepsilon }}_{\boldsymbol{\gamma }} \right).
 \end{align}
 \end{remark}
\begin{remark}\label{Remark 20}
From~\eqref{eqn304} it can be seen that, if the system input is not a function of the system parameter of interest, then ${{\partial {\bf{g}}} \mathord{\left/
 {\vphantom {{\partial {\bf{g}}} {\partial {\bf{\gamma }}}}} \right.
 \kern-\nulldelimiterspace} {\partial {\boldsymbol{\gamma }}}} = \bf0$ and hence by the Lebesgue dominated convergence Theorem~\cite{Rudin}, and our assumption about the boundedness of score functions in~\ref{Assumptions}, the first term of this equation will vanish,
 %%%%%%%%%%%%%%%%%%%%%%%%%%%%%%%%%%%%%%%%%%%%%%%%% EQUATION Short %%%%%%%%%%%%%%%%%%%%%%%%%%%
\begin{align}\label{eqnShort}
 \Delta I\left( {{\textbf{x}};{\textbf{y}}} \right) =  {\mathop{\rm E}\nolimits} \left\{ {
\nabla _{{\textbf{y}}}^T \ln p_{{\textbf{x}}|{\textbf{y}}} \left( {{{\textbf{x}}}|{{\textbf{y}}}} \right) \frac{{\partial {\bf{f}}}}{{\partial {\boldsymbol{\gamma }}}}{\boldsymbol{\varepsilon }}_{\boldsymbol{\gamma }} } \right\} + o\left( {\boldsymbol{\varepsilon }}_{\boldsymbol{\gamma }} \right).
 \end{align}
\end{remark}

 The above Corollary and Remarks enable us to calculate the gradient of MI with
 respect to any system parameters of interest which will be given in the next subsection.
%%%%%%%%%%%%%%%%%%%%    Gradient of Mutual Information   %%%%%%%%%%%%%%%%%%%
\subsection{Gradient of Mutual Information With Respect to the System Parameters}\label{Gradient of Mutual Information With Respect to the System Parameters}

In this section, we give a general expression for the gradient of MI with respect to the system parameters, by considering the system
model as the general form stated in~\ref{System Model}. We assume that, both input and output of the system are allowed to be affected by the system parameters. Hence, this general
model covers many practical system models including stochastic systems with feedback.
\begin{theorem}\label{Theorem 301}
Consider the system model introduced by~\eqref{eqn11} and~\eqref{eqn301}. Then, the gradient of MI with respect to a specific vector from the parameter
set $ \Theta$ such as $\boldsymbol \gamma=(\gamma_1,...,\gamma_J)^T$, will be as,
%%%%%%%%%%%%%%%%%%%%%%%%%%%%%%%%%%%%%%%%%%%%%%%%% EQUATION 17 %%%%%%%%%%%%%%%%%%%%%%%%%%%
\begin{align}\label{eqn17}
 \nabla _{\boldsymbol \gamma}  I\left( {{\textbf{x}};{\textbf{y}}} \right)& = {\mathop{\rm E}\nolimits} \left\{ {\left( {\frac{{\partial {\bf{g}}}}{{\partial {\boldsymbol \gamma} }}} \right)^T \left( {\boldsymbol \phi _{{\textbf x, \textbf y}}^{\textbf{x}} \left( {\textbf x}, {\textbf y} \right) - \boldsymbol \phi _{\textbf{x}} \left( {{\textbf{x}}} \right)} \right)} \right\} \notag \\
 & \,\,\,\,\, + {\mathop{\rm E}\nolimits} \left\{ {\left( {\frac{{\partial {\bf{f}}}}{{\partial {\boldsymbol \gamma} }}} \right)^T \left( {\boldsymbol \phi _{{\textbf x, \textbf y}}^{\textbf{y}} \left( {\textbf x}, {\textbf y} \right) - \boldsymbol \phi _{\textbf{y}} \left( {{\textbf{y}}} \right)} \right)} \right\}
\end{align}
\end{theorem}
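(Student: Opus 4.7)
The plan is to deduce Theorem~\ref{Theorem 301} directly from the variation formula already established in Corollary~\ref{Theorem 299}. By definition~\eqref{eqnGradient}, $\nabla_{\boldsymbol{\gamma}} I(\textbf{x};\textbf{y})$ is the unique $(J\times 1)$ column vector that linearizes $I$ in $\boldsymbol{\gamma}$, i.e., such that as ${\boldsymbol{\varepsilon}}_{\boldsymbol{\gamma}} \to \textbf{0}$,
\begin{equation*}
\Delta I(\textbf{x};\textbf{y}) = \bigl(\nabla_{\boldsymbol{\gamma}} I(\textbf{x};\textbf{y})\bigr)^T {\boldsymbol{\varepsilon}}_{\boldsymbol{\gamma}} + o({\boldsymbol{\varepsilon}}_{\boldsymbol{\gamma}}).
\end{equation*}
So it suffices to rewrite each of the two terms on the right hand side of~\eqref{eqn1} in the form (row)${}\cdot {\boldsymbol{\varepsilon}}_{\boldsymbol{\gamma}}$, and then read off the transposed coefficient of ${\boldsymbol{\varepsilon}}_{\boldsymbol{\gamma}}$ as the gradient.

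The key manipulation is the elementary identity $\textbf{a}^T \textbf{M}\textbf{b} = (\textbf{M}^T \textbf{a})^T \textbf{b}$, valid for any matrix $\textbf{M}$ and conformable vectors $\textbf{a}, \textbf{b}$. Applied inside the first expectation of~\eqref{eqn1} with $\textbf{a} = \boldsymbol{\phi}_{\textbf{x},\textbf{y}}^{\textbf{x}}(\textbf{x},\textbf{y}) - \boldsymbol{\phi}_{\textbf{x}}(\textbf{x})$, $\textbf{M} = \partial \textbf{g}/\partial\boldsymbol{\gamma}$, and $\textbf{b} = {\boldsymbol{\varepsilon}}_{\boldsymbol{\gamma}}$, it rewrites that term as
\begin{equation*}
\mathrm{E}\Bigl\{\bigl[(\partial \textbf{g}/\partial \boldsymbol{\gamma})^T \bigl(\boldsymbol{\phi}_{\textbf{x},\textbf{y}}^{\textbf{x}}(\textbf{x},\textbf{y}) - \boldsymbol{\phi}_{\textbf{x}}(\textbf{x})\bigr)\bigr]^T\, {\boldsymbol{\varepsilon}}_{\boldsymbol{\gamma}} \Bigr\},
\end{equation*}
and analogously for the second term with $\textbf{g}$ replaced by $\textbf{f}$ and the $\textbf{x}$-scores replaced by $\textbf{y}$-scores. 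Because ${\boldsymbol{\varepsilon}}_{\boldsymbol{\gamma}}$ is deterministic and the bracketed random vector is integrable (the score-boundedness~\eqref{eqnScore_Bound} combined with the bounded partial derivatives of $\textbf{g}$ and $\textbf{f}$ postulated in Section~\ref{System Model} gives a dominating integrable envelope), linearity of expectation pulls the factor ${\boldsymbol{\varepsilon}}_{\boldsymbol{\gamma}}$ outside the expectation.

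Summing the two reshaped terms then expresses $\Delta I(\textbf{x};\textbf{y})$ as $\textbf{r}^T {\boldsymbol{\varepsilon}}_{\boldsymbol{\gamma}} + o({\boldsymbol{\varepsilon}}_{\boldsymbol{\gamma}})$ with $\textbf{r}$ equal to the sum of the two expectations displayed in~\eqref{eqn17}. Since ${\boldsymbol{\varepsilon}}_{\boldsymbol{\gamma}}$ may be varied independently along each of the $J$ coordinate directions, $\textbf{r}$ is uniquely determined and must coincide with $\nabla_{\boldsymbol{\gamma}} I(\textbf{x};\textbf{y})$, which is precisely the assertion of the theorem. I do not anticipate any substantive obstacle: the genuinely nontrivial steps, namely the Taylor linearization of $\textbf{g}(\cdot,\boldsymbol{\gamma})$ and $\textbf{f}(\cdot,\cdot,\boldsymbol{\gamma})$ in~\eqref{eqn22223}--\eqref{eqn22224} combined with the general variation formula of Theorem~\ref{Theorem General}, have already been discharged in the proof of Corollary~\ref{Theorem 299}; what is left is purely transpose bookkeeping plus the standard argument identifying the Fr\'echet derivative with the gradient vector.
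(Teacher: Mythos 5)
Your proposal is correct and follows essentially the same route as the paper: both start from the variation formula of Corollary~\ref{Theorem 299}, isolate the linear coefficient of ${\boldsymbol{\varepsilon}}_{\boldsymbol{\gamma}}$, and identify it with the gradient — the paper does this by expanding~\eqref{eqn1} componentwise to read off each $\partial I/\partial\gamma_j$, while you do the equivalent transpose bookkeeping in matrix form. Your added remark on integrability (needed to pull the deterministic ${\boldsymbol{\varepsilon}}_{\boldsymbol{\gamma}}$ out of the expectation) is a small refinement the paper leaves implicit, not a different argument.
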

\begin{proof}
Expanding~\eqref{eqn1} we have,
%%%%%%%%%%%%%%%%%%%%%%%%%%%%%%%%%%%%%%%%%%%%%%%%% EQUATION 21 %%%%%%%%%%%%%%%%%%%%%%%%%%%
\begin{align}\label{eqn21}
 \Delta I\left( {{\textbf{x}};{\textbf{y}}} \right) &= \sum\limits_{j = 1}^J {\left( {\sum\limits_{m = 1}^M {{\mathop{\rm E}\nolimits} \left\{ {\left( {\frac{{\partial {\bf{g}}}}{{\partial {\boldsymbol {\gamma }}}}} \right)_{m,j} \left( {\boldsymbol \phi _{{\textbf x, \textbf y}}^{\textbf{x}} \left( {\textbf x}, {\textbf y} \right) - \boldsymbol \phi _{\textbf{x}} \left( {{\textbf{x}}} \right)} \right)_m } \right\}} } \right.} \notag \\
& \,\,\,\,\, \left. { + \sum\limits_{n = 1}^N {{\mathop{\rm E}\nolimits} \left\{ {\left( {\frac{{\partial {\bf{f}}}}{{\partial {\boldsymbol {\gamma }}}}} \right)_{n,j} \left( {\boldsymbol \phi _{{\textbf x, \textbf y}}^{\textbf{y}} \left( {\textbf x}, {\textbf y} \right) - \boldsymbol \phi _{\textbf{y}} \left( {{\textbf{y}}} \right)} \right)_n } \right\}} } \right)\left( {\boldsymbol{\varepsilon }}_{\boldsymbol{\gamma }} \right)_j + o\left( {\boldsymbol{\varepsilon }}_{\boldsymbol{\gamma }} \right).
\end{align}
This equation shows that as $\left({\boldsymbol{\varepsilon }}_{\boldsymbol{\gamma }}\right)_j\rightarrow 0$,
%%%%%%%%%%%%%%%%%%%%%%%%%%%%%%%%%%%%%%%%%%%%%%%%% EQUATION 306 %%%%%%%%%%%%%%%%%%%%%%%%%%%
\begin{align}\label{eqn306}
  \frac{\partial }{{\partial \gamma _j }}I\left( {{\textbf{x}};{\textbf{y}}} \right) &= \sum\limits_{m = 1}^M {{\mathop{\rm E}\nolimits} \left\{ {\left( {\frac{{\partial {\bf{g}}}}{{\partial {\boldsymbol \gamma} }}} \right)_{m,j} \left( {\boldsymbol \phi _{{\textbf x, \textbf y}}^{\textbf{x}} \left( {\textbf x}, {\textbf y} \right) - \boldsymbol \phi _{\textbf{x}} \left( {{\textbf{x}}} \right)} \right)_m } \right\}}  \notag \\
  & \,\,\,\,\, + \sum\limits_{n = 1}^N {{\mathop{\rm E}\nolimits} \left\{ {\left( {\frac{{\partial {\bf{f}}}}{{\partial {\boldsymbol \gamma} }}} \right)_{n,j} \left( {\boldsymbol \phi _{{\textbf x, \textbf y}}^{\textbf{y}} \left( {\textbf x}, {\textbf y} \right) - \boldsymbol \phi _{\textbf{y}} \left( {{\textbf{y}}} \right)} \right)_n } \right\}}\notag \\
    &= {\mathop{\rm E}\nolimits} \left\{ {\left( {\frac{{\partial {\bf{g}}}}{{\partial \gamma _j }}} \right)^T \left( {{\boldsymbol{\phi }}_{{\textbf{x,y}}}^{\textbf{x}} \left( {{{\textbf{x}}},{{\textbf{y}}}} \right){\boldsymbol{ - \phi }}_{\textbf{x}} \left( {\textbf{x}} \right)} \right)} \right\} \notag\\
 &\,\,\,\,\, + {\mathop{\rm E}\nolimits} \left\{ {\left( {\frac{{\partial {\bf{f}}}}{{\partial \gamma _j }}} \right)^T \left( {{\boldsymbol{\phi }}_{{\textbf{x,y}}}^{\textbf{y}} \left( {{{\textbf{x}}},{{\textbf{y}}}} \right){\boldsymbol{ - \phi }}_{\textbf{y}} \left( {\textbf{y}} \right)} \right)} \right\}
\end{align}
where, ${{\partial {\bf{g}}} \mathord{\left/
 {\vphantom {{\partial {\bf{g}}} {\partial \gamma _j }}} \right.
 \kern-\nulldelimiterspace} {\partial \gamma _j }}$ and ${{\partial {\bf{f}}} \mathord{\left/
 {\vphantom {{\partial {\bf{f}}} {\partial \gamma _j }}} \right.
 \kern-\nulldelimiterspace} {\partial \gamma _j }}$ are derivatives of vector functions $\bf f$ and $\bf g$ with respect to $\gamma_j$, respectively. Regarding to the definition of the gradient of a scalar function with respect to a vector,~\eqref{eqn306} completes the proof.
\end{proof}
\begin{remark}\label{Remark 3}
Utilizing the definitions of JSFs in~\ref{Definition 2}, an alternative expression for Theorem~\ref{Theorem 301} can be written as,
%%%%%%%%%%%%%%%%%%%%%%%%%%%%%%%%%%%%%%%%%%%%%%%%% EQUATION 305 %%%%%%%%%%%%%%%%%%%%%%%%%%%
\begin{align}\label{eqn305}
 \nabla _{\boldsymbol \gamma}  I\left( {{\textbf{x}};{\textbf{y}}} \right)& = {\mathop{\rm E}\nolimits} \left\{ {\left( {\frac{{\partial {\bf{g}}}}{{\partial {\boldsymbol \gamma} }}} \right)^T \nabla _{{\textbf{x}}} \ln p_{{\textbf{y}}\left| {\textbf{x}} \right.} \left( {{{\textbf{y}}}\left| {{\textbf{x}}} \right.} \right)} \right\} \notag \\
 & \,\,\,\,\, + {\mathop{\rm E}\nolimits} \left\{ {\left( {\frac{{\partial {\bf{f}}}}{{\partial {\boldsymbol \gamma} }}} \right)^T \nabla _{{\textbf{y}}} \ln p_{{\textbf{x}}\left| {\textbf{y}} \right.} \left( {{{\textbf{x}}}\left| {{\textbf{y}}} \right.} \right)} \right\}.
  \end{align}
  \end{remark}
Again, if the system input is not a function of the system parameters, then the first term of~\eqref{eqn305} will vanish.

  In the following section, we use our general results in multiple access communication channel with feedback, as one of the realistic scenarios in which the system input
  depends on the system parameters in general.
%%%%%%%%%%%%%%%%%%%%%%%%%%%%%%%%%%%%%%%%%%%%%%%%%%% Application %%%%%%%%%%%%%%%%%%%%%%%%%%%%%%%%%%%%
\section{Applications in Multiple Access Communication Channels}\label{Application}
The general results obtained in the previous section are applicable to any stochastic system defined by~\eqref{eqn11} and~\eqref{eqn301}. In this section, we particularize these general results for the multiuser communication channel as one of the most important practical stochastic systems. Specifically, we consider the multiple access channel with feedback, which has applications in the cellular networks, medium access in a Local Area Networks (LAN), etc. We first introduce a general model for MAC channel with feedback and then calculate the sensitivity of the MI to the channel parameters in this model. These results are particularized to the additive noise MAC and additive Gaussian noise MAC cases. We also explore the counterparts of the results for the MAC without
feedback. The single-user versions of the results are given as well. 
%%%%%%%%%%%%%%%%%%%%%%%%%%%%%%%%%%%%%%%%%%%%%%%%%%% Feedback %%%%%%%%%%%%%%%%%%%%%%%%%%%%%%%%%%%%
\subsection{Channels with Feedback}\label{Feedback}
In this section, we study the sensitivity of the MI with respect to the parameters in a communication channel with feedback. We first introduce a general model for MAC with feedback
 and then we consider a single-user channel with feedback as a special case.
 %%%%%%%%%%%%%%%%%%%%%%%%%%%%%%%%%%%%%%%%%%%%%%%%%%% MAC-Feedback %%%%%%%%%%%%%%%%%%%%%%%%%%%%%%%%%%%%
\subsubsection{MAC with Feedback}\label{MAC-Feedback}
   In this section, we consider MAC with feedback as a familiar and practical scenario of multiuser
 communication systems. We first give a general expression for the gradient of MI with respect to the channel parameters in a MAC with arbitrary channel model. At the second step, we consider an additive Gaussian noise MAC with feedback, where an extension of I-MMSE relationship is given. 

Fig.~\ref{fig:MAC} shows a schematic view of the MAC with noiseless causal feedbacks from the receiver to all $K$ transmitters.
\begin{figure}
  \centering
  \includegraphics[width=15cm]{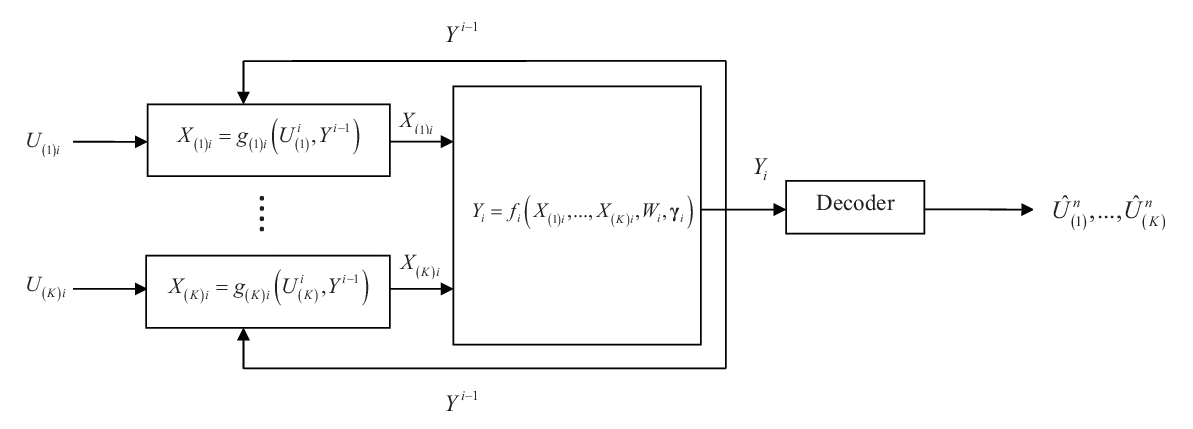}
 \caption{Discrete-time multiple access channel with feedback.}
 \label{fig:MAC}
\end{figure}
The $k{\rm{th}}$ transmitter $\left(k=1,...,K\right)$, wishes to
communicate symbol $M_{(k)}$ to the receiver by sending
$U_{_{\left( k \right)} }^n$ in $n$ uses of the channel,
where $U_{\left( k \right)}^i$ stands for the sequence of messages of
$k$th user up to time instant $i,\, i\in[1:n]$. A $\left(2^{nR_1},...,2^{nR_K},n \right)$ code for the MAC with feedback consists of $K$ message sets $\left[1:2^{nR_k}\right]$, $k=1,...,K$, $K$ encoders as,
%%%%%%%%%%%%%%%%%%%%%%%%%%%%%%%%%%%%%%%%%%%%%%%%% EQUATION MAC2  %%%%%%%%%%%%%%%%%%%%%%%%%%%
\begin{eqnarray}\label{eqnMAC2}
X_{\left( k \right)i}  = g_{\left( k \right)i} \left( {U_{\left( k \right)}^i ,Y^{i - 1} } \right);\;\;k = 1,...,K,\;i \in \left[ {1:n} \right]
\end{eqnarray}
and a decoder $\left( {\hat M_{(1)} \left( {Y^n } \right),...,\hat M_{(K)} \left( {Y^n } \right)} \right)$, where $X_{\left( k \right)i}$
and $Y^{i-1}$ are used for the $k{\rm{th}}$ channel input at time instant $i$ and the
channel outputs up to time instant $i-1$, respectively. We consider the
following general functional model for the channel
in $i{\rm{th}}$ $\left(i=1,...,n\right)$ use of the channel,
%%%%%%%%%%%%%%%%%%%%%%%%%%%%%%%%%%%%%%%%%%%%%%%%% EQUATION MAC1  %%%%%%%%%%%%%%%%%%%%%%%%%%%
\begin{eqnarray}\label{eqnMAC1}
Y_i  = f_i \left( {X_{\left( 1 \right)i} ,...,X_{\left( K \right)i} ,W_i ,\boldsymbol\gamma_i } \right),\;\;i \in \left[ {1:n} \right]
\end{eqnarray}
where, $W_i$ stands for arbitrary but known distribution noise and ${\boldsymbol {\gamma }_i} = \left( {\gamma _{(1)i} ,...,\gamma _{(K)i} } \right)^T$ denotes the channel parameters vector, where we are interested to find the sensitivity of MI with respect to it. The following Theorem gives a general expression for this issue. 
\begin{theorem}\label{Theorem MACmain}
 In a general MAC with noiseless causal feedback as described in~\eqref{eqnMAC2} and~\eqref{eqnMAC1}, gradient of MI
with respect to the system parameter $\boldsymbol \gamma _j ,\;j = 1,...,n$ is,
%%%%%%%%%%%%%%%%%%%%%%%%%%%%%%%%%%%%%%%%%%%%%%%%% EQUATION MAC3  %%%%%%%%%%%%%%%%%%%%%%%%%%%
\begin{align}\label{eqnMAC3}
\nabla _{{\boldsymbol{\gamma }}_j }I\left( {X_{\left( 1 \right)}^n ,...,X_{\left( K \right)}^n ;Y^n } \right)& =  \sum\limits_{k = 1}^K {{\mathop{\rm E}\nolimits} \left\{ {\left( \frac{{\partial {\bf{g}}_{\left( k \right)} }}{{\partial \boldsymbol\gamma_j }}\right)^T\nabla _{X_{\left(k \right)}^n } \ln p_{Y^n \left| {X_{\left( 1 \right)}^n ,...,X_{\left( K \right)}^n } \right.} \left( {Y^n \left| {X_{\left( 1 \right)}^n ,...,X_{\left( K \right)}^n } \right.} \right)} \right\}}\notag \\
&\,\,\,\,\,+  {\mathop{\rm E}\nolimits} \left\{ \left( \frac{{\partial {\bf{f}}}}{{\partial \boldsymbol\gamma_j }}\right)^T{\nabla _{Y^n } \ln p_{X_{\left( 1 \right)}^n ,...,X_{\left( K \right)}^n \left| {Y^n } \right.} \left( {X_{\left( 1 \right)}^n ,...,X_{\left( K \right)}^n \left| {Y^n } \right.} \right)} \right\}
\end{align}
where, $X_{\left( k \right)}^n$ stands for the sequence of the channel inputs of $k{\rm{th}}$ user and $Y^n$ denotes the sequence of channel outputs in $n$ uses of the channel, respectively, ${\bf{f}} = \left( {f_1 ,...,f_n } \right)^T$ with $f_i  = f_i \left( {X_{\left( 1 \right)i} ,...,X_{\left( K \right)i} ,W_i ,\boldsymbol\gamma_i } \right)$, and ${\bf{g}}_{\left( k \right)}  = \left( {g_{\left( k \right)1} ,...,g_{\left( k \right)n} } \right)^T$ with
$g_{\left( k \right)i}  = g_{\left( k \right)i}  \left( {U_{\left( k \right)}^i ,Y^{i - 1} } \right)$. Also,
%%%%%%%%%%%%%%%%%%%%%%%%%%%%%%%%%%%%%%%%%%%%%%%%% EQUATION MAC4  %%%%%%%%%%%%%%%%%%%%%%%%%%%
\begin{eqnarray}\label{eqnMAC4}
\nabla _{{\boldsymbol{\gamma }}_j }I\left( {U_{\left( 1 \right)}^n ,...,U_{\left( K \right)}^n ;Y^n } \right)= {\mathop{\rm E}\nolimits} \left\{ \left(\frac{{\partial {\bf{f}}}}{{\partial \boldsymbol\gamma_j }} \right)^T{\nabla _{Y^n } \ln p_{U_{\left( 1 \right)}^n ,...,U_{\left( K \right)}^n \left| {Y^n } \right.} \left( {U_{\left( 1 \right)}^n ,...,U_{\left( K \right)}^n \left| {Y^n } \right.} \right)} \right\}.
\end{eqnarray}
\end{theorem}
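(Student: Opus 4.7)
The plan is to recast the MAC with feedback as an instance of the general system model~\eqref{eqn11}--\eqref{eqn301} by stacking the $K$ users' inputs into one $(Kn\times 1)$ system input vector, and then specialize Theorem~\ref{Theorem 301} (or, equivalently, its alternative form in Remark~\ref{Remark 3}). Concretely, I would set $\mathbf{x}\triangleq\bigl((X_{(1)}^n)^T,\ldots,(X_{(K)}^n)^T\bigr)^T$, $\mathbf{y}\triangleq Y^n$, the encoder map $\mathbf{g}\triangleq(\mathbf{g}_{(1)}^T,\ldots,\mathbf{g}_{(K)}^T)^T$ built row-by-row from~\eqref{eqnMAC2}, and the channel map $\mathbf{f}\triangleq(f_1,\ldots,f_n)^T$ from~\eqref{eqnMAC1}. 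The blanket hypotheses of Section~\ref{Definitions} (continuous random vectors, bounded score functions, continuously differentiable maps with bounded partial derivatives) transfer verbatim to this stacked model, so Theorem~\ref{Theorem 301} applies.

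For equation~\eqref{eqnMAC3}, I would invoke Remark~\ref{Remark 3}. The Jacobian $\partial\mathbf{g}/\partial\boldsymbol{\gamma}_j$ inherits a block-column structure whose $k$-th block is $\partial\mathbf{g}_{(k)}/\partial\boldsymbol{\gamma}_j$, because user $k$'s encoder produces only the subvector $X_{(k)}^n$. Conformably, the score $\nabla_{\mathbf{x}}\ln p_{\mathbf{y}|\mathbf{x}}$ splits into $K$ sub-vectors $\nabla_{X_{(k)}^n}\ln p_{Y^n|X_{(1)}^n,\ldots,X_{(K)}^n}$, one per user. Carrying out the block-wise matrix-vector product, taking expectation and summing over the $K$ blocks yields precisely the first term of~\eqref{eqnMAC3}; the second term is inherited directly from Remark~\ref{Remark 3}.

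For equation~\eqref{eqnMAC4}, I would work instead with the pair $\bigl((U_{(1)}^n,\ldots,U_{(K)}^n),Y^n\bigr)$. By the standing assumption of Section~\ref{System Model} the source messages are independent of $\mathbf{w}$ and of the system parameters, so the ``input'' in the message--output pair carries no explicit dependence on $\boldsymbol{\gamma}_j$. Remark~\ref{Remark 20} then forces the first term of the general gradient formula to vanish, leaving only the $\partial\mathbf{f}/\partial\boldsymbol{\gamma}_j$ contribution with $p_{\mathbf{x}|\mathbf{y}}$ replaced by $p_{U_{(1)}^n,\ldots,U_{(K)}^n|Y^n}$, which is exactly~\eqref{eqnMAC4}.

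The main obstacle is the bookkeeping of the block structure for the first term of~\eqref{eqnMAC3}, and more delicately ensuring that $\partial\mathbf{g}_{(k)}/\partial\boldsymbol{\gamma}_j$ is interpreted consistently with the Taylor-expansion convention used in the proof of Corollary~\ref{Theorem 299}. Each encoder $g_{(k)i}$ depends on $\boldsymbol{\gamma}_j$ only indirectly through the feedback $Y^{i-1}$ (nontrivially only for $i>j$), so one must verify that the two terms of~\eqref{eqnMAC3} together capture the full first-order variation of $I(\mathbf{x};\mathbf{y})$ caused by perturbing $\boldsymbol{\gamma}_j$. Once this accounting is reconciled with the partial-derivative convention of Corollary~\ref{Theorem 299}, both identities follow by direct substitution.
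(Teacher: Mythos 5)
Your proposal is correct and follows essentially the same route as the paper: the authors also prove~\eqref{eqnMAC3} by stacking ${\textbf{x}} = \left( {X_{\left( 1 \right)}^n ,...,X_{\left( K \right)}^n } \right)^T$, ${\textbf{y}} = Y^n$ and applying Remark~\ref{Remark 3}, with the block structure of $\partial {\bf{g}}/\partial \boldsymbol{\gamma}_j$ producing the sum over $k$, and prove~\eqref{eqnMAC4} by taking the messages as the input and using their independence from the system parameters to drop the first term. Your version merely spells out the block-wise bookkeeping that the paper leaves implicit.
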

\begin{proof}
Proof of~\eqref{eqnMAC3} directly follows from Remark~\ref{Remark 3} by considering ${\textbf{x}} = \left( {X_{\left( 1 \right)}^n ,...,X_{\left( K \right)}^n } \right)^T$ and ${\textbf{y}} = Y^n$.

For~\eqref{eqnMAC4}, considering the fact that the sequence of messages are not dependent on the system parameters, Remark~\ref{Remark 3} completes the proof.
\end{proof}
\begin{remark}\label{Remark M_U}
It should be noted that~\eqref{eqnMAC4} can be considered as $\nabla _{{\boldsymbol{\gamma }}_j } I\left( {M_{\left( 1 \right)} ,...,M_{\left( K \right)} ;Y^n } \right)$ as well. Because, message $M_{\left( k \right)} ,\;\;k \in \left[ {1:K} \right]$ uniquely determines sequence $U_{\left( k \right)}^n$, and vice versa.
\end{remark}

%\vspace{5mm}
Here, we particularize our results to a general additive noise MAC with $K$ users in the presence of
feedback.
 \begin{col}\label{Corollary MAC1}
In an additive noise MAC with feedback,
 where we assume that the noise is independent of the system
 parameters ${\boldsymbol{\gamma }_i} = {\boldsymbol{\gamma }}$ for all $i$, with the
 functional model as,
 %%%%%%%%%%%%%%%%%%%%%%%%%%%%%%%%%%%%%%%%%%%%%%%%% EQUATION MAC5  %%%%%%%%%%%%%%%%%%%%%%%%%%%
\begin{align}\label{eqnMAC5}
Y_i  &= f_i \left( {X_{\left( 1 \right)i} ,...,X_{\left( K \right)i} ,W_i ,{\boldsymbol{\gamma }}} \right)\notag \\
 &= \sum\limits_{k = 1}^K {\gamma _{(k)} X_{\left( k \right)i} }  + W_i, \,\,\,\;i \in \left[ {1:n} \right]
\end{align}
and
 %%%%%%%%%%%%%%%%%%%%%%%%%%%%%%%%%%%%%%%%%%%%%%%%% EQUATION MAC6  %%%%%%%%%%%%%%%%%%%%%%%%%%%
\begin{eqnarray}\label{eqnMAC6}
X_{\left( k \right)i}  = g_{\left( k \right)i} \left( {U_{\left( k \right)}^i ,Y^{i - 1} } \right),\;k = 1,...,K,\,\;i \in \left[ {1:n} \right]
\end{eqnarray}
derivatives of MI with respect to $\gamma _{(l)} ,\;l = 1,...,K$ are,
%%%%%%%%%%%%%%%%%%%%%%%%%%%%%%%%%%%%%%%%%%%%%%%%% EQUATION forget  %%%%%%%%%%%%%%%%%%%%%%%%%%%
\begin{eqnarray}\label{eqnforget}
\begin{array}{l}
 \frac{{\partial }}{{\partial \gamma_{(l)}}} I\left( {X_{\left( 1 \right)}^n ,...,X_{\left( K \right)}^n ;Y^n } \right)\\
 = \sum\limits_{k = 1}^K {{\mathop{\rm E}\nolimits} \left\{ {\left( \frac{{\partial {\bf{g}}_{\left( k \right)} }}{{\partial \gamma_{(l)} }}\right)^T\nabla _{X_{\left(k \right)}^n } \ln p_{Y^n \left| {X_{\left( 1 \right)}^n ,...,X_{\left( K \right)}^n } \right.} \left( {Y^n \left| {X_{\left( 1 \right)}^n ,...,X_{\left( K \right)}^n } \right.} \right)} \right\}}\\
\,\,\,\,\,+ {\mathop{\rm E}\nolimits} \left\{\left( {\sum\limits_{k = 1}^K {\frac{{\partial \gamma _{(k)} }}{{\partial \gamma _{(l)} }}
{\bf{g}}_{\left( k \right)} }  + \gamma_{(k)} \frac{{\partial {\bf{g}}_{\left( k \right)} }}{{\partial \gamma_{(l)} }}} \right)^T {\nabla _{Y^n } \ln p_{X_{\left( 1 \right)}^n ,...,X_{\left( K \right)}^n \left| {Y^n } \right.} \left( {X_{\left( 1 \right)}^n ,...,X_{\left( K \right)}^n \left| {Y^n } \right.} \right)} \right\}  \\
\end{array}
\end{eqnarray}
and
 %%%%%%%%%%%%%%%%%%%%%%%%%%%%%%%%%%%%%%%%%%%%%%%%% EQUATION MAC7  %%%%%%%%%%%%%%%%%%%%%%%%%%%
\begin{eqnarray}\label{eqnMAC7}
\begin{array}{l}
 \frac{{\partial }}{{\partial \gamma_{(l)}}} I\left( {U_{\left( 1 \right)}^n ,...,U_{\left( K \right)}^n ;Y^n } \right)\\
 = {\mathop{\rm E}\nolimits} \left\{\left( {\sum\limits_{k = 1}^K {\frac{{\partial \gamma _{(k)} }}{{\partial \gamma _{(l)} }}
{\bf{g}}_{\left( k \right)} }  + \gamma_{(k)} \frac{{\partial {\bf{g}}_{\left( k \right)} }}{{\partial \gamma_{(l)} }}} \right)^T {\nabla _{Y^n } \ln p_{U_{\left( 1 \right)}^n ,...,U_{\left( K \right)}^n \left| {Y^n } \right.} \left( {U_{\left( 1 \right)}^n ,...,U_{\left( K \right)}^n \left| {Y^n } \right.} \right)} \right\}  \\
 = \sum\limits_{i = 1}^n {{\mathop{\rm E}\nolimits} \left\{ \left( {\sum\limits_{k = 1}^K {\frac{{\partial \gamma _{(k)} }}{{\partial \gamma _{(l)} }}g_{_{\left( k \right)i} } }  + \gamma_{(k)} \frac{{\partial g_{_{\left( k \right)i} } }}{{\partial \gamma_{(l)} }}} \right){\frac{{\partial \ln p_{U_{\left( 1 \right)}^n ,...,U_{\left( K \right)}^n \left| {Y^n } \right.} \left( {U_{\left( 1 \right)}^n ,...,U_{\left( K \right)}^n \left| {Y^n } \right.}  \right)}}{{\partial Y_i }}} \right\}}
\end{array}
\end{eqnarray}
\end{col}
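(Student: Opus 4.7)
The plan is to obtain both identities as direct specializations of Theorem~\ref{Theorem MACmain} to the additive channel model~\eqref{eqnMAC5}--\eqref{eqnMAC6}, once the matrices of partial derivatives $\partial\mathbf{f}/\partial\gamma_{(l)}$ and $\partial\mathbf{g}_{(k)}/\partial\gamma_{(l)}$ are correctly evaluated. I would treat $\boldsymbol{\gamma}_j=\boldsymbol{\gamma}=(\gamma_{(1)},\ldots,\gamma_{(K)})^T$ and read off the $l$th component of the gradient, reducing the problem to computing $\partial/\partial\gamma_{(l)}$ of the relevant mutual information.

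First I would expand $\partial\mathbf{f}/\partial\gamma_{(l)}$ componentwise. Because the channel law at time $i$ is $f_i=\sum_{k=1}^K \gamma_{(k)} X_{(k)i}+W_i$ and each encoding rule $X_{(k)i}=g_{(k)i}(U_{(k)}^i,Y^{i-1})$ depends on the parameter vector through the feedback symbols $Y^{i-1}$, the product rule yields
\begin{align*}
\frac{\partial f_i}{\partial \gamma_{(l)}}
=\sum_{k=1}^{K}\left(\frac{\partial \gamma_{(k)}}{\partial \gamma_{(l)}}\,g_{(k)i}+\gamma_{(k)}\frac{\partial g_{(k)i}}{\partial \gamma_{(l)}}\right),
\end{align*}
which is exactly the scalar factor that appears in the second expectation in both~\eqref{eqnforget} and~\eqref{eqnMAC7}. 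The factor $\partial\mathbf{g}_{(k)}/\partial\gamma_{(l)}$ that shows up in the first expectation of~\eqref{eqnforget} is unaltered from the generic statement, since nothing in the additive structure simplifies the encoding map.

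Next I would plug these derivative expressions into~\eqref{eqnMAC3} to obtain~\eqref{eqnforget}. For~\eqref{eqnMAC7} I would instead start from~\eqref{eqnMAC4}, in which the role of the system input is played by $(U_{(1)}^n,\ldots,U_{(K)}^n)$; since the message sequences are by assumption independent of $\boldsymbol{\gamma}$, the $\partial\mathbf{g}/\partial\gamma_{(l)}$ term collapses and only the $\partial\mathbf{f}/\partial\gamma_{(l)}$ contribution survives, yielding the first equality in~\eqref{eqnMAC7}. The second equality is then a bookkeeping step: $\mathbf{f}$ and $Y^n$ are $n$-vectors, so the inner product $(\partial\mathbf{f}/\partial\gamma_{(l)})^T\nabla_{Y^n}\ln p_{U^n|Y^n}$ expands as $\sum_{i=1}^n (\partial f_i/\partial\gamma_{(l)})\,\partial\ln p_{U^n|Y^n}/\partial Y_i$, and I would substitute the componentwise derivative derived above.

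The only non-routine part is keeping the explicit and implicit (through feedback) dependencies on $\gamma_{(l)}$ separate; once the product-rule expansion of $\partial f_i/\partial\gamma_{(l)}$ is written down correctly, everything else is substitution. Interchanging expectation and the limit that produces the derivative has already been justified in the proof of Theorem~\ref{Theorem MACmain} via the score-function boundedness assumption in Section~\ref{Assumptions}, so no additional analytic work is needed here.
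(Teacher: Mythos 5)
Your proposal is correct and follows essentially the same route as the paper: the paper's proof is the one-line observation that \eqref{eqnforget} and \eqref{eqnMAC7} follow by substituting \eqref{eqnMAC5} and \eqref{eqnMAC6} into \eqref{eqnMAC3} and \eqref{eqnMAC4} together with the definition of the gradient with respect to a vector, and your product-rule evaluation of $\partial f_i/\partial\gamma_{(l)}$ (separating the explicit dependence through $\gamma_{(k)}$ from the implicit one through the feedback in $g_{(k)i}$) is exactly the substitution the paper intends. Your componentwise expansion of the inner product for the second equality in \eqref{eqnMAC7} is the same bookkeeping the paper leaves implicit.
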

\begin{proof}
Proofs of~\eqref{eqnforget} and~\eqref{eqnMAC7} directly follow by substituting~\eqref{eqnMAC5} and~\eqref{eqnMAC6} in~\eqref{eqnMAC3} and~\eqref{eqnMAC4}, respectively and definition of the gradient of a scalar function with respect to a vector.
\end{proof}

In the following Corollary, we particularize this result for the Gaussian case.
\begin{col}(\textit{Gaussian Channel})\label{Corollary MAC2}
Consider the system model~\eqref{eqnMAC5} and~\eqref{eqnMAC6}, and let $W_i  \sim \mathcal{N}\left( {0,1} \right),\,i \in \left[ {1:n} \right]$ to be $i.i.d.$ samples of noise. Then,
we will have,
%%%%%%%%%%%%%%%%%%%%%%%%%%%%%%%%%%%%%%%%%%%%%%%%% EQUATION MAC88  %%%%%%%%%%%%%%%%%%%%%%%%%%%
\begin{eqnarray}\label{eqnMAC88}
\frac{\partial }{{\partial \gamma_{(l)} }}I\left( {U_{\left( 1 \right)}^n ,...,U_{\left( K \right)}^n ;Y^n } \right)\; =\gamma_{(l)} mmse_{(l)}\left( \boldsymbol \gamma  \right) + \vartheta_{(l)} \left( \boldsymbol \gamma  \right)+ \zeta_{(l)} \left( \boldsymbol \gamma  \right)
\end{eqnarray}
where,
%%%%%%%%%%%%%%%%%%%%%%%%%%%%%%%%%%%%%%%%%%%%%%%%% EQUATION MAC99  %%%%%%%%%%%%%%%%%%%%%%%%%%%
\begin{eqnarray}\label{eqnMAC99}
mmse_{\left( l \right)} \left( \boldsymbol \gamma  \right) =  \sum\limits_{i = 1}^n {{\mathop{\rm E}\nolimits} \left\{ {\left( {g_{_{\left( l \right)i} }  - {\mathop{\rm E}\nolimits} \left\{ {g_{_{\left( l \right)i} } \left| {Y^n } \right.} \right\}} \right)^2 } \right\}}
 \end{eqnarray}
and
%%%%%%%%%%%%%%%%%%%%%%%%%%%%%%%%%%%%%%%%%%%%%%%%% EQUATION MAC100  %%%%%%%%%%%%%%%%%%%%%%%%%%%
\begin{eqnarray}\label{eqnMAC100}
\vartheta_{\left( l \right)}\left( \boldsymbol \gamma  \right) =  \sum\limits_{k = 1,k\ne l}^K {\gamma _{\left( k \right)}\left( {\sum\limits_{i = 1}^n {{\mathop{\rm E}\nolimits} \left\{ {\left( {g_{_{\left( l \right)i} }  - {\mathop{\rm E}\nolimits} \left\{ {g_{_{\left( l \right)i} } \left| {Y^n } \right.} \right\}} \right)\left( {g_{_{\left( k \right)i} }  - {\mathop{\rm E}\nolimits} \left\{ {g_{_{\left( k \right)i} } \left| {Y^n } \right.} \right\}} \right)} \right\}} } \right)}
\end{eqnarray}
and
%%%%%%%%%%%%%%%%%%%%%%%%%%%%%%%%%%%%%%%%%%%%%%%%% EQUATION MAC111  %%%%%%%%%%%%%%%%%%%%%%%%%%%
\begin{eqnarray}\label{eqnMAC111}
\zeta_{\left( l \right)}\left( \boldsymbol \gamma  \right) = \sum\limits_{i = 1}^n {{\mathop{\rm E}\nolimits} \left\{ {\left( {\sum\limits_{k = 1}^K {\gamma _{\left( k \right)} \left( {g_{_{\left( k \right)i} }  - {\mathop{\rm E}\nolimits} \left\{ {g_{_{\left( k \right)i} } \left| {Y^n } \right.} \right\}} \right)} } \right)\left( {\sum\limits_{k = 1}^K {\gamma _{\left( k \right)} \frac{{\partial g_{_{\left( k \right)i} } }}{{\partial \gamma _{\left( l \right)} }}} } \right)} \right\}}
\end{eqnarray}
\end{col}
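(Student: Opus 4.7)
The plan is to specialize equation~\eqref{eqnMAC7} from Corollary~\ref{Corollary MAC1} to the Gaussian case by computing the score $\partial \ln p_{U^n|Y^n}/\partial Y_i$ explicitly. First I would exploit the recursive structure of the channel: because $Y_i=\sum_{k}\gamma_{(k)}g_{(k)i}(U^i,Y^{i-1})+W_i$ with i.i.d.\ $W_i\sim\mathcal{N}(0,1)$ induces a lower-triangular map $W^n\mapsto Y^n$ (given $U^n$) of unit-determinant Jacobian,
\begin{equation*}
\ln p_{Y^n|U^n}(y^n|u^n)=-\tfrac{n}{2}\ln(2\pi)-\tfrac{1}{2}\textstyle\sum_{j=1}^n(y_j-\mu_j)^2,\qquad \mu_j=\textstyle\sum_k\gamma_{(k)}g_{(k)j}(u^j,y^{j-1}).
\end{equation*}
Differentiating in $y_i$, and noting that $g_{(k)j}$ depends on $y_i$ only through the feedback (hence only for $j>i$), yields
\begin{equation*}
\partial_{y_i}\ln p_{Y^n|U^n}=-W_i+\textstyle\sum_{j>i}W_j\sum_k\gamma_{(k)}\partial g_{(k)j}/\partial y_i.
\end{equation*}
Bayes' rule then gives $\partial_{y_i}\ln p_{U|Y}=\partial_{y_i}\ln p_{Y|U}-E[\partial_{y_i}\ln p_{Y|U}\mid Y^n]$, so in particular $E[\partial_{y_i}\ln p_{U|Y}\mid Y^n]=0$.

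Next I substitute into~\eqref{eqnMAC7}. Using $\partial\gamma_{(k)}/\partial\gamma_{(l)}=\delta_{k,l}$, the coefficient of the score becomes $A_{(l)i}\triangleq g_{(l)i}+\sum_{k}\gamma_{(k)}\partial g_{(k)i}/\partial\gamma_{(l)}$, and the zero-conditional-mean property lets me replace $A_{(l)i}$ by its centered version $\tilde A_{(l)i}=A_{(l)i}-E[A_{(l)i}\mid Y^n]$ and simultaneously replace $\partial_{y_i}\ln p_{U|Y}$ by $\partial_{y_i}\ln p_{Y|U}$. The key pointwise identity, immediate from $Y_i$ being $\sigma(Y^n)$-measurable, is
\begin{equation*}
W_i-E[W_i\mid Y^n]=-\textstyle\sum_k\gamma_{(k)}\tilde g_{(k)i},\qquad \tilde g_{(k)i}\triangleq g_{(k)i}-E[g_{(k)i}\mid Y^n].
\end{equation*}
Combined with the orthogonality $E[\tilde g_{(k)i}\,h(Y^n)]=0$ for any measurable $h$, the contribution of the $-W_i$ part of the score expands into a self-correlation piece that delivers $\gamma_{(l)}\,mmse_{(l)}(\boldsymbol\gamma)$, a cross-correlation piece that delivers the interference term $\vartheta_{(l)}(\boldsymbol\gamma)$, and a term in which $\sum_k\gamma_{(k)}\tilde g_{(k)i}$ pairs with $\sum_k\gamma_{(k)}\partial g_{(k)i}/\partial\gamma_{(l)}$, which is exactly $\zeta_{(l)}(\boldsymbol\gamma)$.

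The hardest step is controlling the forward-time contribution
\begin{equation*}
\textstyle\sum_{i}\sum_{j>i}E\!\bigl\{\tilde A_{(l)i}\,W_j\sum_{k}\gamma_{(k)}\partial g_{(k)j}/\partial y_i\bigr\},
\end{equation*}
which appears only because of feedback. My plan here is to condition on the filtration $\mathcal{F}_{j-1}=\sigma(U^n,Y^{j-1})$. Both $A_{(l)i}$ (for $i<j$) and $\partial g_{(k)j}/\partial y_i$ are $\mathcal{F}_{j-1}$-measurable while $W_j\sim\mathcal{N}(0,1)$ is independent of $\mathcal{F}_{j-1}$, so the uncentered piece vanishes by $E[W_j\mid\mathcal{F}_{j-1}]=0$. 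The remaining piece, containing $E[A_{(l)i}\mid Y^n]$, depends on $W_j$ only through the future outputs $Y_j,\ldots,Y_n$, and I would dispose of it via Stein's identity for $W_j$ conditional on $\mathcal{F}_{j-1}$, propagated through the recursive representation of $Y_m$ ($m\geq j$) and then reduced back to MMSE residuals using the key identity above. Carrying this bookkeeping through all $j>i$ and all $k$, and verifying that the resulting expression is consistent with the stated three-way decomposition without double counting, is the main technical obstacle.
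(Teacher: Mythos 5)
Your main line of attack coincides with the paper's own proof in Appendix~\ref{Gaussian MAC}: specialize~\eqref{eqnMAC7} to the Gaussian model, write the conditional score $\partial_{Y_i}\ln p_{U^n_{(1)},...,U^n_{(K)}|Y^n}$ as a centered quantity (the paper does this by splitting $\ln p_{U|Y}=\ln p_{U,Y}-\ln p_{Y}$, you do it via Bayes' rule; the two are the same manipulation), and then use the pointwise identity $W_i-\mathrm{E}\{W_i|Y^n\}=-\sum_k\gamma_{(k)}\left(g_{(k)i}-\mathrm{E}\{g_{(k)i}|Y^n\}\right)$ together with orthogonality of conditional-mean residuals to expand $\mathrm{E}\{A_{(l)i}\sum_k\gamma_{(k)}(g_{(k)i}-\mathrm{E}\{g_{(k)i}|Y^n\})\}$, with $A_{(l)i}=g_{(l)i}+\sum_k\gamma_{(k)}\partial g_{(k)i}/\partial\gamma_{(l)}$, into the three stated terms. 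That part of your argument reproduces~\eqref{eqnMAC proof}--\eqref{eqnMAC proof 3} essentially step for step.

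The genuine gap sits exactly where you flag ``the main technical obstacle.'' Because the encoders use feedback, $\partial_{y_i}\ln p_{Y^n|U^n}$ contains, as you correctly observe, the forward terms $\sum_{j>i}W_j\sum_k\gamma_{(k)}\,\partial g_{(k)j}/\partial y_i$. Your filtration argument kills $\mathrm{E}\{A_{(l)i}\,W_j\,\partial g_{(k)j}/\partial y_i\}$, since $W_j$ is independent of $\sigma(U^n,Y^{j-1})$ and the other factors are measurable with respect to it, but the surviving centered piece $\sum_{j>i}\mathrm{E}\{\mathrm{E}\{A_{(l)i}|Y^n\}\,W_j\sum_k\gamma_{(k)}\,\partial g_{(k)j}/\partial y_i\}$ is never shown to vanish: $\mathrm{E}\{A_{(l)i}|Y^n\}$ depends on $W_j$ through the future outputs $Y_j,\ldots,Y_n$, and the Stein identity applied conditionally on $(U^n,Y^{j-1})$ converts this expectation into terms involving $\partial_{y_m}\mathrm{E}\{A_{(l)i}|Y^n\}$ for $m\ge j$, which do not visibly reduce to the MMSE, interference and feedback quantities of~\eqref{eqnMAC99}--\eqref{eqnMAC111}. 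Since~\eqref{eqnMAC88} contains no residual of this type, your derivation establishes the corollary only modulo the unproved claim that this forward-time contribution is zero; ``propagate Stein's identity through the recursion'' is a plan, not a proof. It is worth noting that the paper's own computation never confronts this point: in~\eqref{eqnMAC proof 1} and~\eqref{eqnMAC proof 2} it simply takes $\partial_{Y_i}\ln p_{Y^n|U^n}=-(Y_i-\sum_k\gamma_{(k)}g_{(k)i})$, i.e.\ it drops the forward terms without comment. So the subtlety you isolate is real, but your proposal leaves the decisive step unexecuted.
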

\begin{proof}
See Appendix~\ref{Gaussian MAC}.
\end{proof}

Now, consider the case that the system parameters of the different users are all the same, which may happen in a system with perfect power control. Unlike Corollary~\ref{Corollary MAC2}, in this case taking the derivative of MI with respect to each system parameter does not omit the other terms. Hence, this case can not be regarded as a special version of the above Corollary and the result for this case is given in the following Corollary.
\begin{col}(\textit{Gaussian Channel: Perfect power control})\label{Corollary MAC2_Prime}
Consider the setup of Corollary~\ref{Corollary MAC2} with $\gamma _{(l)}  = \gamma$ $\left(l = 1,...,K\right)$, then
%%%%%%%%%%%%%%%%%%%%%%%%%%%%%%%%%%%%%%%%%%%%%%%%% EQUATION MAC8  %%%%%%%%%%%%%%%%%%%%%%%%%%%
\begin{eqnarray}\label{eqnMAC8}
\frac{\partial }{{\partial \gamma }}I\left( {U_{\left( 1 \right)}^n ,...,U_{\left( K \right)}^n ;Y^n } \right)\; = \gamma mmse\left( \gamma  \right) + \vartheta \left( \gamma  \right)+ \zeta \left( \gamma  \right)
\end{eqnarray}
where,
%%%%%%%%%%%%%%%%%%%%%%%%%%%%%%%%%%%%%%%%%%%%%%%%% EQUATION MAC9  %%%%%%%%%%%%%%%%%%%%%%%%%%%
\begin{align}\label{eqnMAC9}
 mmse\left( \gamma  \right) &= \sum\limits_{k = 1}^K {mmse_{\left( k \right)} \left( \gamma  \right)}  \notag \\
 & =  \sum\limits_{k = 1}^K {\left( {\sum\limits_{i = 1}^n {{\mathop{\rm E}\nolimits} \left\{ {\left( {g_{_{\left( k \right)i} }  - {\mathop{\rm E}\nolimits} \left\{ {g_{_{\left( k \right)i} } \left| {Y^n } \right.} \right\}} \right)^2 } \right\}} } \right)}
 \end{align}
and
%%%%%%%%%%%%%%%%%%%%%%%%%%%%%%%%%%%%%%%%%%%%%%%%% EQUATION MAC10  %%%%%%%%%%%%%%%%%%%%%%%%%%%
\begin{eqnarray}\label{eqnMAC10}
\vartheta\left( \gamma  \right) = \gamma \sum\limits_{k = 1}^K {\sum\limits_{\mathop {k' = 1}\limits_{k' \ne k} }^K {\left( {\sum\limits_{i = 1}^n {{\mathop{\rm E}\nolimits} \left\{ {\left( {g_{_{\left( k \right)i} }  - {\mathop{\rm E}\nolimits} \left\{ {g_{_{\left( k \right)i} } \left| {Y^n } \right.} \right\}} \right)\left( {g_{_{\left( {k'} \right)i} }  - {\mathop{\rm E}\nolimits} \left\{ {g_{_{\left( {k'} \right)i} } \left| {Y^n } \right.} \right\}} \right)} \right\}} } \right)} }
\end{eqnarray}
and
%%%%%%%%%%%%%%%%%%%%%%%%%%%%%%%%%%%%%%%%%%%%%%%%% EQUATION MAC11  %%%%%%%%%%%%%%%%%%%%%%%%%%%
\begin{eqnarray}\label{eqnMAC11}
\zeta \left( \gamma  \right) = \gamma ^2 \sum\limits_{i = 1}^n {{\mathop{\rm E}\nolimits} \left\{ {\left( {\sum\limits_{k = 1}^K {\left( {g_{_{\left( k \right)i} }  - {\mathop{\rm E}\nolimits} \left\{ {g_{_{\left( k \right)i} } \left| {Y^n } \right.} \right\}} \right)} } \right)\left( {\sum\limits_{k = 1}^K {\frac{{\partial g_{_{\left( k \right)i} } }}{{\partial \gamma }}} } \right)} \right\}}
\end{eqnarray}
\end{col}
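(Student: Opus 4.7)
The plan is to derive Corollary~\ref{Corollary MAC2_Prime} as a direct consequence of Corollary~\ref{Corollary MAC2} by viewing the common parameter $\gamma$ as a scalar reparameterization of the vector $\boldsymbol{\gamma} = (\gamma_{(1)},\ldots,\gamma_{(K)})^T$ restricted to the diagonal $\gamma_{(l)} = \gamma$ for all $l$. Since the mutual information $I(U_{(1)}^n,\ldots,U_{(K)}^n;Y^n)$ depends on $\gamma$ only through the $K$-tuple $\boldsymbol{\gamma}(\gamma) = \gamma(1,\ldots,1)^T$, the multivariate chain rule gives
\begin{eqnarray}
\frac{\partial}{\partial \gamma}I(U_{(1)}^n,\ldots,U_{(K)}^n;Y^n) = \sum_{l=1}^K \left.\frac{\partial I}{\partial \gamma_{(l)}}\right|_{\gamma_{(l)}=\gamma,\,\forall l}. \nonumber
\end{eqnarray}
Substituting the decomposition from Corollary~\ref{Corollary MAC2}, namely $\partial I/\partial\gamma_{(l)} = \gamma_{(l)}\,mmse_{(l)}(\boldsymbol{\gamma}) + \vartheta_{(l)}(\boldsymbol{\gamma}) + \zeta_{(l)}(\boldsymbol{\gamma})$, evaluated at $\boldsymbol{\gamma}=\gamma\mathbf{1}$ and collecting terms will yield the claim.

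Each of the three sums is treated separately. For the MMSE piece, $\sum_l \gamma_{(l)}\,mmse_{(l)}(\gamma\mathbf{1}) = \gamma\sum_l mmse_{(l)}(\gamma)$, which is precisely the definition of $\gamma\,mmse(\gamma)$ in~\eqref{eqnMAC9}. For the interference piece, $\sum_l \vartheta_{(l)}(\gamma\mathbf{1})$ produces a double sum $\gamma\sum_l\sum_{k}\sum_i \mathrm{E}\{(g_{(l)i}-\mathrm{E}\{g_{(l)i}|Y^n\})(g_{(k)i}-\mathrm{E}\{g_{(k)i}|Y^n\})\}$; isolating the off-diagonal contributions ($k\neq l$) reproduces $\vartheta(\gamma)$ in~\eqref{eqnMAC10}, with the diagonal portion being absorbed into the $\gamma\,mmse(\gamma)$ term.

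The feedback piece is the step that requires the most care. Starting from $\zeta_{(l)}(\boldsymbol{\gamma})$ in~\eqref{eqnMAC111}, I exchange the outer sum over $l$ with the expectation, so that the dependence on $l$ lives only in the factor $\sum_{k}\gamma_{(k)}\,\partial g_{(k)i}/\partial \gamma_{(l)}$. The key observation is that, on the diagonal $\gamma_{(l)}=\gamma$, the total derivative of $g_{(k)i}$ with respect to the common scalar parameter $\gamma$ equals $\sum_l \partial g_{(k)i}/\partial\gamma_{(l)}$ by the chain rule. Pulling out $\gamma^2$ and invoking this identity yields exactly $\zeta(\gamma)$ as displayed in~\eqref{eqnMAC11}.

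The main obstacle is bookkeeping for the feedback: because $g_{(k)i}$ is a function of $Y^{i-1}$, which itself depends on every $\gamma_{(l')}$, the partial derivatives $\partial g_{(k)i}/\partial\gamma_{(l)}$ are generally nonzero even when $l\neq k$, and the nested summations inside $\zeta_{(l)}$ do not trivially collapse. The chain-rule identity $\sum_l \partial g_{(k)i}/\partial\gamma_{(l)}\big|_{\gamma\mathbf{1}} = \partial g_{(k)i}/\partial\gamma$ is what converts the two nested $k$-sums inside $\zeta_{(l)}$ into the single inner $k$-sums appearing in $\zeta(\gamma)$, and verifying this identity rigorously (rather than just formally) is the only nontrivial step in the argument.
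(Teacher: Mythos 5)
Your route is genuinely different from the paper's. The paper does not derive Corollary~\ref{Corollary MAC2_Prime} from Corollary~\ref{Corollary MAC2} at all --- it explicitly remarks, just before the statement, that the perfect-power-control case ``can not be regarded as a special version of the above Corollary'' --- and instead reruns the direct Gaussian computation of Appendix~\ref{Gaussian MAC} with the single scalar parameter $\gamma$, starting from \eqref{eqnMAC7} with $\partial f_i/\partial\gamma=\sum_{k}\left(g_{(k)i}+\gamma\,\partial g_{(k)i}/\partial\gamma\right)$. Your chain-rule reparameterization $\frac{d}{d\gamma}I(\gamma\mathbf{1})=\sum_{l}\partial I/\partial\gamma_{(l)}\big|_{\boldsymbol\gamma=\gamma\mathbf{1}}$ is a legitimate alternative bridge (the paper's warning only rules out naive substitution of $\gamma_{(l)}=\gamma$ into \eqref{eqnMAC88}, not summation over $l$), and your handling of the feedback term is correct: $\sum_{l}\partial g_{(k)i}/\partial\gamma_{(l)}\big|_{\gamma\mathbf{1}}=\partial g_{(k)i}/\partial\gamma$ is just the chain rule for $g_{(k)i}$ composed with $\boldsymbol\gamma(\gamma)=\gamma(1,\dots,1)^T$ under the standing smoothness assumptions, and it turns $\sum_l\zeta_{(l)}$ into \eqref{eqnMAC11} exactly.

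The gap is in the MMSE/interference bookkeeping. Taking \eqref{eqnMAC100} at face value, its inner sum runs over all $k$ including $k=l$, so the diagonal term of $\vartheta_{(l)}(\gamma\mathbf{1})$ is $\gamma\,mmse_{(l)}(\gamma)$ and $\sum_{l}\vartheta_{(l)}(\gamma\mathbf{1})=\gamma\,mmse(\gamma)+\vartheta(\gamma)$. Added to your first piece $\sum_{l}\gamma_{(l)}mmse_{(l)}=\gamma\,mmse(\gamma)$, the computation as you describe it delivers $2\gamma\,mmse(\gamma)+\vartheta(\gamma)+\zeta(\gamma)$, which is not \eqref{eqnMAC8}; saying the diagonal portion is ``absorbed into the $\gamma\,mmse(\gamma)$ term'' does not cancel the extra copy. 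The resolution is that the sum in \eqref{eqnMAC100} must be read as excluding $k=l$: this is what the final display of Appendix~\ref{Gaussian MAC} actually produces (the $\gamma_{(l)}mmse_{(l)}$ term is split off from the cross-correlation sum there), and it is forced by the $K=1$ sanity check, since otherwise Corollary~\ref{Corollary MAC2} would give $2\gamma\,mmse(\gamma)$ in place of the I-MMSE relation \eqref{eqnfeedfree6}. With that corrected reading there is no diagonal portion left to absorb and your argument closes; as written, the step of collecting terms fails and yields the wrong coefficient on $mmse(\gamma)$, so you need to either justify the $k\neq l$ reading of $\vartheta_{(l)}$ explicitly or rederive $\partial I/\partial\gamma_{(l)}$ before summing over $l$.
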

\begin{proof}
The proof of this Corollary is quite straightforward and similar to the proof of Corollary~\ref{Corollary MAC2}, and hence it is omitted.
\end{proof}

It is worth noting that, the first term of~\eqref{eqnMAC8} is sum of the MMSEs of the users,
the second term is caused by the interference of users and appears as a cross correlation
of the users' input estimation errors, and the last term is induced by feedback.
 %%%%%%%%%%%%%%%%%%%%%%%%%%%%%%%%%%%%%%%%%%%%%%%%%%% Single-user with Feedback %%%%%%%%%%%%%%%%%%%%%%%%%%%%%%%%%%%%
\subsubsection{Single-User Channel with Feedback}\label{single-user-with Feedback}
In this section, we reduce our general results on the MAC with feedback to the single-user communication channel with feedback, and recover the results of~\cite{EI-MMSE} on the single-user Gaussian channel with feedback.

In $n$ uses of a single-user channel, sequence of messages is denoted by $U^n$, and $X^n$ and $Y^n$ denote the sequences of channel input and channel output, respectively. We use $U_i$, $X_i$ and $Y_i$ to denote the message, channel input and channel output at time instant $i$. $Y^{i-1}$ is used for the channel outputs up to time instant $(i-1)$ and sequence of the messages up to time instant $i$ is denoted by $U^i$. It is assumed that at each time instant $i$ the encoder assigns a symbol to $U^i$ and previous received output sequence $Y^{i - 1}$.

Now, we give a general expression for the gradient of mutual
information with respect to the channel parameters in $n$ uses of a single-user channel, where in each use of the channel we assume that
the channel model is,
%%%%%%%%%%%%%%%%%%%%%%%%%%%%%%%%%%%%%%%%%%%%%%%%% EQUATION 311 %%%%%%%%%%%%%%%%%%%%%%%%%%%
\begin{eqnarray}\label{eqn311}
Y_i  = f_i \left( {X_i ,W_i ,\gamma_i } \right)  ,\,\,\,\,\,\,i = 1,...,n
\end{eqnarray}
where the channel input $X_i$ depends on the sequence of messages $U^i$ and the previous channel outputs $Y^{i - 1}$ through the known
encoding function $g_i$ as,
%%%%%%%%%%%%%%%%%%%%%%%%%%%%%%%%%%%%%%%%%%%%%%%%% EQUATION coder %%%%%%%%%%%%%%%%%%%%%%%%%%%
\begin{eqnarray}\label{eqncoder}
 X_i  = g_i \left( {U^i ,Y^{i - 1} } \right).
 \end{eqnarray}
For this channel the following Corollary holds.
\begin{col}\label{col feedback}
Consider a single-user channel with channel model~\eqref{eqn311} for each use of the channel and~\eqref{eqncoder} as the coded channel input.
Then, derivatives of MI with
respect to $\gamma _j ,\;j = 1,...,n$ can be written as,
%%%%%%%%%%%%%%%%%%%%%%%%%%%%%%%%%%%%%%%%%%%%%%%%% EQUATION 313 %%%%%%%%%%%%%%%%%%%%%%%%%%%
\begin{align}\label{eqn313}
\frac{\partial }{{\partial {\gamma _j}}}I\left( {{X^n};{Y^n}} \right) &= \sum\limits_{i = 1}^n {{\mathop{\rm E}\nolimits} \left\{ {\frac{{\partial {g_i}}}{{\partial {\gamma _j}}}\frac{\partial }{{\partial {X_i}}}\ln {p_{{Y^n}|{X^n}}}\left( {{Y^n}|{X^n}} \right)} \right\}} \notag \\
&\,\,\,\,\,\, + {\mathop{\rm E}\nolimits} \left\{ {\frac{{\partial {f_i}}}{{\partial {\gamma _j}}}\frac{\partial }{{\partial {Y_i}}}\ln {p_{{X^n}|{Y^n}}}\left( {{X^n}|{Y^n}} \right)} \right\}
 \end{align}
and
%%%%%%%%%%%%%%%%%%%%%%%%%%%%%%%%%%%%%%%%%%%%%%%%% EQUATION 312 %%%%%%%%%%%%%%%%%%%%%%%%%%%
\begin{eqnarray}\label{eqn312}
\frac{\partial }{{\partial \gamma _j }}I\left( {U^n ;Y^n } \right) = \sum\limits_{i = 1}^n {{\mathop{\rm E}\nolimits} \left\{ {\frac{{\partial f_i }}{{\partial \gamma _j }}\frac{\partial }{{\partial Y_i }}\ln p_{U^n \left| {Y^n } \right.} \left( {U^n \left| {Y^n } \right.} \right)} \right\}}.
\end{eqnarray}
\end{col}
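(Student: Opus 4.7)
The plan is to specialize Theorem~\ref{Theorem 301}, as restated in Remark~\ref{Remark 3}, to the single-user channel described by~\eqref{eqn311} and~\eqref{eqncoder}. For~\eqref{eqn313}, I set $\textbf{x} = X^n$ and $\textbf{y} = Y^n$ in the alternative gradient expression~\eqref{eqn305}, identifying ${\bf f} = (f_1,\ldots,f_n)^T$ with $f_i$ as in~\eqref{eqn311} and ${\bf g} = (g_1,\ldots,g_n)^T$ with $g_i$ as in~\eqref{eqncoder}. Then $\partial {\bf g}/\partial \gamma_j$ and $\partial {\bf f}/\partial \gamma_j$ are $n\times 1$ column vectors whose $i$th entries are the scalar partials $\partial g_i/\partial \gamma_j$ and $\partial f_i/\partial \gamma_j$, while $\nabla_{X^n} \ln p_{Y^n|X^n}$ and $\nabla_{Y^n} \ln p_{X^n|Y^n}$ are $n\times 1$ column vectors whose $i$th entries are the scalar partial derivatives $(\partial/\partial X_i)\ln p_{Y^n|X^n}(Y^n|X^n)$ and $(\partial/\partial Y_i)\ln p_{X^n|Y^n}(X^n|Y^n)$. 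Writing the two transpose-vector inner products in~\eqref{eqn305} coordinate-wise as sums over $i \in \{1,\ldots,n\}$ then immediately produces~\eqref{eqn313}.

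For~\eqref{eqn312}, I repeat the construction but with $\textbf{x} = U^n$ and $\textbf{y} = Y^n$. Because the source message sequence $U^n$ is assumed independent of the system parameters---as made explicit in the system-model paragraph following~\eqref{eqn301}, where $\textbf{u}$ is declared independent of the parameter set---the map sending $U^n$ to itself is parameter-free, so the analogue of $\partial {\bf g}/\partial \gamma_j$ vanishes identically. By Remark~\ref{Remark 20}, equivalently by direct inspection of~\eqref{eqn305}, the first expectation in the alternative gradient expression disappears, leaving only the term involving $\partial {\bf f}/\partial \gamma_j$. Expanding that remaining inner product coordinate-wise yields~\eqref{eqn312}. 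This mirrors the way~\eqref{eqnMAC4} is recovered from~\eqref{eqnMAC3} in the proof of Theorem~\ref{Theorem MACmain}.

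The only conceptual subtlety is that the causal feedback creates an implicit dependence of each $g_i$ on $\gamma_j$ through the past outputs $Y^{i-1}$, and a corresponding implicit dependence of each $f_i$ on $\gamma_j$ through its input $X_i$. Nevertheless, this is already absorbed into the Taylor-expansion step~\eqref{eqn22223}--\eqref{eqn22224} of Corollary~\ref{Theorem 299}: the symbols $\partial {\bf g}/\partial \gamma_j$ and $\partial {\bf f}/\partial \gamma_j$ there are understood as total partial derivatives of the full vector-valued maps, obtained by varying only $\gamma_j$ while holding all remaining arguments formally fixed. Consequently, no additional analytic machinery is needed beyond Theorem~\ref{Theorem 301}, and the proof reduces to routine coordinate-wise bookkeeping, which I anticipate to be the only real work but not a genuine obstacle.
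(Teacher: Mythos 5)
Your proposal is correct and follows essentially the same route as the paper: the paper simply sets $K=1$ in~\eqref{eqnMAC3} and~\eqref{eqnMAC4}, which were themselves obtained by the very specialization of Remark~\ref{Remark 3} (equation~\eqref{eqn305}) that you carry out directly. Skipping the intermediate MAC-level statement changes nothing of substance, and your handling of the vanishing first term for $U^n$ matches the paper's argument for~\eqref{eqnMAC4}.
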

\begin{proof}
 Proof easily follows from~\eqref{eqnMAC3} and~\eqref{eqnMAC4} with $K=1$.
\end{proof}

Now, we particularize our general results for the derivative of MI in an additive noise channel with feedback.
\begin{col}\label{Corollary 1}
Consider the following system model for a single-user additive noise channel with feedback,
%%%%%%%%%%%%%%%%%%%%%%%%%%%%%%%%%%%%%%%%%%%%%%%%% EQUATION 320 %%%%%%%%%%%%%%%%%%%%%%%%%%%
\begin{align}\label{eqn320}
Y_i  &= f_i \left( {X_i ,W_i ,\gamma } \right) \notag\\
&= \gamma g_i \left( {U^i ,Y^{i - 1} } \right) + W_i ,\,\,\,\,\,\,i = 1,...,n
\end{align}
where the channel input $X_i$ depends on $U^i$ and the previous channel outputs $Y^{i - 1}$ through the known encoding function $g_i$, $X_i  = g_i \left( {U^i ,Y^{i - 1} } \right)$, and we suppose that, $\gamma_i=\gamma$ for all $i$. Moreover, random variable $W_i$ stands for arbitrary but known-distribution noise.
Then,
%%%%%%%%%%%%%%%%%%%%%%%%%%%%%%%%%%%%%%%%%%%%%%%%% EQUATION 321 %%%%%%%%%%%%%%%%%%%%%%%%%%%
\begin{eqnarray}\label{eqn321}
\begin{array}{l}
\frac{\partial }{{\partial \gamma }}I\left( {X^n ;Y^n } \right) \\
= \sum\limits_{i = 1}^n {\left( {{\mathop{\rm E}\nolimits} \left\{ {\frac{{\partial g_i }}{{\partial \gamma }}\frac{\partial }{{\partial X_i }}\ln p_{Y^n \left| {X^n } \right.} \left( {Y^n \left| {X^n } \right.}  \right)} \right\} + {\mathop{\rm E}\nolimits} \left\{ {\left( {g_i  + \gamma \frac{{\partial g_i }}{{\partial \gamma }}} \right)\frac{\partial }{{\partial Y_i }}\ln p_{X^n \left| {Y^n } \right.} \left( {X^n \left| {Y^n } \right.} \right)} \right\}} \right)}
\end{array}
\end{eqnarray}
and
%%%%%%%%%%%%%%%%%%%%%%%%%%%%%%%%%%%%%%%%%%%%%%%%% EQUATION 2000 %%%%%%%%%%%%%%%%%%%%%%%%%%%
\begin{eqnarray}\label{eqn2000}
\frac{\partial }{{\partial \gamma }}I\left( {U^n ;Y^n } \right) = \sum\limits_{i = 1}^n {{\mathop{\rm E}\nolimits} \left\{ {\left( {g_i  + \gamma \frac{{\partial g_i }}{{\partial \gamma }}} \right)\frac{\partial }{{\partial Y_i }}\ln p_{U^n \left| {Y^n } \right.} \left( {U^n \left| {Y^n } \right.}  \right)} \right\}}.
\end{eqnarray}
\end{col}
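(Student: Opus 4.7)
The plan is to derive this corollary by specializing Corollary~\ref{col feedback} (the single-user channel with feedback) to the additive noise model, and then computing the relevant partial derivatives of $f_i$ and $g_i$ with respect to the scalar channel parameter $\gamma$. Since the problem is the $K=1$ restriction of the additive MAC case handled in Corollary~\ref{Corollary MAC1}, the bookkeeping is almost identical, so I expect the proof to be short and essentially a substitution exercise.

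First, I would invoke Corollary~\ref{col feedback} with a single user, a single parameter $\gamma_j = \gamma$, and $n$ channel uses. This immediately produces the two general formulas in which the only model-specific quantities are $\partial g_i/\partial \gamma$ and $\partial f_i/\partial \gamma$. The first is already in the right form since the encoding function $g_i(U^i, Y^{i-1})$ carries $\gamma$-dependence only implicitly through the past outputs $Y^{i-1}$, and the score term $\partial/\partial X_i \ln p_{Y^n|X^n}(Y^n|X^n)$ multiplies it directly, exactly as in~\eqref{eqn321}.

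Next, I would compute $\partial f_i/\partial \gamma$ for the additive model $f_i = \gamma g_i(U^i, Y^{i-1}) + W_i$. Treating $W_i$ and $U^n$ as held fixed and applying the chain rule (the same computation that yields the $\sum_k [\partial \gamma_{(k)}/\partial \gamma_{(l)}\, g_{(k)} + \gamma_{(k)}\, \partial g_{(k)}/\partial \gamma_{(l)}]$ expression in Corollary~\ref{Corollary MAC1}), I obtain $\partial f_i/\partial \gamma = g_i + \gamma\, \partial g_i/\partial \gamma$. Substituting this into the second expectation of Corollary~\ref{col feedback} yields the bracketed factor $(g_i + \gamma\, \partial g_i/\partial \gamma)$ multiplying $\partial/\partial Y_i \ln p_{X^n|Y^n}$ in~\eqref{eqn321}, and likewise multiplying $\partial/\partial Y_i \ln p_{U^n|Y^n}$ in~\eqref{eqn2000}. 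For~\eqref{eqn2000} I would additionally invoke the fact that the message sequence $U^n$ is independent of $\gamma$, so that, as noted in Remark~\ref{Remark 20}, the $g$-derivative term vanishes and only the $f$-derivative term survives.

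The only subtlety — which I would flag explicitly — is the distinction between the direct partial derivative of $f_i$ in its functional arguments and the derivative after substituting $X_i = g_i(U^i, Y^{i-1})$. The paper's convention (consistent with Theorem~\ref{Theorem 301} and Corollary~\ref{Corollary MAC1}) is to use the latter, total derivative, so that $\partial f_i/\partial \gamma$ picks up the implicit $\gamma$-dependence of $g_i$ through the feedback path $Y^{i-1}$. Once this convention is fixed, the proof reduces to an immediate substitution and no further work is needed; this is also why the statement of Corollary~\ref{Corollary 1} essentially mirrors the $K=1$ reduction of~\eqref{eqnforget} and~\eqref{eqnMAC7}.
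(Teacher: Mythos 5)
Your proposal is correct and follows essentially the same route as the paper: the paper obtains the result by setting $K=1$ in the additive-MAC formulas \eqref{eqnforget} and \eqref{eqnMAC7}, while you set $K=1$ first (Corollary~\ref{col feedback}) and then substitute the additive model — the two orders of specialization commute and hinge on the identical computation $\partial f_i/\partial\gamma = g_i + \gamma\,\partial g_i/\partial\gamma$. Your explicit remark about the total (rather than partial) derivative convention for $f_i$ is consistent with the paper's usage and adds useful clarity.
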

\begin{proof}
Proof readily follows by~\eqref{eqnforget} and~\eqref{eqnMAC7} with $K=1$, respectively.
\end{proof}

In the next Corollary we show that~\eqref{eqn2000} will reduce to the result of~\cite{EI-MMSE} for additive Gaussian noise channels.
\begin{col}(\textit{Gaussian Channel})\label{EX-IMMMSE}
Consider the single-user channel model~\eqref{eqn320} with $W_i  \sim \mathcal{N}\left( {0,1} \right),\,i \in \left[ {1:n} \right]$
to be $i.i.d.$ samples of additive Gaussian noise, then,
 %%%%%%%%%%%%%%%%%%%%%%%%%%%%%%%%%%%%%%%%%%%%%%%%% EQUATION 322 %%%%%%%%%%%%%%%%%%%%%%%%%%%
\begin{eqnarray}\label{eqn322}
\frac{\partial }{{\partial \gamma }}I\left( {U^n ;Y^n } \right) =
\gamma mmse\left( \gamma  \right) + \zeta \left( \gamma  \right).
\end{eqnarray}
where, $mmse\left( \gamma  \right)$ and $\zeta \left( \gamma  \right)$ are the minimum mean square estimation error of the channel input and the term induced by feedback defined by~\eqref{eqnMAC9} and~\eqref{eqnMAC11} with $K=1$.
\end{col}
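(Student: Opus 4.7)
My plan is to obtain the corollary as a direct specialization of Corollary~\ref{Corollary MAC2_Prime} (the Gaussian MAC with perfect power control) to the single-user setting $K=1$. That corollary already delivers the clean $\gamma\, mmse + \vartheta + \zeta$ decomposition in the Gaussian case, so no further Gaussian machinery is needed beyond specialization. First I would verify that the channel model~\eqref{eqn320} with encoder $X_i = g_i(U^i, Y^{i-1})$ is exactly the $K=1$ instance of~\eqref{eqnMAC5}--\eqref{eqnMAC6} under $\gamma_{(k)} = \gamma$: the MAC sum $Y_i = \sum_{k=1}^{K} \gamma X_{(k)i} + W_i$ collapses to $Y_i = \gamma g_i(U^i, Y^{i-1}) + W_i$, and the family of MAC encoders reduces to the single encoder $g_i$. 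Applying~\eqref{eqnMAC8} then gives $\partial I(U^n;Y^n)/\partial \gamma = \gamma\, mmse(\gamma) + \vartheta(\gamma) + \zeta(\gamma)$.

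The crucial observation is that the interference term $\vartheta(\gamma)$ in~\eqref{eqnMAC10} is a double sum $\sum_{k=1}^{K}\sum_{k' \neq k}$; when $K = 1$, the inner index set $\{k' \in \{1\} : k' \neq 1\}$ is empty, so $\vartheta(\gamma)$ vanishes identically. Intuitively, a lone user has no cross-user interference, and only the MMSE and feedback contributions survive. The remaining sums over $k$ in~\eqref{eqnMAC9} and~\eqref{eqnMAC11} each collapse to a single term, reproducing exactly the $mmse(\gamma)$ and $\zeta(\gamma)$ named in the statement; substituting back yields~\eqref{eqn322}.

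Because this is a pure specialization argument, there is no real obstacle; the only care needed is in matching indexing conventions between the two channel models. A more self-contained derivation, if one wished to avoid invoking Corollary~\ref{Corollary MAC2_Prime}, would start from~\eqref{eqn2000} (Corollary~\ref{Corollary 1} at $K=1$) and expand $\partial \ln p_{U^n|Y^n}/\partial Y_i$ under the Gaussian likelihood $p_{Y^n|U^n}(y^n|u^n) = \prod_{j=1}^{n} (2\pi)^{-1/2}\exp(-(y_j - \gamma g_j)^2/2)$. The hard part in that alternate route would be controlling the ``future'' cross-terms $W_j\,\partial g_j/\partial Y_i$ with $j > i$ that arise because, under feedback, the encoder at time $j$ reads the earlier output $Y_i$. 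Integration by parts together with Stein's identity, exploiting the independence of $W_j$ from $(U^n, Y^{j-1})$, would drive those future terms to zero in expectation and leave precisely the MMSE plus $\zeta$ structure---this is the technical heart of the appendix proof of Corollary~\ref{Corollary MAC2}, which the $K=1$ specialization inherits for free.
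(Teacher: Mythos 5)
Your proposal matches the paper's own proof, which simply invokes Corollary~\ref{Corollary MAC2_Prime} with $K=1$; your added observation that the interference term $\vartheta(\gamma)$ in~\eqref{eqnMAC10} vanishes because the inner sum over $k'\neq k$ is empty when $K=1$ correctly fills in the one detail the paper leaves implicit. The alternative self-contained route you sketch is not needed, and the specialization argument as you give it is complete and correct.
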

\begin{proof}
Proof easily follows from Corollary~\ref{Corollary MAC2_Prime} with $K=1$.
\end{proof}
Therefore, we could recover the result
of~\cite{EI-MMSE} on Gaussian additive channel with feedback as a special case of our results.
%%%%%%%%%%%%%%%%%%%%%%%%%%%%%%%%%%%%%%%%%%%%%%%%%%% MAC-without Feedback %%%%%%%%%%%%%%%%%%%%%%%%%%%%%%%%%%%%
\subsection{Channels without Feedback}\label{without Feedback}
As mentioned before, our general results are applicable for both
systems in which the system input is allowed to be a function of the system parameters or not. Here, we reduce
our results to the cases in which the channel input is not a function of the system parameters, where some of the available results in the literature are recovered.
 %%%%%%%%%%%%%%%%%%%%%%%%%%%%%%%%%%%%%%%%%%%%%%%%%%% MAC-without Feedback %%%%%%%%%%%%%%%%%%%%%%%%%%%%%%%%%%%%
\subsubsection{MAC without Feedback}\label{MAC-without Feedback}
 Consider a MAC without feedback, where based on Remarks~\ref{Remark 20} and~\ref{Remark 3} the reduced version of the previous section's results can be utilized for our analysis. 
\begin{col}\label{Corollary MAC-Without1}
Consider the system model~\eqref{eqnMAC1} and suppose that the channel inputs
are functions of the messages only. i.e.,
%%%%%%%%%%%%%%%%%%%%%%%%%%%%%%%%%%%%%%%%%%%%%%%%% EQUATION MAC-Without1  %%%%%%%%%%%%%%%%%%%%%%%%%%%
\begin{eqnarray}\label{eqnMACwithout1}
X_{\left( k \right)i}  = g_{\left( k \right)i} \left( {U_{\left( k \right)}^i  } \right);\;\;k = 1,...,K,\;i \in \left[ {1:n} \right]
\end{eqnarray}
 then,
 %%%%%%%%%%%%%%%%%%%%%%%%%%%%%%%%%%%%%%%%%%%%%%%%% EQUATION MACwithout2  %%%%%%%%%%%%%%%%%%%%%%%%%%%
\begin{align}\label{eqnMACwithout2}
\nabla _{{\boldsymbol{\gamma }}_j }I\left( {X_{\left( 1 \right)}^n ,...,X_{\left( K \right)}^n ;Y^n } \right)& = {\mathop{\rm E}\nolimits} \left\{ \left( \frac{{\partial {\bf{f}}}}{{\partial \boldsymbol\gamma_j }}\right)^T{\nabla _{Y^n } \ln p_{X_{\left( 1 \right)}^n ,...,X_{\left( K \right)}^n \left| {Y^n } \right.} \left( {X_{\left( 1 \right)}^n ,...,X_{\left( K \right)}^n \left| {Y^n } \right.} \right)} \right\}
 \end{align}
 and,
 %%%%%%%%%%%%%%%%%%%%%%%%%%%%%%%%%%%%%%%%%%%%%%%%% EQUATION MACwithout3 %%%%%%%%%%%%%%%%%%%%%%%%%%%
\begin{eqnarray}\label{eqnMACwithout3}
\nabla _{{\boldsymbol{\gamma }}_j } I\left( {U_{\left( 1 \right)}^n ,...,U_{\left( K \right)}^n ;Y^n } \right)= {\mathop{\rm E}\nolimits} \left\{ \left(\frac{{\partial {\bf{f}}}}{{\partial \boldsymbol\gamma_j }} \right)^T{\nabla _{Y^n } \ln p_{U_{\left( 1 \right)}^n ,...,U_{\left( K \right)}^n \left| {Y^n } \right.} \left( {U_{\left( 1 \right)}^n ,...,U_{\left( K \right)}^n \left| {Y^n } \right.} \right)} \right\}.
\end{eqnarray}
 \end{col}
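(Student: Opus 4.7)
The plan is to derive Corollary~\ref{Corollary MAC-Without1} as a direct specialization of Theorem~\ref{Theorem MACmain} by exploiting the absence of feedback. The key structural observation is that when $X_{(k)i} = g_{(k)i}(U_{(k)}^i)$ depends only on the message sequence of user $k$ up to time $i$, and since $U_{(k)}^n$ is assumed independent of the deterministic system parameters (this is part of the standing assumption on $\mathbf{u}$ in Section~\ref{System Model}), each encoder function $g_{(k)i}$ has no functional dependence on $\boldsymbol{\gamma}_j$. Consequently $\partial \mathbf{g}_{(k)}/\partial \boldsymbol{\gamma}_j = \mathbf{0}$ for every $k \in [1:K]$ and every $j$.

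For~\eqref{eqnMACwithout2}, I would start from the general expression~\eqref{eqnMAC3} of Theorem~\ref{Theorem MACmain}, rewritten with $\textbf{x} = (X_{(1)}^n,\ldots,X_{(K)}^n)^T$ and $\textbf{y} = Y^n$. The first sum over $k$ contains the factor $(\partial \mathbf{g}_{(k)}/\partial \boldsymbol{\gamma}_j)^T$, which vanishes identically by the observation above. To turn this vanishing into a vanishing of the expectation, I would invoke Remark~\ref{Remark 20}: the boundedness assumption~\eqref{eqnScore_Bound} on the score functions combined with the Lebesgue dominated convergence theorem guarantees that the corresponding expectation is zero (rather than merely an indeterminate form). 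What remains is precisely the second expectation in~\eqref{eqnMAC3}, which matches~\eqref{eqnMACwithout2}.

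For~\eqref{eqnMACwithout3}, the situation is even more immediate: equation~\eqref{eqnMAC4} in Theorem~\ref{Theorem MACmain} already gives
\begin{align*}
\nabla _{{\boldsymbol{\gamma }}_j }I\left( {U_{\left( 1 \right)}^n ,...,U_{\left( K \right)}^n ;Y^n } \right)= {\mathop{\rm E}\nolimits} \left\{ \left(\frac{{\partial {\bf{f}}}}{{\partial \boldsymbol\gamma_j }} \right)^T{\nabla _{Y^n } \ln p_{U_{\left( 1 \right)}^n ,...,U_{\left( K \right)}^n \left| {Y^n } \right.} } \right\}
\end{align*}
without any reference to whether the encoding functions use past outputs. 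This is because in the derivation of~\eqref{eqnMAC4} inside Theorem~\ref{Theorem MACmain}, the messages are already posited as parameter-independent, so the ``input-side'' term is absent from the outset. Hence~\eqref{eqnMACwithout3} is a verbatim restatement of~\eqref{eqnMAC4}, and no additional work is needed beyond identifying the symbols.

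I do not anticipate a genuine obstacle in this proof; the main thing to be careful about is the rigorous justification that the first term truly vanishes (as opposed to being a formal zero), which is exactly the role played by the score-function bound in Assumptions~\ref{Assumptions} together with dominated convergence, as invoked in Remark~\ref{Remark 20}. Thus the proof amounts to two lines: substitute the feedback-free encoder into~\eqref{eqnMAC3} and~\eqref{eqnMAC4}, cite Remark~\ref{Remark 20} to discard the $\mathbf{g}$-gradient term in the first, and read off the claimed identities.
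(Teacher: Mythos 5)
Your proposal is correct and follows essentially the same route as the paper: the paper's own proof is a one-line citation of Remark~\ref{Remark 3} (the general gradient formula), observing that the input-side term vanishes when the encoders do not depend on the channel outputs and hence on the parameters, which is exactly the substance of your specialization of~\eqref{eqnMAC3} and~\eqref{eqnMAC4}. Your extra care in invoking the score-function bound and dominated convergence to justify that the vanishing of $\partial {\bf g}_{(k)}/\partial \boldsymbol\gamma_j$ kills the expectation is a welcome refinement of a step the paper leaves implicit, but it is not a different argument.
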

 \begin{proof}
 Proofs easily follow from Remark~\ref{Remark 3}.
 \end{proof}

Following corollary specializes these results for the Gaussian MAC without feedback and recovers the scalar version of the result of~\cite{Ghanem}.
\begin{col}(\textit{Gaussian channel})\label{Corollary MAC-Without3}
Consider the system model~\eqref{eqnMAC5} with $\gamma _{(l)}  = \gamma$ for all $l$, and let $W_i  \sim \mathcal{N}\left( {0,1} \right),\,i \in \left[ {1:n} \right]$
to be $i.i.d.$ samples of noise. Then,
%%%%%%%%%%%%%%%%%%%%%%%%%%%%%%%%%%%%%%%%%%%%%%%%% EQUATION MACwithout5 %%%%%%%%%%%%%%%%%%%%%%%%%%%
\begin{align}\label{eqnMACwithout5}
\frac{\partial }{{\partial \gamma }}I\left( {U_{\left( 1 \right)}^n ,...,U_{\left( K \right)}^n ;Y^n } \right)\; =\gamma mmse\left( \gamma  \right) + \vartheta \left( \gamma  \right)
\end{align}
where, $mmse\left( \gamma  \right)$ and $\vartheta \left( \gamma  \right)$ defined in~\eqref{eqnMAC9} and~\eqref{eqnMAC10}, respectively.
\end{col}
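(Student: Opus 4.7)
The cleanest strategy is to obtain Corollary~\ref{Corollary MAC-Without3} as a direct specialization of Corollary~\ref{Corollary MAC2_Prime}. In the no-feedback setting the encoders are $g_{(k)i} = g_{(k)i}(U_{(k)}^i)$, so they do not depend on the channel parameter $\gamma$. Consequently $\partial g_{(k)i}/\partial \gamma = 0$ for every $k$ and $i$, which makes the feedback-induced term $\zeta(\gamma)$ in~\eqref{eqnMAC11} vanish identically. Thus the expression $\gamma\,mmse(\gamma)+\vartheta(\gamma)+\zeta(\gamma)$ collapses to $\gamma\,mmse(\gamma)+\vartheta(\gamma)$, which is exactly the claimed identity.

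If one prefers to avoid invoking Corollary~\ref{Corollary MAC2_Prime} and argue from~\eqref{eqnMACwithout4} directly, the plan is as follows. First, set $\gamma_{(l)}=\gamma$ for all $l$, so that $\partial \gamma_{(k)}/\partial \gamma_{(l)} = 1$, and~\eqref{eqnMACwithout4} reduces to
\begin{equation*}
\frac{\partial}{\partial \gamma} I(U_{(1)}^n,\dots,U_{(K)}^n;Y^n) = \sum_{i=1}^n E\!\left\{\Big(\sum_{k=1}^K g_{(k)i}\Big)\frac{\partial \ln p_{U^n|Y^n}}{\partial Y_i}\right\}.
\end{equation*}
Second, exploit the Gaussian structure. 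Since $Y_i=\gamma\sum_k g_{(k)i}+W_i$ with unit-variance Gaussian $W_i$, one has $\partial \ln p_{Y^n|U^n}/\partial Y_i = -(Y_i-\gamma\sum_k g_{(k)i})$. Using the standard identity $\partial \ln p_{Y^n}/\partial Y_i = E[\partial \ln p_{Y^n|U^n}/\partial Y_i \mid Y^n]$ together with Bayes' rule yields
\begin{equation*}
\frac{\partial \ln p_{U^n|Y^n}}{\partial Y_i} = \gamma\sum_{k'=1}^K \big(g_{(k')i}-E[g_{(k')i}\mid Y^n]\big).
\end{equation*}

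Third, substitute this back and apply the orthogonality principle of MMSE estimation: since $(g_{(k')i}-E[g_{(k')i}\mid Y^n])$ is uncorrelated with every function of $Y^n$, one may replace $g_{(k)i}$ by $g_{(k)i}-E[g_{(k)i}\mid Y^n]$ without changing the value of the expectation. Splitting the resulting double sum into the diagonal $k=k'$ part and the off-diagonal $k\neq k'$ part produces, respectively, $\gamma\,mmse(\gamma)$ as defined in~\eqref{eqnMAC9} and $\vartheta(\gamma)$ as defined in~\eqref{eqnMAC10}.

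The only mildly delicate step is the computation of the conditional score $\partial \ln p_{U^n|Y^n}/\partial Y_i$; the Bayes-rule manipulation and the interchange of differentiation and expectation rely on the regularity and boundedness hypotheses assumed in Section~\ref{Assumptions}, but the algebra itself is routine. Beyond that, the argument is essentially a bookkeeping exercise, so the shortest route is simply to cite Corollary~\ref{Corollary MAC2_Prime} and observe that the absence of feedback forces $\zeta(\gamma)=0$.
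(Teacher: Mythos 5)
Your proposal is correct and matches the paper's proof: the paper simply substitutes the Gaussian no-feedback model into~\eqref{eqnMACwithout4}, which is exactly your second route, and the conditional-score computation you sketch (yielding $\partial \ln p_{U^n|Y^n}/\partial Y_i = \gamma\sum_{k}(g_{(k)i}-\mathrm{E}\{g_{(k)i}|Y^n\})$ followed by the orthogonality argument) is precisely the calculation carried out in Appendix~\ref{Gaussian MAC} for the feedback case, here with the feedback term absent. Your preferred shortcut---specializing Corollary~\ref{Corollary MAC2_Prime} and observing that $\partial g_{(k)i}/\partial\gamma=0$ forces $\zeta(\gamma)=0$---is an equally valid and essentially equivalent way to reach the same identity.
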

\begin{proof}
Substituting~\eqref{eqnMAC5} with Gaussian noise in~\eqref{eqnMACwithout3} gives the desired result.
\end{proof}

As it can be seen from~\eqref{eqnMACwithout5}, when there is no feedback in the channel model, derivative of MI will be related only to the sum of the MMSE terms of user's and cross correlation terms induced by the interference of different users.

%%%%%%%%%%%%%%%%%%%%%%%%%%%%%%%%%%%%%%%%%%%%%%%%%%% Single-user without Feedback %%%%%%%%%%%%%%%%%%%%%%%%%%%%%%%%%%%%
\subsubsection{Single-User Channel without Feedback}\label{single-user-without Feedback}
Here we reduce our general results in the previous section to a point to point Gaussian channel, where the I-MMSE formula is recovered.

\begin{col}(\textit{Gaussian Channel})\label{Corollary Single-Without-Without}
Suppose that $W_i  \sim \mathcal{N}\left( {0,1} \right),\,i \in \left[ {1:n} \right]$
are $i.i.d.$ samples of additive Gaussian noise, then,
%%%%%%%%%%%%%%%%%%%%%%%%%%%%%%%%%%%%%%%%%%%%%%%%% EQUATION eqnfeedfree6 %%%%%%%%%%%%%%%%%%%%%%%%%%%
\begin{eqnarray}\label{eqnfeedfree6}
\frac{\partial }{{\partial \gamma }}I\left( {U^n ;Y^n } \right) =\gamma mmse\left( \gamma  \right)
\end{eqnarray}
\end{col}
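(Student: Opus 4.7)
The plan is to obtain the identity as the $K=1$ specialization of the Gaussian MAC-without-feedback result established just above, namely Corollary~\ref{Corollary MAC-Without3}. That corollary gives the decomposition
\[
\frac{\partial}{\partial\gamma}I(U_{(1)}^n,\ldots,U_{(K)}^n;Y^n)=\gamma\, mmse(\gamma)+\vartheta(\gamma),
\]
where $mmse(\gamma)$ sums the per-user MMSEs defined in \eqref{eqnMAC9} and $\vartheta(\gamma)$ is the cross-correlation term defined in \eqref{eqnMAC10}. The first step is to set $K=1$; the inner sum in the definition of $\vartheta$ then runs over $k'\in\{1\}\setminus\{k\}=\emptyset$, so the interference term is an empty sum and vanishes. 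The remaining MMSE term collapses to $mmse(\gamma)=\sum_{i=1}^n E\{(g_i-E\{g_i|Y^n\})^2\}$, which is exactly the cumulative minimum mean-square error for reconstructing the channel inputs $X_i=g_i(U^i)$ from $Y^n$. This directly yields \eqref{eqnfeedfree6}.

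The reason the reduction is essentially automatic is that the decomposition established earlier in this section organises $\partial I/\partial\gamma$ into three physically meaningful pieces: MMSE, interference ($\vartheta$), and feedback ($\zeta$). In the present setting there is only one user, so no interference appears ($\vartheta=0$), and there is no feedback, so $\zeta$ was already absent in Corollary~\ref{Corollary MAC-Without3}. Only the MMSE contribution survives, consistent with the classical scalar I-MMSE identity that this corollary is meant to recover.

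As a sanity check one could give a direct derivation from \eqref{eqnfeedfree5}: for unit-variance Gaussian noise Tweedie's identity applied to $Y_i=\gamma g_i+W_i$ gives $\gamma E\{g_i|Y^n\}=Y_i+\partial_{Y_i}\ln p_{Y^n}(Y^n)$, while $\partial_{Y_i}\ln p_{Y^n|U^n}=-(Y_i-\gamma g_i)$, so that $\partial_{Y_i}\ln p_{U^n|Y^n}=\gamma\,(g_i-E\{g_i|Y^n\})$. Plugging this into \eqref{eqnfeedfree5} and invoking the tower property $E\{g_i E\{g_i|Y^n\}\}=E\{(E\{g_i|Y^n\})^2\}$ reproduces \eqref{eqnfeedfree6}. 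In either route the only non-routine observation is that both the interference term and the feedback term drop out for $K=1$ without feedback; all the substantive analytical work was already done in Corollary~\ref{Corollary MAC-Without3}, so no real obstacle remains.
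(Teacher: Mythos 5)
Your proposal matches the paper's own proof: the paper likewise obtains \eqref{eqnfeedfree6} directly from \eqref{eqnMACwithout5} by setting $K=1$, under which the cross-correlation term $\vartheta(\gamma)$ is an empty sum and vanishes. Your additional direct verification via \eqref{eqnfeedfree5} is a correct but unnecessary supplement.
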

\begin{proof}
Proof readily follow from~\eqref{eqnMACwithout5} with $K=1$.
\end{proof}
It is worth noting that~\eqref{eqnfeedfree6} is a general form of the well-known I-MMSE relationship reported in~\cite{I-MMSE}. Note that if $\gamma$ is considered as the square root of Signal-to-Noise Ratio (SNR), then the I-MMSE formula in~\cite{I-MMSE} will be recovered exactly.

%%%%%%%%%%%%%%%%%%     K-user Gaussian MAC with Feedback    %%%%%%%%%%%%%%%%%%%%%%%%%%%%%%%%%%%%%%%%%%%%%%%%%%%%%%%%%%%%%
\section{K-user Gaussian MAC with Feedback}\label{K-user Gaussian MAC Kramer}
In~\cite{Kramer2018} it is shown that, the sum-capacity of a K-user Gaussian MAC under symmetric power constraint $P$ is achievable using the Kramer's F-MEC code proposed in~\cite{Kramer2002}. Here, we will show that how this coding strategy affects the information rate of a K-user Gaussian MAC.   

%%%%%%%%%%%%%%%%%%%%%%%%%  System model K-User   %%%%%%%%%%%%
\subsection{System Model and Coding Scheme }\label{model_K_User}
In the $i$-th use of a K-user Gaussian MAC, the received signal is,
%%%%%%%%%%%%%%%%%%%%%%%%%%%%%%%%%%%%%%%%%%%%%%%%% EQUATION Kera_1 %%%%%%%%%%%%%%%%%%%%%%%%%%%
\begin{eqnarray}\label{eqnKera_1}
{Y_i} = \sum\limits_{k = 1}^K {{h_k}{X_{\left( k \right)i}}}  + {W_i};\,\,\,\,\,\,\,\,i = 1,2,...,n
\end{eqnarray}
where ${h_k},\,\, k=1,2,...,K$ are the channel gains, and $W_i  \sim \mathcal{N}\left( {0,1} \right)\,\,\left( i \in \left[ {1:n} \right]\right)$ stands for the $i.i.d.$ samples of the Gaussian noise. The channel inputs have the power constraints
%%%%%%%%%%%%%%%%%%%%%%%%%%%%%%%%%%%%%%%%%%%%%%%%% EQUATION Kera_2 %%%%%%%%%%%%%%%%%%%%%%%%%%%
\begin{eqnarray}\label{eqnKera_2}
\sum\limits_{i = 1}^n {{\mathop{\rm E}\nolimits} \left\{ {X_{\left( k \right)i}^2} \right\}}  \le n{P_k},\,\,\,\,\,\,\,k = 1,2,...,K
\end{eqnarray}

The transmitted signal in the $i$-th use of the channel is
%%%%%%%%%%%%%%%%%%%%%%%%%%%%%%%%%%%%%%%%%%%%%%%%% EQUATION Kera_3 %%%%%%%%%%%%%%%%%%%%%%%%%%%
\begin{eqnarray}\label{eqnKera_3}
{X_{\left( k \right)i}} = {g_{\left( k \right)i}}\left( {{M_{\left( k \right)}},{Y^{i - 1}}} \right),\,\,\,\,\,\,k = 1,2,...,K
\end{eqnarray}
where the ${g_{\left( k \right)i}}\left(  \cdot  \right)$ are the encoding functions. Following the notation of~\cite{Kramer2002}, the encoder maps the
message ${M_{\left( k \right)}}$ onto a point ${\theta _{\left( k \right)}}$ in the complex
plane, and ${\hat \theta _{\left( k \right)i}}$ is the receiver's Linear Minimum Mean Squared Error (LMMSE) estimate of ${\theta _{\left( k \right)}}$ after $i$ channel uses. The transmitters send power-normalized versions of the estimation error, i.e., ${\varepsilon _{\left( k \right)i-1}} = {\hat \theta _{\left( k \right)i-1}} - {\theta _{\left( k \right)}}$ as
%%%%%%%%%%%%%%%%%%%%%%%%%%%%%%%%%%%%%%%%%%%%%%%%% EQUATION Kera_4 %%%%%%%%%%%%%%%%%%%%%%%%%%%
\begin{eqnarray}\label{eqnKera_4}
{X_{\left( k \right)i}} = \left( {\sqrt {{P_k \mathord{\left/
 {\vphantom {P {\sigma _{\left( k \right)i - 1}^2}}} \right.
 \kern-\nulldelimiterspace} {\sigma _{\left( k \right)i - 1}^2}}} m_{\left( k \right)i}^*} \right){\varepsilon _{\left( k \right)i - 1}},\,\,\,\,\,k=1,2,...,K
\end{eqnarray}
where ${\gamma _{\left( k \right)i}} \buildrel \Delta \over = \sqrt {{P_k \mathord{\left/
 {\vphantom {P {\sigma _{\left( k \right)i - 1}^2}}} \right.
 \kern-\nulldelimiterspace} {\sigma _{\left( k \right)i - 1}^2}}} m_{\left( k \right)i}^*$ is a power scaling factor, $\sigma _{\left( k \right)i - 1}^2 = {\mathop{\rm E}\nolimits} \left\{ {{{\left| {{\varepsilon _{\left( k \right)i - 1}}} \right|}^2}} \right\}$, and ${m_{\left( k \right)i}}$ is a modulation coefficient chosen as
%%%%%%%%%%%%%%%%%%%%%%%%%%%%%%%%%%%%%%%%%%%%%%%%% EQUATION Kera_6 %%%%%%%%%%%%%%%%%%%%%%%%%%%
\begin{eqnarray}\label{eqnKera_6}
{m_{\left( k \right)i}} = \exp \left( {j\frac{{2\pi \left( {k - 1} \right)}}{K}\left( {i - 1} \right)} \right),\,\,\,\,\,i = 1,2,...,n.
\end{eqnarray}
The $K^2$ user's correlation coefficients can be written as,
%%%%%%%%%%%%%%%%%%%%%%%%%%%%%%%%%%%%%%%%%%%%%%%%% EQUATION Kera_6 %%%%%%%%%%%%%%%%%%%%%%%%%%%
\begin{eqnarray}\label{eqnKera_9}
{\rho _{\left( {kk'} \right)i}} = \frac{{{\mathop{\rm E}\nolimits} \left\{ {{\varepsilon _{\left( k \right)i}}\varepsilon _{\left( {k'} \right)i}^*} \right\}}}{{\sqrt {\sigma _{\left( k \right)i}^2\sigma _{\left( {k'} \right)i}^2} }},\,\,\,1 \le k \le K,\,1 \le k' \le K.
\end{eqnarray}

In a symmetric Gaussian MAC i.e., ${P_k} = P$ and $h_k=1$ for all $k$, the sum-capacity 
%%%%%% EQUATION Kera_C_SUM_1 %%%%%%%%%%%%%%%%%%%%%%%%%%%
\begin{eqnarray}\label{eqnKera_C_SUM_1}
{C_{sum}} = \frac{1}{2}\ln \left( {1 + PK\beta } \right)
\end{eqnarray}
is achievable using F-MEC coding strategy~\cite{Kramer2018}, if $P$ satisfies 
%%%%%% EQUATION Kera_C_SUM_3 %%%%%%%%%%%%%%%%%%%%%%%%%%%
\begin{eqnarray}\label{eqnKera_C_SUM_3}
{\left( {{{P{K^2}} \mathord{\left/
 {\vphantom {{P{K^2}} 2}} \right.
 \kern-\nulldelimiterspace} 2} + 1} \right)^{K - 1}} \le {\left( {{{P{K^2}} \mathord{\left/
 {\vphantom {{P{K^2}} 4}} \right.
 \kern-\nulldelimiterspace} 4} + 1} \right)^K}
\end{eqnarray}
and $\beta  \in \left[ {1,K} \right]$ is the unique solution of
%%%%%% EQUATION Kera_C_SUM_2 %%%%%%%%%%%%%%%%%%%%%%%%%%%
\begin{eqnarray}\label{eqnKera_C_SUM_2}
{\left( {1 + PK\beta } \right)^{K - 1}} = {\left( {1 + P\beta \left( {K - \beta } \right)} \right)^K}.
\end{eqnarray}

%%%%%%%%%%%%%%%%%%    Feedback effect analysis via derivative of the mutual information _K-user mac    %%%%%%%%%%%%%%%%%%%%%%%%%%%%%%%%%%%%
\subsection{Feedback Effect Analysis of the K-user Symmetric Gaussian MAC}\label{Feedback analysis_K-User}
In the following, utilizing the derivative of the MI we show that how F-MEC code affects the information rate of a K-user symmetric Gaussian MAC. Toward this end, the scaling factor ${\gamma _{\left( k \right)i}}$ is decomposed into two multiplicative factors as, ${\gamma _{\left( k \right)i}} = \sqrt[4]{P}\sqrt {{{{P^{{1 \mathord{\left/
 {\vphantom {1 2}} \right.
 \kern-\nulldelimiterspace} 2}}}} \mathord{\left/
 {\vphantom {{{P^{{1 \mathord{\left/
 {\vphantom {1 2}} \right.
 \kern-\nulldelimiterspace} 2}}}} {\sigma _{\left( k \right)i - 1}^2}}} \right.
 \kern-\nulldelimiterspace} {\sigma _{\left( k \right)i - 1}^2}}} m_{\left( k \right)i}^*$, and hence,
%%%%%%%%%%%%%%%%%%%%%%%%%%%%%%%%%%%%%%%%%%%%%%%%% EQUATION Kera_7 %%%%%%%%%%%%%%%%%%%%%%%%%%%
\begin{eqnarray}\label{eqnKera_7}
{Y_i} = \sum\limits_{k = 1}^K {\sqrt[4]{P} {{\tilde X}_{\left( k \right)i}}}  + {W_i};\,\,\,\,\,\,\,\,i = 1,2,...,n
\end{eqnarray}
where ${\tilde X_{\left( k \right)i}} =\sqrt {{{{P^{{1 \mathord{\left/
 {\vphantom {1 2}} \right.
 \kern-\nulldelimiterspace} 2}}}} \mathord{\left/
 {\vphantom {{{P^{{1 \mathord{\left/
 {\vphantom {1 2}} \right.
 \kern-\nulldelimiterspace} 2}}}} {\sigma _{\left( k \right)i - 1}^2}}} \right.
 \kern-\nulldelimiterspace} {\sigma _{\left( k \right)i - 1}^2}}} m_{\left( k \right)i}^* {\varepsilon _{\left( k \right)i - 1}}$. It should be noted that, considering the system parameter ${\gamma _{\left( k \right)i}}$ as the above does not change the transmitted power or coding scheme, but it allows us to study the effects of noise, interference and feedback, distinctively.

\begin{lemma}
Considering the system model~\eqref{eqnKera_7} and applying the Kramer's F-MEC coding strategy, the derivative of the normalized MI to the number of channel uses, i.e., $n$ with respect to the SNR of each user $P$, will be as 
%%%%%%%%%%%%%%%%%%%%%%%%%%%%%%%%%%%%%%%%%%%%%%%%% EQUATION Kera_8 %%%%%%%%%%%%%%%%%%%%%%%%%%%
\begin{eqnarray}\label{eqnKera_8}
\frac{\partial }{{\partial P}}I_{F-MEC}\left( {{M_{\left( 1 \right)}},{M_{\left( 2 \right)}},...,{M_{\left( K \right)}};{Y^{i - 1}}} \right) = mmse_{F-MEC}(P) + {\vartheta}_{F-MEC}(P)  + {\zeta}_{F-MEC}(P) 
\end{eqnarray}
where
%%%%%%%%%%%%%%%%%%%%%%%%%%%%%%%%%%%%%%%%%%%%%%%%% EQUATION Kera_9 %%%%%%%%%%%%%%%%%%%%%%%%%%%
\begin{eqnarray}\label{eqnKera_9}
mmse_{F-MEC}(P) = \frac{K}{{4P}}\left( {P - 2\kappa P\left( {1 + \rho \left( {K - 1} \right)} \right) + {\kappa ^2}\sigma _Y^2} \right)
\end{eqnarray}
and
%%%%%%%%%%%%%%%%%%%%%%%%%%%%%%%%%%%%%%%%%%%%%%%%% EQUATION Kera_10 %%%%%%%%%%%%%%%%%%%%%%%%%%%
\begin{eqnarray}\label{eqnKera_10}
{\vartheta}_{F-MEC}(P)  = \frac{{K\left( {K - 1} \right)}}{{4P}}\left( {P\rho  - 2\kappa P\left( {1 + \left( {K - 1} \right)\rho } \right) + {\kappa ^2}\sigma _Y^2} \right)
\end{eqnarray}
and
%%%%%%%%%%%%%%%%%%%%%%%%%%%%%%%%%%%%%%%%%%%%%%%%% EQUATION Kera_11 %%%%%%%%%%%%%%%%%%%%%%%%%%%
\begin{eqnarray}\label{eqnKera_11}
{\zeta}_{F-MEC}(P)  = \frac{1}{4}K\left( {1 - K\kappa } \right)\left( {1 + \rho \left( {K - 1} \right)} \right)
\end{eqnarray}
where $\kappa  = {{P\left( {1 + \left( {K - 1} \right)\rho } \right)} \mathord{\left/
 {\vphantom {{P\left( {1 + \left( {K - 1} \right)\rho } \right)} {\sigma _Y^2}}} \right.
 \kern-\nulldelimiterspace} {\sigma _Y^2}}$, $\sigma _Y^2 = KP\left( {1 + \left( {K - 1} \right)\rho } \right) + 1$ is the received signal variance, and $\rho$ is obtained from $\beta  = 1 + \left( {K - 1} \right)\rho $.
\end{lemma}
\begin{proof}
This lemma can be proved using the same steps
as in the proof of Corollary~\ref{Corollary MAC2_Prime} and considering the system model~\eqref{eqnKera_7} with the introduced Kramer's coding scheme. 
\end{proof}

\begin{remark}\label{Remark_Kramer_1}
Substituting $\kappa $ in~\eqref{eqnKera_9}{--}~\eqref{eqnKera_11}, surprisingly we find that,
%%%%%%%%%%%%%%%%%%%%%%%%%%%%%%%%%%%%%%%%%%%%%%%%% EQUATION S_K10 %%%%%%%%%%%%%%%%%%%%%%%%%%%
\begin{align}\label{eqnS_K1011}
{\zeta}_{F-MEC}(P)  = {mmse}_{F-MEC}(P) + {\vartheta }_{F-MEC}(P).
\end{align}
Moreover, replacing $\kappa $ in~\eqref{eqnKera_11} shows that the term induced by feedback, i.e., $\zeta_{F-MEC}$ is always positive and hence this term increases the information rate versus the transmitted power. These observations can be explained by the fact that, the capacity achieving F-MEC code uses the information of both noise and users' interference contained in the noiseless feedback such that the feedback term alone compensates the effects of the other terms. Fig.~\ref{fig:MEC_8} shows the effects of the $mmse_{F-MEC}$, ${\vartheta}_{F-MEC}$ and ${\zeta}_{F-MEC}$ in a 8-user symmetric Gaussian MAC with F-MEC coding strategy.

\begin{figure}
  \centering
  \includegraphics[width=10cm]{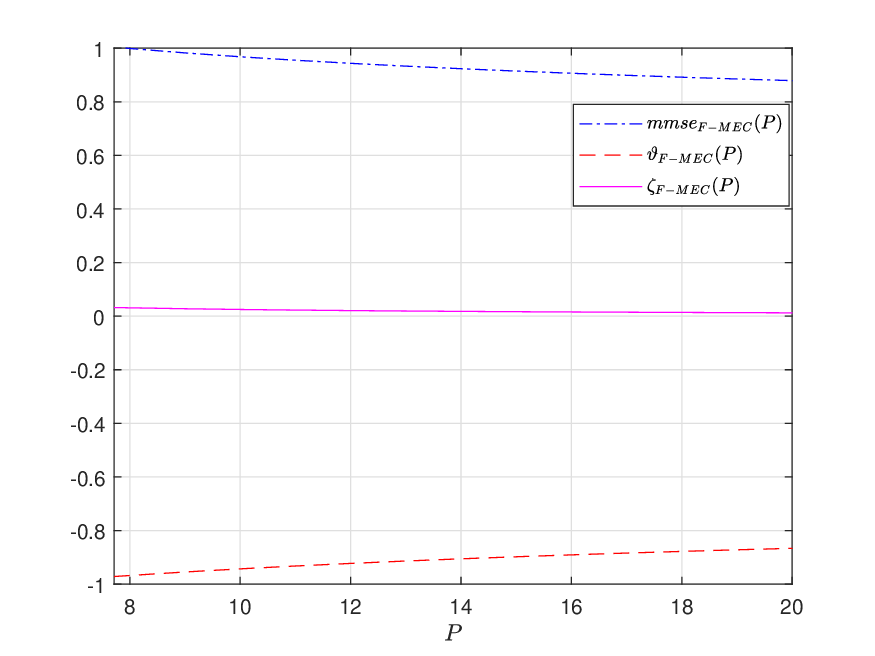}
 \caption{Derivative of the MI with respect to the transmitted power of each user $P$, in a 8-user symmetric Gaussian MAC with F-MEC coding scheme.}
 \label{fig:MEC_8}
\end{figure}
\end{remark}

\begin{remark}\label{Remark_Kramer_2}
For the point to point additive white Gaussian noise channel, i.e., $K=1$, the interference term of~\eqref{eqnKera_8} will be omitted and hence we can write,
%%%%%% EQUATION Kera_Rem %%%%%%%%%%%%%%%%%%%%%%%%%%%
\begin{eqnarray}\label{eqnKera_Rem}
\frac{\partial }{{\partial P}}I_{F-MEC}\left( {{M_{\left( 1 \right)}};{Y^{i - 1}}} \right) = mmse_{F-MEC}(P) + {\zeta}_{F-MEC}(P)  = \frac{1}{{2\left( {1 + P} \right)}}.
\end{eqnarray}
Integrating both sides of this equation over $P$ gives $I_{F-MEC}\left( {{M_{\left( 1 \right)}};{Y^{i - 1}}} \right) = \frac{1}{2}\ln \left( {1 + P} \right)$ which verifies the fact that
feedback does not increase the capacity of a white Gaussian
channel. Moreover, one can reduce the general result in~\eqref{eqnKera_8} for the case of $K=2$, where integrating the result over $P$ gives the sum-capacity of the two-user symmetric Gaussian MAC which can be achieved by the Schalkwijk{--}Kailath coding scheme~\cite{Network}.
\end{remark}

\section{Conclusions}\label{Conclusions}
We introduced a general expression for the gradient of MI in a general system model without imposing any constraints regarding to the relationship between the system elements. We shed light on the new aspect of the importance of the gradient of MI by using it to interpret the effect of feedback in MAC as a realistic scenario. In the additive Gaussian noise MAC with feedback, the obtained extension of I-MMSE formula relates the derivative of MI to the MMSE term resulted from the estimating of the channel inputs based on the channel output, the cross correlation of the estimation errors of the inputs, and an additional term caused by feedback. Considering F-MEC coding strategy, our obtained results on the gradient of MI clearly specified the constructive role of feedback on increasing the achievable rates of the users in a K-user symmetric Gaussian MAC. Also, the results were specialized for the single-user channels with and without feedback, where several available results were recovered.

 %%%%%%%%%%%%%%%%%%%%%% APPENDIX %%%%%%%%%%%%%%%%%%%%%%%%%%%%%%%%%%%%%%%
\appendix
%%%%%%%%%%%%%%%%%%%%%% Proof of Theorem General %%%%%%%%%%%%%%%%%%%%%%%%%%%%%%%%%%%%%%%
\subsection{Proof of Theorem~\ref{Theorem General} }\label{proof Theorem General}
  Using the relationship between MI and entropy, we can write,
%%%%%%%%%%%%%%%%%%%%%%%%%%%%%%%%%%%%%%%%%%%%%%%%% EQUATION 3 %%%%%%%%%%%%%%%%%%%%%%%%%%%
\begin{align}\label{eqn3}
 \Delta I \left( {{\textbf{x}};{\textbf{y}}} \right)&= [{h\left( \textbf x +{\boldsymbol{\delta }}_{\textbf x} \right) - h\left( {\textbf{x}} \right)}] \notag\\
 &\,\,\,\,\,+ [{h\left( \textbf y +{\boldsymbol{\delta }}_{\textbf y}  \right) - h\left( {\textbf{y}} \right)}] \notag\\
 &\,\,\,\,\,- [{h\left( \textbf x +{\boldsymbol{\delta }}_{\textbf x}  ,\textbf y +{\boldsymbol{\delta }}_{\textbf y}  \right) - h\left( {{\textbf{x}},{\textbf{y}}} \right)}]
 \end{align}
where, $h(\cdot)$ denotes the  Shannon differential entropy,
and ${\boldsymbol{\delta }}_{\textbf x} $ and ${\boldsymbol{\delta }}_{\textbf y} $
denote the variations of $\textbf x$ and $\textbf y$ given by~\eqref{eqnDelta_x} and~\eqref{eqnDelta_y}, respectively.

Now, we calculate the first term of~\eqref{eqn3} where the other terms can be found similarly. From~\cite{Pham} and~\cite{BabaiezadehThesis}, we have,
%%%%%%%%%%%%%%%%%%%%%%%%%%%%%%%%%%%%%%%%%%%%%%%%% EQUATION 22228  %%%%%%%%%%%%%%%%%%%%%%%%%%%
\begin{align}\label{eqn22228}
h\left( \textbf x +{\boldsymbol{\delta }}_{\textbf x} \right) - h\left( {\textbf{x}} \right)&=
{\mathop{\rm E}\nolimits} \left\{ {\ln p_{\textbf{x}} \left( {{\textbf{x}}} \right) - \ln p_{\textbf{x}} \left( {{{\textbf{x}}} + {{\boldsymbol{\delta }}_{{\textbf{x}}}} }  \right)} \right\}\notag \\
&\,\,\,\,\,- {\mathop{\rm E}\nolimits} \left\{ {\ln \left( {{{p_{\textbf x +{\boldsymbol{\delta }}_{\textbf{x}} } \left( {{{\textbf{x}}} + {{\boldsymbol{\delta }}_{{\textbf{x}}}} }  \right)} \mathord{\left/
 {\vphantom {{p_{\textbf x +{\boldsymbol{\delta }}_{\textbf x}  } \left( {{{\textbf{x}}} + {{\boldsymbol{\delta }}_{{\textbf{x}}}} } \right)} {p_{\textbf{x}} \left( {{{\textbf{x}}} + {{\boldsymbol{\delta }}_{{\textbf{x}}}} }  \right)}}} \right.
 \kern-\nulldelimiterspace} {p_{\textbf{x}} \left( {{{\textbf{x}}} + {{\boldsymbol{\delta }}_{{\textbf{x}}}} }  \right)}}} \right)} \right\}
\end{align}
where, ${p_{{\textbf{x}} + {\boldsymbol{\delta }}_{\textbf{x}} } \left( {{{\textbf{x}}} + {{\boldsymbol{\delta }}_{{\textbf{x}}}} } \right)}$ is the
joint pdf of the random vector ${{\textbf{x}} + {\boldsymbol{\delta }}_{\textbf{x}} }$.
By Taylor series expansion of $\ln(\cdot)$ in the neighborhood of $1$ and using~\cite[Lemma 4]{Pham}, we can write,
%%%%%%%%%%%%%%%%%%%%%%%%%%%%%%%%%%%%%%%%%%%%%%%%% EQUATION 22225 %%%%%%%%%%%%%%%%%%%%%%%%%%%
\begin{eqnarray}\label{eqn22225}
{\mathop{\rm E}\nolimits} \left\{ {{{\left[ {\ln \frac{{p_{\textbf{x}} \left( {{{\textbf{x}}} + {{\boldsymbol{\delta }}_{{\textbf{x}}}} } \right)}}{{p_{{\textbf{x}} + {\boldsymbol{\delta }}_{\textbf{x}} } \left( {{{\textbf{x}}} + {{\boldsymbol{\delta }}_{{\textbf{x}}}} }\right)}} - \frac{{p_{\textbf{x}} \left( {{{\textbf{x}}} + {{\boldsymbol{\delta }}_{{\textbf{x}}}} } \right)}}{{p_{{\textbf{x}} + {\boldsymbol{\delta }}_{\textbf{x}} } \left( {{{\textbf{x}}} + {{\boldsymbol{\delta }}_{{\textbf{x}}}} } \right)}} + 1} \right]} \mathord{\left/
 {\vphantom {{\left[ {\ln \frac{{p_{\bf{x}} \left( {{\bf{x}} + {\bf{\delta }}_{\bf{\gamma }} } \right)}}{{p_{{\bf{x}} + {\bf{\delta }}_{\bf x} } \left( {{\bf{x}} + {\bf{\delta }}_{\bf x} } \right)}} - \frac{{p_{\bf{x}} \left( {{\bf{x}} + {\bf{\delta }}_{\bf x} } \right)}}{{p_{{\bf{x}} + {\bf{\delta }}_{\bf x} } \left( {{\bf{x}} + {\bf{\delta }}_{\bf x} } \right)}} + 1} \right]} {\left\| \boldsymbol \varepsilon \right\|}}} \right.
 \kern-\nulldelimiterspace} {\left\| \boldsymbol \varepsilon \right\|}}} \right\}\mathop  \to \limits^{\boldsymbol \varepsilon  \to {\bf{0}}} 0.
\end{eqnarray}
Equivalently, we can write,
%%%%%%%%%%%%%%%%%%%%%%%%%%%%%%%%%%%%%%%%%%%%%%%%% EQUATION 22226 %%%%%%%%%%%%%%%%%%%%%%%%%%%
\begin{eqnarray}\label{eqn22226}
{\mathop{\rm E}\nolimits} \left\{ {\ln \frac{{p_{\textbf{x}} \left( {{{\textbf{x}}} + {{\boldsymbol{\delta }}_{{\textbf{x}}}} }\right)}}{{p_{{\textbf{x}} + {\boldsymbol{\delta }}_{\textbf{x}} } \left( {{{\textbf{x}}} + {{\boldsymbol{\delta }}_{{\textbf{x}}}} } \right)}}} \right\} = {\mathop{\rm E}\nolimits} \left\{ {\frac{{p_{\textbf{x}} \left( {{{\textbf{x}}} + {{\boldsymbol{\delta }}_{{\textbf{x}}}} } \right)}}{{p_{{\textbf{x}} + {\boldsymbol{\delta }}_{\textbf{x}} } \left( {{{\textbf{x}}} + {{\boldsymbol{\delta }}_{{\textbf{x}}}} }\right)}} - 1} \right\} + o\left( \boldsymbol \varepsilon \right).
\end{eqnarray}
Hence,
%%%%%%%%%%%%%%%%%%%%%%%%%%%%%%%%%%%%%%%%%%%%%%%%% EQUATION 22227 %%%%%%%%%%%%%%%%%%%%%%%%%%%
\begin{align}\label{eqn22227}
 {\mathop{\rm E}\nolimits} \left\{ {\ln \frac{{p_{\textbf{x}} \left( {{{\textbf{x}}} + {{\boldsymbol{\delta }}_{{\textbf{x}}}} }\right)}}{{p_{{\textbf{x}} + {\boldsymbol{\delta }}_{\textbf{x}} } \left( {{{\textbf{x}}} + {{\boldsymbol{\delta }}_{{\textbf{x}}}} } \right)}}} \right\}  &= \int {\left( {\frac{{p_{\textbf{x}} \left( \textit{\textbf t}\right)}}{{p_{{\textbf{x}} + {\boldsymbol{\delta }}_{\textbf{x }} } \left( \textit{\textbf t} \right)}} - 1} \right)p_{{\textbf{x}} + {\boldsymbol{\delta }}_{\textbf{x }} } \left( \textit{\textbf t}\right)d\left( \textit{\textbf t} \right)}  \notag \\
 &\,\,\,\,\, + o\left( \boldsymbol \varepsilon\right) \notag \\
 &= o\left( \boldsymbol \varepsilon\right).
\end{align}
Substituting~\eqref{eqn22227} in~\eqref{eqn22228} we can write,
%%%%%%%%%%%%%%%%%%%%%%%%%%%%%%%%%%%%%%%%%%%%%%%%% EQUATION 22229  %%%%%%%%%%%%%%%%%%%%%%%%%%%
\begin{eqnarray}\label{eqn22229}
h\left( \textbf x +{\boldsymbol{\delta }}_{\textbf x} \right) - h\left( {\textbf{x}} \right)&=
{\mathop{\rm E}\nolimits} \left\{ {\ln p_{\textbf{x}} \left( {{\textbf{x}}} \right) - \ln p_{\textbf{x}} \left( {{{\textbf{x}}} + {{\boldsymbol{\delta }}_{{\textbf{x}}}} }  \right)} \right\}+ o\left( \boldsymbol \varepsilon\right).
\end{eqnarray}
In what follows we show that
%%%%%%%%%%%%%%%%%%%%%%%%%%%%%%%%%%%%%%%%%%%%%%%%% EQUATION 22230  %%%%%%%%%%%%%%%%%%%%%%%%%%%
\begin{eqnarray}\label{eqn22230}
{\mathop{\rm E}\nolimits} \left\{ {{{\left[ {\ln p_{\textbf{x}} \left( {{\textbf{x}} } \right) - \ln p_{{\textbf{x}}  } \left( {{{\textbf{x}}} + {\boldsymbol{\delta }}_{\textbf{x}} } \right) + {\boldsymbol{\phi }}_{\textbf{x}}^T \left({\textbf{x}} \right) {\bf{\tilde X\boldsymbol \varepsilon }}} \right]} \mathord{\left/
 {\vphantom {{\left[ {\ln p_{\textbf{x}} \left( {\textbf{x}} \right) - \ln p_{{\textbf{x}}  } \left( {{\textbf{x}} + {\bf{\delta }}_{\bf{x}} } \right) - {\boldsymbol{\phi }}_{\textbf{x}}^T {\bf{\tilde X\boldsymbol \varepsilon }}} \right]} {\left\| \boldsymbol \varepsilon \right\|}}} \right.
 \kern-\nulldelimiterspace} {\left\| \boldsymbol \varepsilon \right\|}}} \right\}\mathop  \to \limits^{\boldsymbol \varepsilon  \to {\bf{0}}} 0.
\end{eqnarray}
Toward this end, we first show that,
%%%%%%%%%%%%%%%%%%%%%%%%%%%%%%%%%%%%%%%%%%%%%%%%% EQUATION 22231  %%%%%%%%%%%%%%%%%%%%%%%%%%%
\begin{eqnarray}\label{eqn22231}
{\mathop{\rm E}\nolimits} \left\{ {{{\left[ {\ln p_{\textbf{x}} \left( {{\textbf{x}} } \right) - \ln p_{{\textbf{x}} } \left( {{{\textbf{x}}} + {\boldsymbol{\delta }}_{\textbf{x}} } \right) + {\boldsymbol{\phi }}_{\textbf{x}}^T \left( {{\textbf{x}}} \right){\boldsymbol{\delta }}_{\textbf{x}}  } \right]} \mathord{\left/
 {\vphantom {{\left[ {\ln p_{\textbf{x}} \left( {\textbf{x}} \right) - \ln p_{{\textbf{x}} } \left( {{\textbf{x}} + {\bf{\delta }}_{\bf{x}} } \right) - {\boldsymbol{\phi }}_{\textbf{x}}^T \left( {\textbf{x}} \right) } \right]} {\left\| {\boldsymbol \varepsilon } \right\|}}} \right.
 \kern-\nulldelimiterspace} {\left\| {\boldsymbol \varepsilon} \right\|}}} \right\}\mathop  \to \limits^{\boldsymbol \varepsilon  \to {\bf{0}}} 0.
\end{eqnarray}
Each realization of the random variable inside the above curly brackets can be written as,
%%%%%%%%%%%%%%%%%%%%%%%%%%%%%%%%%%%%%%%%%%%%%%%%% EQUATION pro_detail1  %%%%%%%%%%%%%%%%%%%%%%%%%%%
\begin{eqnarray}\label{eqnpro_detail1}
\begin{array}{l}
 \left( {{{\left( {\ln p_{\textbf{x}} \left( \textit{\textbf{x}} \right) - \ln p_{\textbf{x}} \left( {\textit{\textbf{x}}  + {\boldsymbol{\delta }}_{\textit{\textbf{x}} } } \right) + \boldsymbol \phi _{\textbf{x}}^{\rm{T}} \left( \textit{\textbf{x}} \right){\boldsymbol{\delta }}_{\textit{\textbf{x}}} } \right)} \mathord{\left/
 {\vphantom {{\left( {\ln p_{\textbf{x}} \left( \textit{\textbf{x}} \right) - \ln p_{\textbf{x}} \left( {\textit{\textbf{x}}+ {\boldsymbol{\delta }}_{\textit{\textbf{x}}} } \right) + \boldsymbol \phi _{\textbf{x}}^{\rm{T}} \left( \textit{\textbf{x}} \right){\boldsymbol{\delta }}_{\textit{\textbf{x}}} } \right)} {\left\| {{\boldsymbol{\delta }}_{\textit{\textbf{x}}} } \right\|}}} \right.
 \kern-\nulldelimiterspace} {\left\| {{\boldsymbol{\delta }}_{\textit{\textbf{x}}} } \right\|}}} \right) \times \left( {{{\left\| {{\boldsymbol{\delta }}_{\textit{\textbf{x}}} } \right\|} \mathord{\left/
 {\vphantom {{\left\| {{\boldsymbol{\delta }}_{\textit{\textbf{x}}} } \right\|} {\left\| {\boldsymbol{\varepsilon }} \right\|}}} \right.
 \kern-\nulldelimiterspace} {\left\| {\boldsymbol{\varepsilon }} \right\|}}} \right) \\
  = \left( {{{\left( {\ln p_{\textbf{x}} \left( \textit{\textbf{x}} \right) - \ln p_{\textbf{x}} \left( {\textit{\textbf{x}} + \boldsymbol{\tilde X}{\boldsymbol{\varepsilon }} + o \left( {\boldsymbol{\varepsilon }} \right)} \right) + \boldsymbol \phi _{\textbf{x}}^{\rm{T}}\left( \textit{\textbf{x}} \right) \left( {\boldsymbol{\tilde X}{\boldsymbol{\varepsilon }} + o \left( {\boldsymbol{\varepsilon }} \right)} \right)} \right)} \mathord{\left/
 {\vphantom {{\left( {\ln p_{\textbf{x}} \left( \textit{\textbf{x}} \right) - \ln p_{\textbf{x}} \left( {\textit{\textbf{x}} + \boldsymbol{\tilde X}{\boldsymbol{\varepsilon }} + o \left( {\boldsymbol{\varepsilon }} \right)} \right) + \boldsymbol \phi _{\textbf{x}}^{\rm{T}} \left( \textit{\textbf{x}} \right)\left( {\boldsymbol{\tilde X}{\boldsymbol{\varepsilon }} + o \left( {\boldsymbol{\varepsilon }} \right)} \right)} \right)} {\left\| {\boldsymbol{\tilde X}{\boldsymbol{\varepsilon }} + o \left( {\boldsymbol{\varepsilon }} \right)} \right\|}}} \right.
 \kern-\nulldelimiterspace} {\left\| {\boldsymbol{\tilde X}{\boldsymbol{\varepsilon }} + o \left( {\boldsymbol{\varepsilon }} \right)} \right\|}}} \right) \\
 \,\,\,\,\, \times \left( {\left\| {{{\boldsymbol{\tilde X}{\boldsymbol{\varepsilon }}} \mathord{\left/
 {\vphantom {{\boldsymbol{\tilde X}{\boldsymbol{\varepsilon }}} {\left\| {\boldsymbol{\varepsilon }} \right\| + {{o \left( {\boldsymbol{\varepsilon }} \right)} \mathord{\left/
 {\vphantom {{o \left( {\boldsymbol{\varepsilon }} \right)} {\left\| {\boldsymbol{\varepsilon }} \right\|}}} \right.
 \kern-\nulldelimiterspace} {\left\| {\boldsymbol{\varepsilon }} \right\|}}}}} \right.
 \kern-\nulldelimiterspace} {\left\| {\boldsymbol{\varepsilon }} \right\| + {{o \left( {\boldsymbol{\varepsilon }} \right)} \mathord{\left/
 {\vphantom {{o \left( {\boldsymbol{\varepsilon }} \right)} {\left\| {\boldsymbol{\varepsilon }} \right\|}}} \right.
 \kern-\nulldelimiterspace} {\left\| {\boldsymbol{\varepsilon }} \right\|}}}}} \right\|} \right) \\
 \end{array}
 \end{eqnarray}
By Taylor series expansion of $\ln p_{{\textbf{x}}} \left( {\textbf{{\textit{x}}}} \right)$ about $\textit{\textbf x}$,~\eqref{eqnpro_detail1} converges to $0$ almost
surely as ${\boldsymbol \varepsilon  \to {\bf{0}}}$. Now, by the Lebesgue dominated convergence Theorem~\cite{Rudin}, to prove~\eqref{eqn22231} it suffices only
 to show the random variable inside the curly bracket in~\eqref{eqn22231} for all $\boldsymbol \varepsilon $ small enough is bounded by a fixed integrable random variable. Repeating some arguments like those of~\cite[Proof of Lemma 2]{Pham}, we can write,
%%%%%%%%%%%%%%%%%%%%%%%%%%%%%%%%%%%%%%%%%%%%%%%%% EQUATION 22232  %%%%%%%%%%%%%%%%%%%%%%%%%%%
\begin{eqnarray}\label{eqn22232}
\begin{array}{l}
\left|{{{\left[ {\ln p_{\textbf{x}} \left( \textit{{\textbf{x}} } \right) - \ln p_{{\textbf{x}} } \left( {\textit{{\textbf{x}}} + {\boldsymbol{\delta }}_{\textit{\textbf{x}}} } \right) + {\boldsymbol{\phi }}_{\textbf{x}}^T \left( {\textit{\textbf{x}}} \right){\boldsymbol{\delta }}_{\textit{\textbf{x}}}  } \right]} \mathord{\left/
 {\vphantom {{\left[ {\ln p_{\textbf{x}} \left( {\textbf{x}} \right) - \ln p_{{\textbf{x}} + {\boldsymbol{\delta }}_{\textbf{x}} } \left( {{\textbf{x}} + {\bf{\delta }}_{\textbf{x}} } \right) - {\boldsymbol{\phi }}_{\textbf{x}}^T \left( {\textbf{x}} \right) } \right]} {\left\| {\boldsymbol \varepsilon } \right\|}}} \right.
 \kern-\nulldelimiterspace} {\left\| {\boldsymbol \varepsilon } \right\|}}} \right| \\
  \,\,\,\,\,\,\mathop  <  C\left[ {2 + 2^{\max \left( {\alpha  - 2,0} \right)} \left( {\left\| {\textit{\textbf{x}}} \right\|^{\alpha  - 1}  + \left\| {{{\boldsymbol{\tilde X} \boldsymbol\varepsilon }}+ o\left( \boldsymbol \varepsilon \right)} \right\|^{\alpha  - 1} } \right)} + \left\| {\textit{\textbf{x}}} \right\|^{\alpha  - 1}  \right]\left\| {{\boldsymbol{\tilde X}  }} \right\|  \\
 \,\,\,\,\,\, < C\left[ {2 + 2^{\max \left( {\alpha  - 2,0} \right)} \left( {\left\| {\textit{\textbf{x}}} \right\|^{\alpha  - 1}  + \left\| \boldsymbol \varepsilon \right\|^{\alpha  - 1} \left\| {{{\left( {{{\boldsymbol{\tilde X} \boldsymbol\varepsilon }}  + o\left( \boldsymbol \varepsilon \right)} \right)} \mathord{\left/
 {\vphantom {{\left( {{{\boldsymbol{\tilde X} \boldsymbol\varepsilon }}  + o\left( {{\boldsymbol{\delta }}_{\boldsymbol{\gamma }} } \right)} \right)} {\left\| {{{\boldsymbol{\delta }}_{\boldsymbol{\gamma }} } } \right\|}}} \right.
 \kern-\nulldelimiterspace} {\left\| \boldsymbol \varepsilon \right\|}}} \right\|^{\alpha  - 1}  } \right)} + \left\| {\textit{\textbf{x}}} \right\|^{\alpha  - 1} \right]\left\| {{\boldsymbol{\tilde X} }} \right\|  \\
 \,\,\,\,\,\, < C\left[ {2 + 2^{\max \left( {\alpha  - 2,0} \right)} \left( {\left\| {\textit{\textbf{x}}} \right\|^{\alpha  - 1}  + \beta\left\| {{{\boldsymbol{\tilde X}  }}} \right\|^{\alpha  - 1}  } \right)}+ \left\| {\textit{\textbf{x}}} \right\|^{\alpha  - 1}  \right]\left\| {{{\boldsymbol{\tilde X}  }}} \right\|
 \end{array}
\end{eqnarray}
where, the first inequality follows by our assumption on the score functions to be bounded, the mean value Theorem~\cite{Rudin} and the fact that $\left( {a + b} \right)^\lambda   \le 2^{\max \left( {\lambda  - 1,0} \right)} \left( {a^\lambda   + b^\lambda  } \right)
$ holds for positive $a$, $b$, and $\lambda$. The second inequality follows by the assumption of $\boldsymbol \varepsilon$ to be small enough, and the last inequality holds for all $\boldsymbol \varepsilon$ such that $\left\| \boldsymbol \varepsilon\right\| \le \beta
$. This upper bound is integrable by Holder inequality~\cite{Rudin}, and hence~\eqref{eqn22231} is verified.
 On the other hand, since $\mathop {\lim }\limits_{\boldsymbol \varepsilon  \to {\bf{0}}} {{o\left( \boldsymbol \varepsilon \right)} \mathord{\left/
 {\vphantom {{o\left( \boldsymbol \varepsilon \right)} {\left\| \boldsymbol \varepsilon \right\|}}} \right.
 \kern-\nulldelimiterspace} {\left\| \boldsymbol \varepsilon \right\|}} = 0 $ and we assume that the
 score functions are bounded,~\eqref{eqn22230} readily follows. From~\eqref{eqn22229} and~\eqref{eqn22230} we get the desired result,
 %%%%%%%%%%%%%%%%%%%%%%%%%%%%%%%%%%%%%%%%%%%%%%%%% EQUATION 4  %%%%%%%%%%%%%%%%%%%%%%%%%%%
\begin{eqnarray}\label{eqn4}
h\left( {{\textbf{x}} + {\boldsymbol{\delta }}_{\textbf{x}} } \right) - h\left( {\textbf{x}} \right) =- {\mathop{\rm E}\nolimits}\left\{ {{\boldsymbol{\phi }}_{\textbf{x}}^T \left( {{\textbf{x}}} \right){\bf{\tilde X\boldsymbol \varepsilon }}} \right\} + o\left( \boldsymbol \varepsilon \right)
\end{eqnarray}
Similarly, for the random vector ${\bf{y}}$ we have,
%%%%%%%%%%%%%%%%%%%%%%%%%%%%%%%%%%%%%%%%%%%%%%%%% EQUATION 6 %%%%%%%%%%%%%%%%%%%%%%%%%%%
\begin{eqnarray}\label{eqn6}
h\left( {{\textbf{y}} + {\boldsymbol{\delta }}_{\textbf{y}} } \right) - h\left( {\textbf{y}} \right) =- {\mathop{\rm E}\nolimits}\left\{ {{\boldsymbol{\phi }}_{\textbf{y}}^T \left( {\textbf{y}} \right){\bf{\tilde Y\boldsymbol \varepsilon }}}  \right\} + o\left( \boldsymbol \varepsilon \right)
\end{eqnarray}

Now, we want to obtain a similar expression for the last term of~\eqref{eqn3}. Using~\eqref{eqn10} and the multivariate version of Taylor series expansion for expanding $\ln p_{{\textbf{x,y}}} \left( \textit{\textbf x} ,\textit{\textbf y} \right)$ about $\left( \textit{\textbf x} ,\textit{\textbf y} \right)$,
we have,
%%%%%%%%%%%%%%%%%%%%%%%%%%%%%%%%%%%%%%%%%%%%%%%%% EQUATION 302 %%%%%%%%%%%%%%%%%%%%%%%%%%%
\begin{align}\label{eqn302}
h\left( {{{\textbf x + \boldsymbol\delta }}_{\textbf{x}} ,{{\textbf y + \boldsymbol\delta }}_{\textbf{y}} } \right) - h\left( {{\textbf{x}},{\textbf{y}}} \right)& = -{\mathop{\rm E}\nolimits} \left\{ {\boldsymbol \phi _{{\textbf x, \textbf y}}^T\left( {\textbf x}, {\textbf y} \right)\left( {\left({\bf{\tilde X\boldsymbol \varepsilon }}\right)^T ,\left( {\bf{\tilde Y \boldsymbol\varepsilon }}\right)^T } \right)^T } \right\} + o\left( \boldsymbol \varepsilon \right) \notag \\
 \,\,\,\,\,& =- {\mathop{\rm E}\nolimits} \left\{ {{\boldsymbol \phi _{{\textbf{x,y}}}^{{\textbf{x}}\;\;T} \left( {\textbf x}, {\textbf y} \right)} {\bf{\tilde X \boldsymbol \varepsilon }}+{\boldsymbol \phi _{{\textbf{x,y}}}^{{\textbf{y}}\;\;T} \left( {\textbf x}, {\textbf y} \right)}{\bf{\tilde Y\boldsymbol \varepsilon }} }\right\} + o\left( \boldsymbol \varepsilon \right)
 \end{align}
Substituting~\eqref{eqn4},~\eqref{eqn6} and~\eqref{eqn302} in~\eqref{eqn3} completes the proof.
%%%%%%%%%%%%%%%%%%%%%% Proof of Corollary Gaussian MAC %%%%%%%%%%%%%%%%%%%%%%%%%%%%%%%%%%%%%%%
\subsection{Proof of Corollary~\ref{Corollary MAC2} }\label{Gaussian MAC}
From~\eqref{eqnMAC7} we can write,
%%%%%%%%%%%%%%%%%%%%%%%%%%%%%%%%%%%%%%%%%%%%%%%%% EQUATION MAC8 proof  %%%%%%%%%%%%%%%%%%%%%%%%%%%
\begin{align}\label{eqnMAC proof}
 \frac{{\partial I\left( {U_{\left( 1 \right)}^n ,...,U_{\left( K \right)}^n ;Y^n } \right)}}{{\partial \gamma _{\left( l \right)} }} &= \sum\limits_{i = 1}^n {{\mathop{\rm E}\nolimits} \left\{ {\left( {g_{_{\left( l \right)i} }  + \sum\limits_{k = 1}^K {\gamma _{\left( k \right)} \frac{{\partial g_{_{\left( k \right)i} } }}{{\partial \gamma _{\left( l \right)} }}} } \right)\frac{{\partial \ln p_{U_{\left( 1 \right)}^n ,...,U_{\left( K \right)}^n \left| {Y^n } \right.} \left( {U_{\left( 1 \right)}^n ,...,U_{\left( K \right)}^n \left| {Y^n } \right.} \right)}}{{\partial Y_i }}} \right\}}  \notag \\
  &= \sum\limits_{i = 1}^n {{\mathop{\rm E}\nolimits} \left\{ {\left( {g_{_{\left( l \right)i} }  + \sum\limits_{k = 1}^K {\gamma _{\left( k \right)} \frac{{\partial g_{_{\left( k \right)i} } }}{{\partial \gamma _{\left( l \right)} }}} } \right)\frac{{\partial \ln p_{U_{\left( 1 \right)}^n ,...,U_{\left( K \right)}^n ,Y^n }\left( {U_{\left( 1 \right)}^n ,...,U_{\left( K \right)}^n ,Y^n } \right)}}{{\partial Y_i }}} \right\}}  \notag \\
  &\,\,\,\,\,- \sum\limits_{i = 1}^n {{\mathop{\rm E}\nolimits} \left\{ {\left( {g_{_{\left( l \right)i} }  + \sum\limits_{k = 1}^K {\gamma _{\left( k \right)} \frac{{\partial g_{_{\left( k \right)i} } }}{{\partial \gamma _{\left( l \right)} }}} } \right)\frac{{\partial \ln p_{Y^n } \left( {Y^n } \right)}}{{\partial Y_i }}} \right\}}
\end{align}
For additive Gaussian noise $W_i$, the first term of~\eqref{eqnMAC proof} can be simplified as,
%%%%%%%%%%%%%%%%%%%%%%%%%%%%%%%%%%%%%%%%%%%%%%%%% EQUATION MAC8 proof 1 %%%%%%%%%%%%%%%%%%%%%%%%%%%
\begin{eqnarray}\label{eqnMAC proof 1}
\begin{array}{l}
\sum\limits_{i = 1}^n {{\mathop{\rm E}\nolimits} \left\{ {\left( {{g_{\left( l \right)i}} + \sum\limits_{k = 1}^K {\gamma_{ \left( k \right)}\frac{{\partial {g_{\left( k \right)i}}}}{{\partial \gamma_{\left( l \right)}}}} } \right)\frac{{\partial \ln {p_{U_{\left( 1 \right)}^n,...,U_{\left( K \right)}^n,{Y^n}}}\left( {U_{\left( 1 \right)}^n,...,U_{\left( K \right)}^n,{Y^n}} \right)}}{{\partial {Y_i}}}} \right\}} \\
 = \sum\limits_{i = 1}^n {{\mathop{\rm E}\nolimits} \left\{ {\left( {{g_{\left( l \right)i}} + \sum\limits_{k = 1}^K {\gamma_{\left( k \right)}\frac{{\partial {g_{\left( k \right)i}}}}{{\partial \gamma_{\left( l \right)}}}} } \right)\frac{{{{\partial {p_{{Y^n}|U_{\left( 1 \right)}^n,...,U_{\left( K \right)}^n}}\left( {{Y^n}|U_{\left( 1 \right)}^n,...,U_{\left( K \right)}^n} \right)} \mathord{\left/
 {\vphantom {{\partial {p_{{Y^n}|U_{\left( 1 \right)}^n,...,U_{\left( K \right)}^n}}\left( {{Y^n}|U_{\left( 1 \right)}^n,...,U_{\left( K \right)}^n} \right)} {\partial {Y_i}}}} \right.
 \kern-\nulldelimiterspace} {\partial {Y_i}}}}}{{{p_{{Y^n}|U_{\left( 1 \right)}^n,...,U_{\left( K \right)}^n}}\left( {{Y^n}|U_{\left( 1 \right)}^n,...,U_{\left( K \right)}^n} \right)}}} \right\}} \\
 =  - \sum\limits_{i = 1}^n {{\mathop{\rm E}\nolimits} \left\{ {\left( {{g_{\left( l \right)i}} + \sum\limits_{k = 1}^K {\gamma_{\left( k \right)}\frac{{\partial {g_{\left( k \right)i}}}}{{\partial \gamma_{\left( l \right)}}}} } \right)\left( {{Y_i} - \sum\limits_{k = 1}^K {\gamma_{\left( k \right)}{g_{\left( k \right)i}}} } \right)} \right\}} 
\end{array}
\end{eqnarray}
The second term can be expanded as,
%%%%%%%%%%%%%%%%%%%%%%%%%%%%%%%%%%%%%%%%%%%%%%%%% EQUATION MAC8 proof 2 %%%%%%%%%%%%%%%%%%%%%%%%%%%
\begin{eqnarray}\label{eqnMAC proof 2}
\begin{array}{l}
\sum\limits_{i = 1}^n {{\mathop{\rm E}\nolimits} \left\{ {\left( {{g_{\left( l \right)i}} + \sum\limits_{k = 1}^K {\gamma_{ \left( k \right)}\frac{{\partial {g_{\left( k \right)i}}}}{{\partial \gamma_{ \left( l \right)}}}} } \right)\frac{{\partial \ln {p_{{Y^n}}}\left( {{Y^n}} \right)}}{{\partial {Y_i}}}} \right\}} \\
 = \sum\limits_{i = 1}^n {{\mathop{\rm E}\nolimits} \left\{ {\left( {{g_{\left( l \right)i}} + \sum\limits_{k = 1}^K {\gamma_{\left( k \right)}\frac{{\partial {g_{\left( k \right)i}}}}{{\partial \gamma_{\left( l \right)}}}} } \right)\frac{{{{\partial {p_{{Y^n}}}\left( {{Y^n}} \right)} \mathord{\left/
 {\vphantom {{\partial {p_{{Y^n}}}\left( {{Y^n}} \right)} {\partial {Y_i}}}} \right.
 \kern-\nulldelimiterspace} {\partial {Y_i}}}}}{{{p_{{Y^n}}}\left( {{Y^n}} \right)}}} \right\}} \\
 = \sum\limits_{i = 1}^n {\int { \cdots \int {{p_{U_{\left( 1 \right)}^n,...,U_{\left( K \right)}^n|{Y^n}}}\left( {u_{\left( 1 \right)}^n,...,u_{\left( K \right)}^n|{y^n}} \right)\left( {{g_{\left( l \right)i}} + \sum\limits_{k = 1}^K {\gamma_{\left( k \right)}\frac{{\partial {g_{\left( k \right)i}}}}{{\partial \gamma_{\left( l \right)}}}} } \right){a_1}du_{\left( 1 \right)}^n \cdots du_{\left( k \right)}^nd{y^n}} } } \\
 = \sum\limits_{i = 1}^n {\int { \cdots \int {{p_{U_{\left( 1 \right)}^n,...,U_{\left( K \right)}^n,{Y^n}}}\left( {u_{\left( 1 \right)}^n,...,u_{\left( K \right)}^n,{y^n}} \right)\left( {{g_{\left( l \right)i}} + \sum\limits_{k = 1}^K {\gamma_{\left( k \right)}\frac{{\partial {g_{\left( k \right)i}}}}{{\partial \gamma_{\left( l \right)}}}} } \right){a_2}du_{\left( 1 \right)}^n \cdots du_{\left( k \right)}^nd{y^n}} } } \\
 =  - \sum\limits_{i = 1}^n {{\mathop{\rm E}\nolimits} \left\{ {\left( {{g_{\left( l \right)i}} + \sum\limits_{k = 1}^K {\gamma_{\left( k \right)}\frac{{\partial {g_{\left( k \right)i}}}}{{\partial \gamma_{\left( l \right)}}}} } \right)\left( {{Y_i} - \sum\limits_{k = 1}^K {\gamma_{\left( k \right)}{\mathop{\rm E}\nolimits} \left\{ {{g_{\left( k \right)i}}|{Y^n}} \right\}} } \right)} \right\}} 
\end{array}
\end{eqnarray}
where, 
%%%%%%% EQUATION MAC8 proof 2_1 %%%%%%%%%%%%%%%%%%%%%%%
\begin{eqnarray}\label{eqnMAC proof 2_1}
{a_1} = \int { \cdots \int {\frac{{{p_{{Y^n}|U_{\left( 1 \right)}^n,...,U_{\left( K \right)}^n}}\left( {{y^n}|u_{\left( 1 \right)}^n,...,u_{\left( K \right)}^n} \right)}}{{\partial {Y_i}}}{p_{U_{\left( 1 \right)}^n,...,U_{\left( K \right)}^n}}\left( {u_{\left( 1 \right)}^n,...,u_{\left( K \right)}^n} \right)du_{\left( 1 \right)}^n \cdots du_{\left( k \right)}^n} } ,
\end{eqnarray}
and
%%%%%%% EQUATION MAC8 proof 2_2 %%%%%%%%%%%%%%%%%%%%%%%
\begin{eqnarray}\label{eqnMAC proof 2_2}
{a_2} = \int { \cdots \int {{p_{U_{\left( 1 \right)}^n,...,U_{\left( K \right)}^n|{Y^n}}}\left( {u_{\left( 1 \right)}^n,...,u_{\left( K \right)}^n|{y^n}} \right)\left( {{Y_i} - \sum\limits_{k = 1}^K {\gamma_{ \left( k \right)}{g_{\left( k \right)i}}} } \right)du_{\left( 1 \right)}^n \cdots du_{\left( k \right)}^n} }. 
\end{eqnarray}

Substituting~\eqref{eqnMAC proof 1} and~\eqref{eqnMAC proof 2} in~\eqref{eqnMAC proof}, we get the desired result as,
%%%%%%%%%%%%%%%%%%%%%%%%%%%%%%%%%%%%%%%%%%%%%%%%% EQUATION MAC8 proof 3 %%%%%%%%%%%%%%%%%%%%%%%%%%%
\begin{align}\label{eqnMAC proof 3}
\frac{{\partial I\left( {U_{\left( 1 \right)}^n,...,U_{\left( K \right)}^n;{Y^n}} \right)}}{{\partial {\gamma _{\left( l \right)}}}} &= \sum\limits_{i = 1}^n {{\mathop{\rm E}\nolimits} \left\{ {\left( {{g_{\left( l \right)i}} + \sum\limits_{k = 1}^K {{\gamma _{\left( k \right)}}\frac{{\partial {g_{\left( k \right)i}}}}{{\partial {\gamma _{\left( l \right)}}}}} } \right)\left( {{Y_i} - \sum\limits_{k = 1}^K {{\gamma _{\left( k \right)}}{\mathop{\rm E}\nolimits} \left\{ {{g_{\left( k \right)i}}|{Y^n}} \right\}} } \right)} \right\}} \\ \notag
&\,\,\,\,\, - {\mathop{\rm E}\nolimits} \left\{ {\left( {{g_{\left( l \right)i}} + \sum\limits_{k = 1}^K {{\gamma _{\left( k \right)}}\frac{{\partial {g_{\left( k \right)i}}}}{{\partial {\gamma _{\left( l \right)}}}}} } \right)\left( {{Y_i} - \sum\limits_{k = 1}^K {{\gamma _{\left( k \right)}}{g_{\left( k \right)i}}} } \right)} \right\}\\ \notag
 &= {\gamma _{\left( l \right)}}\sum\limits_{i = 1}^n {{\mathop{\rm E}\nolimits} \left\{ {{{\left( {{g_{\left( l \right)i}} - {\mathop{\rm E}\nolimits} \left\{ {{g_{\left( l \right)i}}|{Y^n}} \right\}} \right)}^2}} \right\}} \\ \notag
&\,\,\,\,\, + \sum\limits_{k = 1,k \ne l}^K {{\gamma _{\left( k \right)}}\left( {\sum\limits_{i = 1}^n {{\mathop{\rm E}\nolimits} \left\{ {\left( {{g_{\left( l \right)i}} - {\mathop{\rm E}\nolimits} \left\{ {{g_{\left( l \right)i}}|{Y^n}} \right\}} \right)\left( {{g_{\left( k \right)i}} - {\mathop{\rm E}\nolimits} \left\{ {{g_{\left( k \right)i}}|{Y^n}} \right\}} \right)} \right\}} } \right)} \\ \notag
 &\,\,\,\,\,+ \sum\limits_{i = 1}^n {{\mathop{\rm E}\nolimits} \left\{ {\left( {\sum\limits_{k = 1}^K {{\gamma _{\left( k \right)}}\left( {{g_{\left( k \right)i}} - {\mathop{\rm E}\nolimits} \left\{ {{g_{\left( k \right)i}}|{Y^n}} \right\}} \right)} } \right)\left( {\sum\limits_{k = 1}^K {{\gamma _{\left( k \right)}}\frac{{\partial {g_{\left( k \right)i}}}}{{\partial {\gamma _{\left( l \right)}}}}} } \right)} \right\}} 
\end{align}
where the second equality follows from a similar argument as the last part of the proof of~\cite[Theorem 3.1.]{EI-MMSE} extended for the case of $K$ users. Hence, the proof of~\eqref{eqnMAC88} follows. Similarly, repeating some arguments like the above arguments for the case of $\gamma _{(l)}  = \gamma$ $\left(l = 1,...,K\right)$ the proof of~\eqref{eqnMAC8} easily follows as well.

\bibliography{Bib}
\bibliographystyle{IEEE}

\end{document}